  \renewcommand{\algorithmicrequire}{\textbf{Input:}}
  \renewcommand{\algorithmicensure}{\textbf{Output:}}
  \renewcommand{\algorithmiccomment}[1]{/* #1 */}
\pgfplotsset{compat=1.17}
\tikzset{cross/.style={cross out, draw=black, minimum size=2*(#1-\pgflinewidth), inner sep=0pt, outer sep=0pt},
%default radius will be 1pt. 
cross/.default={1pt}}
\newcommand{\computationModel}[1]{\ensuremath{{\mathsf{#1}}}\xspace}
\newcommand{\seqSys}{\computationModel{P}}
\newcommand{\reactSys}{\computationModel{S}}
\newcommand{\boolSys}{\ensuremath{\hat{\reactSys}}\xspace}
\newcommand{\apSys}{\computationModel{C}}
\newcommand{\ltsSys}{\computationModel{L}}
\newcommand{\rValSys}{\computationModel{M}}
\newcommand{\SysSTL}{\rValSys}
\newcommand{\SysU}{\rValSys}
\newcommand{\mixedIOSys}{\ltsSys}
\newcommand{\tracelang}{\mixedIOSys}
\newcommand{\cleanFor}{w.r.t.\ }
\newcommand{\robustlyCleanNDet}{robustly clean\xspace}
\newcommand{\robustCleannessNDetPar}{robust cleanness\xspace}
\newcommand{\robustCleannessNDet}{robust cleanness\xspace}
\newcommand{\RobustCleannessNDet}{Robust cleanness\xspace}
\newcommand{\robustlyLowCleanNDetPar}{l\nobreakdash-rob\-ustly clean\xspace}
\newcommand{\robustlyLowCleanNDet}{l\nobreakdash-rob\-ustly clean\xspace}
\newcommand{\robustLowCleannessNDetPar}{l-robust cleanness\xspace}
\newcommand{\robustLowCleannessNDet}{l-robust cleanness\xspace}
\newcommand{\robustlyUpCleanNDetPar}{u\nobreakdash-rob\-ustly clean\xspace}
\newcommand{\robustlyUpCleanNDet}{u\nobreakdash-rob\-ustly clean\xspace}
\newcommand{\robustUpCleannessNDetPar}{u-robust cleanness\xspace}
\newcommand{\robustUpCleannessNDet}{u-robust cleanness\xspace}
\newcommand{\fCleanNDet}{func-clean\xspace}
\newcommand{\fCleannessNDetPar}{func-cleanness\xspace}
\newcommand{\fCleannessNDet}{func-cleanness\xspace}
\newcommand{\fLowCleanNDetPar}{l-func-clean\xspace}
\newcommand{\fLowCleanNDet}{l-func-clean\xspace}
\newcommand{\fLowCleannessNDet}{l-func-cleanness\xspace}
\newcommand{\fUpCleanNDetPar}{u-func-clean\xspace}
\newcommand{\fUpCleanNDet}{u-func-clean\xspace}
\newcommand{\fUpCleannessNDet}{u-func-cleanness\xspace}
\newcommand{\traceIntegral}{trace integral\xspace}
\newcommand{\NOx}{\ensuremath{\mathrm{NO}_x}\xspace}
\newcommand{\drivingCycle}[1]{\textrm{#1}\xspace}
\newcommand{\NEDC}{\drivingCycle{NEDC}}
\newcommand{\WLTC}{\drivingCycle{WLTC}}
\newcommand{\WLTP}{\drivingCycle{WLTP}}
\newcommand{\RDE}{\drivingCycle{RDE}}
\newcommand{\powerSet}[1]{\ensuremath{2^{{#1}}}}
\newcommand{\quiet}{\ensuremath{\delta}\xspace}
\newcommand{\quiescence}{\quiet}
\newcommand{\Inputs}{{\mathsf{In}}}
\newcommand{\Outputs}{{\mathsf{Out}}}
\newcommand{\Norm}{{\mathsf{StdIn}}}
\newcommand{\Std}{\computationModel{Std}}
\newcommand{\StdSet}{\Std}
\newcommand{\StdIn}{\Norm}
\newcommand{\dIn}{\ensuremath{d_\Inputs}\xspace}
\newcommand{\dOut}{\ensuremath{d_\Outputs}\xspace}
\newcommand{\mapTrace}[2]{#1 \ensuremath{{\downarrow_{#2}}}}
\newcommand{\mapInp}[1]{#1 \ensuremath{{\downarrow_\inp}}}
\newcommand{\mapOut}[1]{#1 \ensuremath{{\downarrow_\outp}}}
\newcommand{\mapAPi}[1]{#1 \ensuremath{{\downarrow_{\ap_\inp}}}}
\newcommand{\dInTL}{\dIn}
\newcommand{\dOutTL}{\dOut}
\newcommand{\dInMIO}{\dIn}
\newcommand{\dOutMIO}{\dOut}
\newcommand{\Contract}{\ensuremath{\mathcal{C}}\xspace}
\newcommand{\inp}{\mathsf{i}}
\newcommand{\outp}{\mathsf{o}}
\newcommand{\inpbound}{\kappa_\inp}
\newcommand{\outpbound}{\kappa_\outp}
\newcommand{\NoInp}{\ensuremath{\text{--}_\inp}}
\newcommand{\NoOutp}{\ensuremath{\text{--}_\outp}} 
\newcommand{\hrwoman}{\text{Unica}\xspace}
\newcommand{\john}{\text{John}\xspace}
\newcommand{\synthiaA}{\text{Synthia}\xspace}
\newcommand{\synthiaB}{\text{Synclair}\xspace}
\newcommand{\alexa}{\text{Alexa}\xspace}
\newcommand{\syntbad}{\text{Syntbad}\xspace}
\newcommand{\evgenji}{\text{Eugene}\xspace}
\newcommand{\syna}{\text{Syna}\xspace}
\newcommand{\numberofapplicants}{4096\xspace}
\newcommand{\numberofapplicantsminusthree}{4093\xspace}
\newcommand{\eduMark}{\ensuremath{\mathsf{ed}}}
\newcommand{\expMark}{\ensuremath{\mathsf{ex}}}
\newcommand{\personMark}{\ensuremath{\mathsf{pe}}}
\newcommand{\intelliMark}{\ensuremath{\mathsf{in}}}
\newcommand{\skillMark}{\ensuremath{\mathsf{sk}}}
\newcommand{\fairnessscore}{\ensuremath{\xi }}
\newcommand{\markSet}{\ensuremath{\mathcal{M}}}
\newcommand{\PS}{{\mathsf{PS}}}
\newcommand{\rob}{\fairnessscore}
\newcommand{\fWrapper}{\ensuremath{\mathsf{FairnessAwareSystem}}\xspace}
\newcommand{\fMinimiser}{\ensuremath{\mathsf{FairnessMonitor}}\xspace}
\newcommand{\robmin}{\ensuremath{\rob\text{-}\mathsf{min}}\xspace}
\newcommand{\aisystem}{system\xspace}
\newcommand{\Aisystem}{System\xspace}
\newcommand{\fContract}{\ensuremath{\mathcal{F}}\xspace}
\newcommand{\fairCon}{fairness contract\xspace}
\newcommand{\FairCon}{Fairness contract\xspace}
\newcommand{\ffairC}{func-fair\xspace}
\newcommand{\ffairnessC}{func-fairness\xspace}
\newcommand{\FfairnessC}{Func-fairness\xspace}
\newcommand{\Is}{\ensuremath{{\mathcal{I}}}}
\newcommand{\synIndex}{\mathsf{s}}
\newcommand{\synInp}{\ensuremath{\inp_\synIndex}}
\newcommand{\realIndex}{\mathsf{a}}
\newcommand{\realInp}{{\ensuremath{\inp_\realIndex}}}
\newcommand{\NOT}{\textbf{not}\xspace}
\newcommand{\stdcnt}{\ensuremath{c}}
\newcommand{\predictor}{\ensuremath{\mathcal{P}}}
\newcommand{\recording}{\ensuremath{\mathcal{D}}\xspace}
\newcommand{\w}{{w}}
\newcommand{\Reals}{\ensuremath{\mathbb{R}}}
\newcommand{\N}{\ensuremath{\mathbb{N}}}
\newcommand{\minimise}{\mathsf{minimise}}
\newcommand{\LolaDrives}{\ensuremath{{\mathsf{LolaDrives}}}\xspace}
\newcommand{\RR}{\ensuremath{\mathbb{R}}}
\newcommand{\RRpos}{\ensuremath{\RR_{\geq 0}}}
\newcommand{\RRposinf}{\ensuremath{\RRinf_{\geq 0}}}
\newcommand{\RRinf}{\ensuremath{\overline{\RR}}}
\newcommand{\NN}{\ensuremath{\mathbb{N}}}
\newcommand{\timeSet}{\ensuremath{\mathcal{T}}\xspace}
\newcommand{\varTrace}{\ensuremath{t}}
\newcommand{\varTime}{\ensuremath{t}}
\newcommand{\lImp}{\Rightarrow}
\newcommand{\lIff}{\Leftrightarrow}
\newcommand{\limp}{\rightarrow}
\newcommand{\true}{\textsf{true}}
\renewcommand{\emptyset}{\varnothing}
\newcommand{\set}[1]{\ensuremath{\{ #1 \}}}
\newcommand{\abs}[1]{\lvert {#1} \rvert}
\newcommand{\tr}{\varTrace}
\newcommand{\ap}{\mathsf{AP}}
\newcommand{\X}{\mathop{\mathsf{X}}}
\newcommand{\F}{\LTLfinally}
\newcommand{\G}{\LTLglobally}
\newcommand{\U}{\LTLuntil} 
\newcommand{\W}{\LTLweakuntil}
\newcommand{\EEE}[1]{\mathop{\exists{#1}.}}
\newcommand{\E}{\EEE}
\newcommand{\AAA}[1]{\mathop{\forall{#1}.}}
\newcommand{\A}{\AAA}
\newcommand{\RETURN}{\State \textbf{return }}
\newif\ifhideplots
\newcommand{\noop}[1]{}
\theoremstyle{thmstyleone}%
\newtheorem{theorem}{Theorem}%  meant for continuous numbers
\newtheorem{proposition}[theorem]{Proposition}% 
\newtheorem{lemma}[theorem]{Lemma}% 
\theoremstyle{thmstyletwo}%
\newtheorem{example}{Example}%
\theoremstyle{thmstylethree}%
\newtheorem{definition}{Definition}%
\newcounter{auxFootnote}
\begin{document}
\def\qedsymbol{\ensuremath{\opensquare}} % <- Need this, because we must prevent package 'program' from loading; overwrites definition in sn-jnl.cls

\title{Software Doping Analysis for~Human~Oversight}

%%=============================================================%%
%% Prefix	-> \pfx{Dr}
%% GivenName	-> \fnm{Joergen W.}
%% Particle	-> \spfx{van der} -> surname prefix
%% FamilyName	-> \sur{Ploeg}
%% Suffix	-> \sfx{IV}
%% NatureName	-> \tanm{Poet Laureate} -> Title after name
%% Degrees	-> \dgr{MSc, PhD}
%% \author*[1,2]{\pfx{Dr} \fnm{Joergen W.} \spfx{van der} \sur{Ploeg} \sfx{IV} \tanm{Poet Laureate} 
%%                 \dgr{MSc, PhD}}\email{iauthor@gmail.com}
%%=============================================================%%

\author*[1]{\fnm{Sebastian} \sur{Biewer}}\email{biewer@depend.uni-saarland.de}
\author*[1,2,3]{\fnm{Kevin} \sur{Baum}}\email{kevin.baum@uni-saarland.de}
\author*[1]{\fnm{Sarah} \sur{Sterz}}\email{sterz@depend.uni-saarland.de}
\author[1]{\fnm{Holger} \sur{Hermanns}}\email{hermanns@cs.uni-saarland.de}

%\equalcont{These authors contributed equally to this work.}

\author[4]{\fnm{Sven} \sur{Hetmank}}\email{sven.hetmank@tu-dresden.de}

\author[5]{\fnm{Markus} \sur{Langer}}\email{markus.langer@uni-marburg.de}

\author[4]{\fnm{Anne} \sur{Lauber-Rönsberg}}\email{anne.lauber@tu-dresden.de}
\author[4]{\fnm{Franz} \sur{Lehr}}\nomail%\email{franz.lehr@tu-dresden.de}

\affil[1]{\orgdiv{Department of Computer Science}, \orgname{Saarland University}, \orgaddress{\street{Saarland Informatics Campus}, \city{Saarbrücken}, \postcode{66123}, \country{Germany}}}

\affil[2]{\orgdiv{Institute of Philosophy}, \orgname{Saarland University}, \orgaddress{\street{Campus}, \city{Saarbrücken}, \postcode{66123}, \country{Germany}}}

\affil[3]{\orgdiv{Neuro-mechanistic Modeling}, \orgname{DFKI}, \orgaddress{\street{Stuhlsatzenhausweg 3}, \city{Saarbrücken}, \postcode{66123}, \country{Germany}}}

\affil[4]{\orgdiv{IRGET}, \orgname{TU Dresden}, \orgaddress{\street{Bergstraße 53}, \city{Dresden}, \postcode{01062}, \country{Germany}}}

\affil[5]{\orgdiv{Department of Psychology}, \orgname{University of Marburg}, \orgaddress{\street{Gutenbergstraße 18}, \city{Marburg}, \postcode{35039}, \country{Germany}}}

%%==================================%%
%% sample for unstructured abstract %%
%%==================================%%

\abstract{
  This article introduces a framework that is meant to assist in mitigating societal risks that software can pose. Concretely, this encompasses facets of software doping as well as unfairness and discrimination in high-risk decision-making systems.  The term \emph{software doping} refers to software that contains surreptitiously added functionality that is against the interest of the user.
  A prominent example of software doping are the tampered emission cleaning systems that were found in millions of cars around the world when the diesel emissions scandal surfaced.

  The first part of this article combines the formal foundations of
  software doping analysis with established probabilistic
  falsification techniques to arrive at a black-box analysis technique
  for identifying undesired effects of software. We apply this
  technique to emission cleaning systems in diesel cars but also
  to high-risk systems that evaluate humans in a possibly
  unfair or discriminating way.
  We demonstrate how our approach can assist humans-in-the-loop to make better informed and more responsible decisions. This is to promote effective human oversight, which will be a central requirement enforced by the European Union's upcoming AI Act~\cite{eu-0106-2021,ai-act-amendments}.
  We complement our technical contribution with a juridically, philosophically, and psychologically informed perspective on the potential problems caused by such systems.
}

\keywords{software doping, artificial intelligence, algorithmic fairness, probabilistic falsification, adequate trust, human oversight}

\newcommand{\limitsAndChallenges}{Limitations \& Challenges}

%%\pacs[JEL Classification]{D8, H51}

%%\pacs[MSC Classification]{35A01, 65L10, 65L12, 65L20, 65L70}

\maketitle

\section{Introduction}\label{sec:intro}

Software is the main driver of innovation of our times. Software-defined systems are permeating our  communication, perception, and storage technology as well as our personal interactions with technical systems at an unprecedented pace. \emph{``Software-defined everything''} is among the hottest buzzwords in IT today~\cite{wired,SDE}.

At the same time, we are doomed to trust these systems, despite being unable to inspect or look inside the software we are facing: 
The owners of the physical hull of  \enquote*{everything} are typically not the ones owning the software defining \enquote*{everything}, nor will they have the right to look at what and how \enquote*{everything} is defined.
This is because commercial software typically is protected by intellectual property rights of the software manufacturer. This prohibits any attempt to disassemble the software or to reconstruct its inner working, albeit it is the very software that is forecasted to be defining \enquote*{everything}. The use of machine-learnt software components amplifies the problem considerably by adding opacity of its own kind. Since commercial interests of the software manufacturers seldomly are aligned with the interest of end users, the promise of \enquote*{software-defined everything} might well become a dystopia from the perspective of individual digital sovereignty.  
In this article, we address two of the most pressing incarnations of problematic software behaviour. 

\paragraph{Diesel emissions scandal}
A massive example of software-defined collective damage is the diesel emissions scandal.
Over a period of more than 10 years, millions of diesel-powered cars have been equipped with illegal software that altogether polluted the environment for the sake of commercial advantages of the car manufacturers. At its core, this was made possible by the fact that only a single, precisely defined test setup was put in place for  checking conformance with exhaust emissions regulations. This made it a trivial software engineering task to identify the test particularities and to turn off emission cleaning outside these particular conditions. This is an archetypal instance of software doping. 

Software doping can be formally characterised as a violation of a \emph{cleanness} property of a program~\cite{DBLP:conf/isola/BartheDFH16,DBLP:conf/esop/DArgenioBBFH17}.
A detailled and comparative account of meaningful cleanness definitions related to software doping is avaialable~\cite[Chapter 3]{diss:Biewer}. 
One cleanness notion that has proven  suitable to detect diesel emissions doping is \emph{robust cleanness}~\cite{DBLP:journals/tomacs/BiewerDH21,diss:Biewer}.
It is based on the assumption that there is some well-defined and agreed standard input/output behaviour of the system which the definition extends to the vicinity around the inputs and outputs close to the standard behaviour. 
The precise specification of ``vicinity'' and of ``standard behaviour'' is assumed to be part of a \emph{contract} between software manufacturer and user.  
That contract entails the standard behaviour, distance functions for input and output values, and distance thresholds to define the input and output vicinity, respectively. 
With this, a system behaviour is considered clean, if its output (is or) stays in the output vicinity of the standard, unless the input (is or) moves outside the standard's input vicinity.

\begin{example}
  \label{ex:contract}
    Every car model that is to enter the market in the European Union (and other countries) must be compliant with local regulations. As part of this homologation process, 
    common to all of these regulations is the need for executing a test  under precisely defined lab conditions, carried out on a chassis dynamometer.
    In this, the car has to follow a speed profile, which is called \emph{test cycle} in regulations.
    At the time when the diesel scandal surfaced, the \emph{New European Driving Cycle} (\NEDC)~\cite{nedc} was the single test cycle used in the European Union.
    It has by now been replaced by the \emph{Worldwide harmonized Light vehicles Test Cycle} (\WLTC)~\cite{LEX:32017R1151} in many countries.
    We refer to previous work for more details~\cite{DBLP:journals/tomacs/BiewerDH21,C3OTHER:STTT-toappear,diss:Biewer}.
    From a perspective of fraud prevention, having only a single test cycle is a major weakness of the homologation procedure.
    \RobustCleannessNDet can overcome this problem.
    It admits the consideration of driving profiles that stay in the bounded vicinity of one of several standardised test cycle (i.e., \NEDC as well as \WLTC), while enforcing bounds on the deviations regarding exhaust emission.
\end{example}

\paragraph{Discrimination mitigation}
Another set of exemplary scenarios we consider in this article are \emph{high-risk} AI systems, systems empowered by AI technology whose functioning may introduce risks to health, safety, or fundamental rights of human individuals. 
The European Union is currently developing the \emph{AI Act}~\cite{eu-0106-2021,ai-act-amendments} that sets out to mitigate many of the risks that such systems pose. 
Application areas of concern include credit approval (\cite{6677}), decisions on visa applications (\cite{visa}), admissions to higher education (\cite{Waters_Miikkulainen_2014,texasuniversity}), screening of individuals in predictive policing (\cite{predpol1}), selection in HR (\cite{jobs, 10.5555/3002861, oraclehr}), juridicial decisions (as with COMPAS \cite{DBLP:journals/bigdata/Chouldechova17,dressel2018accuracy,compaspropubstory,compaspropubanalysis}),  tenant screening (\cite{tenantscreen}), and more.
In many of these areas, there are legitimate interests and valid reasons for using well-understood AI technology, although the risks associated with their use to date is manifold. 

It is widely recognised that discrimination by unfair classification and regression models is one particularly important risk. As a result, a colourful zoo of different operationalisations of unfairness has emerged \cite{wachter2020bias,10.1145/3494672}, which should be seen less as a set of competing approaches and more as mutually complementary  \cite{10.1145/3433949}.
At the same time, a consensus is emerging that human oversight is an important piece of the puzzle for mitigating and minimising societal risks of AI \cite{aihleg,10.3389/frai.2021.737072,aiunesco}.
Accordingly, that principle made it into recent drafts of legislation including the European  AI Act \cite{eu-0106-2021,ai-act-amendments} or certain US state laws \cite{washstatefacerecbill}. 

The generic approach we develop for software-doping analysis turns out to be powerful enough to provide automated assistance for human overseers of high-risk AI systems. Apart from spelling out the necessary refocusing  
we illustrate the challenge that our work helps to overcome by an exemplary, albeit hypothetical  admission system for higher education (inspired by \cite{Waters_Miikkulainen_2014,texasuniversity}).

\begin{example}% 
\label{ex:hrWomanIntro}
A large university assigns scores to applicants aiming to enter their computer science PhD program. The sores are computed using an automated, model-based procedure $\seqSys$ which is based on three data points: the position of the applicant's last graduate institution in an official, subject-specific ranking, the applicant's most recent grade point average (GPA), and their score in a subject-specific standardised test taken as part of the application procedure. The system then automatically computes a score for the candidate based on an estimation of how successful it expects them to be as students.
A dedicated university employee, \hrwoman is in charge of overseeing the individual outcomes of $\seqSys$ and is supposed to detect cases where the output of $\seqSys$ is or appears flawed. The university pays especial attention to fairness in the scoring procedure, so \hrwoman has to watch out to any signs of potential unfairness. \hrwoman is supposed to desk-reject candidates whose scores are below a certain, predefined threshold -- unless she finds problems with $\seqSys$'s scoring. 
Without any additional support,
\hrwoman, as human overseer in the loop, must manually check all cases for signs of unfairness as they are processed.
This can be a tedious, complicated, and error-prone task
and as such constitutes an impediment for the assumed scalability of the automated scoring process to high numbers of applicants. 
Therefore, she at least requires tool support that assists her in detecting when something is off about the scoring of individual applicants. 
 \end{example}

This support can be made real by exploiting the technical contributions of this article, in terms of a runtime monitor that provides automated assistance to the human oversight and itself is based on the probabilistic falsification technique we develop. As we will explain, func-cleanness, a variant of cleanness, is a suitable basis for rolling out runtime monitors for such high-risk systems, that are able to detect and flag discrimination or unfair treatment of humans.

\paragraph{} 
The contributions made by this article are threefold.
\begin{description}
  \item[Detecting software doping using probabilistic falsification.] The paper starts off by developing the theory of \robustCleannessNDet and \fCleannessNDet. 
  We provide characterisations in the temporal logics HyperSTL and STL, that are then used for an adaptation of existing probabilistic falsification techniques~\cite{DBLP:journals/tecs/AbbasFSIG13,DBLP:journals/tcs/FainekosP09}. Altogether, this reduces the problem of software doping detection to the problem of falsifying the logical characterisation of the respective cleanness definition. 
  \item[Falsification-based test input generation.]
  Recent work~\cite{DBLP:journals/tomacs/BiewerDH21} proposes a formal framework for \robustCleannessNDet testing, with the ambition of making it usable in practice, namely for emissions tests conducted with a real diesel car on a chassis dynamometer.
  However, that approach leaves open how to perform test input selection in a meaningful manner.
  The probabilistic falsification technique presented in this article attacks this shortcoming. It supports the testing procedure by guiding it towards test inputs that make the \robustCleannessNDet tests likely to fail.
  \item[Promoting effective human oversight.]
   We discuss and demonstrate how the technical contributions of this paper contribute to  
  effective human oversight of high-risk systems, as required by the current proposal of the AI act.
  The hypothetical university admission scenario introduced above will serve as a demonstrator for shedding light on the applicability of our approach as well as the the principles behind it. On a technical level, we provide a runtime monitor for individual fairness based on probabilistic falsification of \fCleannessNDet. On a conceptual level, we consider it important to clarify which duties come with the usage of such a system; from a \textit{legal} perspective, particularly considering the AI Act, substantiated by considering the \textit{ethical} dimension from a philosophical perspective, and from a \textit{psychological} perspective, particularly deliberating on how the overseeing can become \emph{effective}.
\end{description}

This paper is based on a conference publication~\cite{DBLP:conf/fase/BiewerH22}. Relative to that paper, the development of the theory here is more complete
and now includes temporal logic characterisations for \fCleannessNDet. 
On the conceptual side, this article adds a principled analysis of the applicability of \fCleannessNDet to effective human oversight, spelled out in the setting of admission to higher education.
We live up to the societal complexity of this new example and provide an interdisciplinary situation analysis and an interdisciplinary assessment of our proposed solution.
Accordingly, although the technical realisation is based on the probabilistic falsification approach outlined in this article, our solution is substantially more thoughtful than a naive instantiation of the falsification framework.

This article is structured as follows.
\Cref{sec:background} provides the preliminaries for the contributions in this article.
\Cref{sec:logicalCharacterisation} develops the theoretical foundations necessary to use the concept of probabilistic falsification with \robustCleannessNDet and \fCleannessNDet.
\Cref{sec:dieselSupervision} demonstrates how the probabilistic falsification approach can be combined with the previously proposed testing approach~\cite{DBLP:journals/tomacs/BiewerDH21} for \robustCleannessNDet, with a focus on tampered emission cleaning systems of diesel cars.
\Cref{sec:fairness} develops the technical realisation of a fairness monitor based on \fCleannessNDet for high-risk systems.
\Cref{sec:assessment} evaluates the fairness monitor from the perspective of the disciplines philosophy, psychology, and law.
Finally, \Cref{sec:conclusion} summarises the contributions of this article and discusses limitations of our approaches. 
The appendix of this article contains additional technical details, proofs, and further philosophical and juridical explanations.

\section{Background}\label{sec:background}

\subsection{Software Doping}\label{sec:background:doping}
After early informal characterisations of \emph{software doping}~\cite{Baum16:isola,DBLP:conf/isola/BartheDFH16},
D'Argenio et al.~\cite{DBLP:conf/esop/DArgenioBBFH17} propose a collection of formal definitions that specify when a software is \emph{clean}.
The authors call a software \emph{doped} (w.r.t. a cleanness definition) whenever it does not satisfy such cleanness definition.
We focus on \emph{\robustCleannessNDet} and \emph{\fCleannessNDet} in this article~\cite{DBLP:conf/esop/DArgenioBBFH17}.

We define by $\RRpos \coloneqq \set{x\in\RR \mid x \geq 0}$ the set of non-negative real numbers, by $\RRinf \coloneqq \RR \cup \set{-\infty, \infty}$ the set of \emph{extended reals}~\cite{rockafellar2009variational}, and by $\RRposinf \coloneqq \RRpos \cup \set{\infty}$ the set of the non-negative extended real numbers.
We say that a
function $d: X \times X \to \RRposinf$ is a \emph{distance function} if and only if it satisfies $d(x,x)=0$ and $d(x,y)=d(y,x)$ %and $d(x,y)\leq d(x,z)+d(z,y)$ 
for all $x$, $y \in X$. 
We let $\sigma[k]$ denote the $k$th literal of the finite or infinite word $\sigma$. 

\paragraph{Reactive Execution Model}
We can view a nondeterministic reactive program as a function $\reactSys:\Inputs^\omega\to 2^{(\Outputs^\omega)}$ perpetually mapping  inputs $\Inputs$ to sets of outputs $\Outputs$~\cite{DBLP:conf/esop/DArgenioBBFH17}. 
To formally model contracts that specify the concrete configuration of \robustCleannessNDet or \fCleannessNDet, we denote by $\StdIn \subseteq \Inputs^\omega$ the input space of the system designated to define the standard behaviour, and by $d_\Inputs:(\Inputs\times\Inputs)\to\RRposinf$ and  $d_\Outputs:(\Outputs\times\Outputs)\to\RRposinf$ distance functions on 
inputs, respectively outputs.

For \robustCleannessNDet, we additionally consider two constants $\inpbound, \outpbound \in \RRposinf$.
$\inpbound$ defines the maximum distance that a non-standard input must have to a standard input to be considered in the cleanness evaluation.
For this evaluation, $\outpbound$ defines the maximum distance between two outputs such that they are still considered sufficiently close. 
Intuitively, the contract defines tubes around standard inputs and there outputs.
For example, in \Cref{fig:background:robCleannessIntuition}, $\inp$ is a standard input and
$\dIn$ and $\inpbound$ implicitly define a $2\inpbound$ wide tube around $\inp$.
Every input $\inp'$ that is within this tube will be evaluated on its outputs.
Similarly, $\dOut$ and $\outpbound$ define a tube around each of the outputs of $\inp$.
An output for $\inp'$ that is within this tube satisfies the robust cleanness condition.
Together, the above objects constitute a formal contract $\Contract = \langle \StdIn, d_\Inputs, d_\Outputs, \inpbound, \outpbound \rangle$.
\RobustCleannessNDet is composed of two separate definitions called \robustLowCleannessNDet and \robustUpCleannessNDet.
Assuming a fixed standard behaviour of a system, \robustLowCleannessNDet imposes a lower bound on the non-standard outputs that a system must exhibit, while \robustUpCleannessNDet imposes an upper bound.
Such lower and upper bound considerations are necessary because of the potential nondeterministic behaviour of the system; for deterministic systems the two notions coincide.
We remark that in this article we are using past-forgetful distance functions and the \traceIntegral variants of \robustCleannessNDet and \fCleannessNDet (see Biewer~\cite[Chapter 3]{diss:Biewer} for details).

\begin{definition}\label{def:robclean:react}
  A nondeterministic reactive program $\reactSys:\Inputs^\omega\to 2^{(\Outputs^\omega)}$ is \emph{ \robustlyCleanNDet} w.r.t.\ contract $\Contract = \langle \StdIn, d_\Inputs, d_\Outputs, \inpbound, \outpbound \rangle$ if for every standard input
  $\inp\in\Norm$ and input sequence $\inp'\in\Inputs^\omega$ it is the case that 
  \begin{enumerate}
    \item for every $\outp\in \reactSys(\inp)$, there exists $\outp'\in \reactSys(\inp')$, such
    that for every index $k\in\NN$, if
    $d_\Inputs(\inp[j],\inp'[j])\leq\inpbound$ for all $j\leq k$,
    then it holds that $d_\Outputs(\outp[k],\outp'[k]) \leq \outpbound$,\\
    \phantom{X}\hfill \emph{(\robustLowCleannessNDet)}
    \label{def:robclean:react:L}
    \item for every $\outp'\in \reactSys(\inp')$, there exists $\outp\in \reactSys(\inp)$, such
    that for every index $k\in\NN$, if
    $d_\Inputs(\inp[j],\inp'[j])\leq\inpbound$ for all $j\leq k$,
    then it holds that $d_\Outputs(\outp[k],\outp'[k]) \leq \outpbound$.\\
    \phantom{X}\hfill \emph{(\robustUpCleannessNDet)}
    \label{def:robclean:react:U}
  \end{enumerate}

\end{definition}

We will in the following refer to \Cref{def:robclean:react}.\ref{def:robclean:react:L} for \robustLowCleannessNDet and \Cref{def:robclean:react}.\ref{def:robclean:react:U} for \robustUpCleannessNDet.
Intuitively, \robustLowCleannessNDet enforces that whenever an input $\inp'$ remains within $\inpbound$ vicinity around the standard input $\inp$, then for every standard output $\outp \in \reactSys(\inp)$, there must be a non-standard output $\outp' \in \reactSys(\inp')$ that is in $\outpbound$ proximity of $\outp$.
Referring to \Cref{fig:background:robCleannessIntuition}, every $\inp'$ in the tube around $\inp$ must produce for every standard output $\outp \in \reactSys(\inp)$ at least one output $\outp' \in \reactSys(\inp')$ that resides in the $\outpbound$-tube around $\outp$.
In other words, for non-standard inputs the system must not lose behaviour that it can exhibit for a standard input in $\inpbound$ proximity. 

For \robustUpCleannessNDet the standard and non-standard output switch roles.
It enforces that whenever an input $\inp'$ remains within $\inpbound$ vicinity around the standard input $\inp$, then for every output $\outp' \in \reactSys(\inp')$ the system can exhibit for this non-standard input, there must be a standard output $\outp \in \reactSys(\inp)$ that is in $\outpbound$ proximity of $\outp'$.
Referring to \Cref{fig:background:robCleannessIntuition}, every $\inp'$ in the tube around $\inp$ must only produce outputs $\outp' \in \reactSys(\inp')$ that are in the $\outpbound$-tube of at least one $\outp \in \reactSys(\inp)$.
In other words, for non-standard inputs within $\inpbound$ proximity of a standard input the system must not introduce new behaviour, i.e., it must not exhibit an output that is further than $\outpbound$ away from the set of standard outputs.

\begin{figure}
  \includegraphics[width=\textwidth]{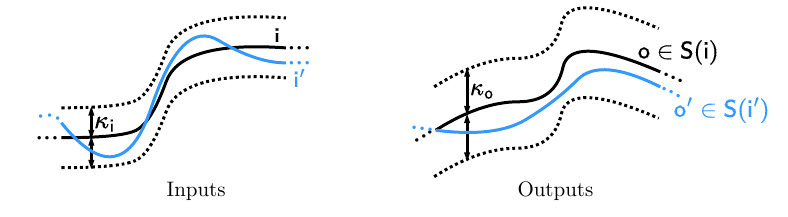}
  \caption{Robust Cleanness Intuition}
  \label{fig:background:robCleannessIntuition}
\end{figure}

A generalisation of \robustCleannessNDet is \fCleannessNDet.
A cleanness contract for \fCleannessNDetPar replaces the constants $\inpbound$ and $\outpbound$ by a function $f: \RRposinf \to \RRposinf$ inducing a dynamic threshold for output distances based on the distance between the inputs producing such outputs.

\begin{definition}\label{def:clean:react:f}
  A nondeterministic reactive system $\reactSys$ is \emph{ \fCleanNDet} \cleanFor contract $\Contract = \langle \Norm, \dIn, \dOut, f \rangle$ if for every standard input
  $\inp\in\Norm$ and input sequence $\inp'\in\Inputs^\omega$ it is the case that
  \begin{enumerate}
    \item for every $\outp\in \reactSys(\inp)$, there exists $\outp'\in
    \reactSys(\inp')$, such that for every index $k\in\NN$, $\dOut(\outp[k],\outp'[k]) \leq
    f(\dIn(\inp[k],\inp'[k]))$,
    \hfill \emph{(\fLowCleannessNDet)}
    \label{def:clean:react:f:L}
    \item for every $\outp'\in \reactSys(\inp')$, there exists $\outp\in
    \reactSys(\inp)$, such that for every index $k \in \NN$, $\dOut(\outp[k],\outp'[k]) \leq
    f(\dIn(\inp[k],\inp'[k]))$.
    \hfill \emph{(\fUpCleannessNDet)}
    \label{def:clean:react:f:U}
  \end{enumerate}
\end{definition}

We will in the following refer to \Cref{def:clean:react:f}.\ref{def:clean:react:f:L} for \fLowCleannessNDet and \Cref{def:clean:react:f}.\ref{def:clean:react:f:U} for \fUpCleannessNDet.

For the fairness monitor in \Cref{sec:fairness} we will use a simpler variant of \fCleannessNDet for deterministic sequential programs.
Since \seqSys is deterministic, the lower and upper bound requirements coincide, yielding the following simplified definition.

\begin{definition}\label{def:clean:seq:f:det}
    A deterministic sequential program $\seqSys$ is \emph{\fCleanNDet} \cleanFor contract $\Contract = \langle \Norm, \dIn, \dOut, f \rangle$ if for every standard input $\inp \in \Norm$ and input $\inp' \in \Inputs$, it holds that 
    $\dOut(\seqSys(\inp),\seqSys(\inp')) \leq
    f(\dIn(\inp,\inp'))$. 
\end{definition}

\paragraph{Mixed-IO System Model}
The reactive execution model above has the strict requirement that for every input, the system produces exactly one output.
Recent work~\cite{DBLP:conf/qest/BiewerDH19,DBLP:journals/tomacs/BiewerDH21} instead considers mixed-IO models, where a program $\tracelang \subseteq (\Inputs \cup \Outputs)^\omega$ is a subset of traces containing both inputs and outputs, but without any restriction on the order or frequency in which inputs and outputs appear in the trace. In particular, they are not required to strictly alternate (but they may, and in this way the reactive execution model can be considered a special case~\cite{diss:Biewer}). 
A particularity of this model is the distinct output symbol $\quiescence$ for quiescence, i.e., the absence of an output.
For example, finite behaviour can be expressed by adding infinitely many $\quiescence$ symbols to a finite trace.

The new system model induces consequences regarding cleanness contracts.
Every mixed-IO trace is projected into an input, respectively output domain.
The set of input symbols contains one additional element $\NoInp$, that indicates that in the respective steps an output was produced, but masking the concrete output.
Similarly, the set of output symbols contains the additional element $\NoOutp$ to mask a concrete input symbol.
\emph{Projection on inputs} $\mapInp{} \colon$ $(\Inputs \cup \Outputs)^\omega \rightarrow
(\Inputs \cup \{\NoInp\})^\omega$ and \emph{projection on outputs}
$\mapOut{} \colon (\Inputs \cup \Outputs)^\omega \rightarrow (\Outputs \cup \{\NoOutp\})^\omega$
are defined for all traces $\sigma \in (\Inputs \cup \Outputs)^\omega$ and $k\in\N$ as follows:
 $\mapInp{\sigma} [k] \coloneqq
  \textbf{ if } \sigma[k] \in \Inputs \textbf{ then } \sigma[k]
  \textbf{ else } \NoInp$ and similarly
 $\mapOut{\sigma} [k] \coloneqq
  \textbf{ if } \sigma[k] \in \Outputs \textbf{ then } \sigma[k]
  \textbf{ else } \NoOutp$.
The distance functions $\dInMIO$ and $\dOutMIO$ apply on input and output symbols or their respective masks, i.e., they are functions
$(\Inputs\cup\{\NoInp\}) \times (\Inputs\cup\{\NoInp\}) \rightarrow \RRposinf$
and, respectively,
$(\Outputs\cup\{\NoOutp\}) \times (\Outputs\cup\{\NoOutp\}) \rightarrow \RRposinf$.
Finally, instead of a set of standard inputs $\Norm$, we evaluate mixed-IO system cleanness w.r.t. to a set of standard behaviour $\Std \subseteq \tracelang$.
Thus, not only inputs, but also outputs can be defined as standard behaviour and for an input, one of its outputs can be considered in $\Std$ while a different output can be excluded from $\Std$.
As a consequence, the set $\Std$ is specific for some mixed-IO system $\tracelang$, because $\Std$ is useful only if $\Std \subseteq \tracelang$.
To emphasise this difference we will call the tuple $\Contract = \langle \StdSet, {d}_\Inputs, {d}_\Outputs, \inpbound, \outpbound \rangle$ (cleanness) \emph{context} (instead of cleanness contract).
\RobustCleannessNDet of mixed-IO systems w.r.t.\ such a context is defined below~\cite{DBLP:journals/tomacs/BiewerDH21}.

\begin{definition}\label{def:robclean:LTS}
  A mixed-IO system $\tracelang \subseteq (\Inputs \cup \Outputs)^\omega$ is \emph{ \robustlyCleanNDet} w.r.t.\ context $\Contract = \langle \StdSet, {d}_\Inputs, {d}_\Outputs, \inpbound, \outpbound \rangle$ if and only if $\StdSet \subseteq \tracelang$ and for all
     $\sigma\in\StdSet$ and $\sigma'\in\tracelang$, 
     \begin{enumerate}
      \item there exists $\sigma''\in\tracelang$ with
      $\mapInp{\sigma'} = \mapInp{\sigma''}$, such that
     for every index ${k\in \NN}$ it holds that whenever 
     ${d}_\Inputs(\mapInp{\sigma[j]},\mapInp{\sigma'[j]})\leq\inpbound$
     for all $j\leq k$, then
       ${d}_\Outputs(\mapOut{\sigma[k]},\mapOut{\sigma''[k]})\leq\outpbound$,\hfill \emph{(\robustLowCleannessNDet)}
       \label{def:robclean:LTS:L}
      \item there exists $\sigma''\in\StdSet$ with
      $\mapInp{\sigma}=\mapInp{\sigma''}$, such that for every
     index $k\in\NN$ it holds that whenever 
     ${d}_\Inputs(\mapInp{\sigma[j]},\mapInp{\sigma'[j]})\leq\inpbound$
     for all $j\leq k$, then
     ${d}_\Outputs(\mapOut{\sigma'[k]},\mapOut{\sigma''[k]})\leq\outpbound$.\hfill \emph{(\robustUpCleannessNDet)}
     \label{def:robclean:LTS:U}
     \end{enumerate}
\end{definition}

We will in the following refer to \Cref{def:robclean:LTS}.\ref{def:robclean:LTS:L} for \robustLowCleannessNDet and \Cref{def:robclean:LTS}.\ref{def:robclean:LTS:U} for \robustUpCleannessNDet.
\Cref{def:robclean:LTS} universally quantifies a standard trace $\sigma$.
For \robustLowCleannessNDet, the universal quantification of $\sigma'$ effectively only quantifies an input sequence; the input projection for the existentially quantified $\sigma''$ must match the projection for $\sigma'$.
The remaining parts of the definition are conceptually identical to their reactive systems counterpart in \Cref{def:robclean:react}.\ref{def:robclean:react:L}.
For \robustUpCleannessNDet, 
the existentially quantified trace $\sigma''$ is obtained from set $\Std$ in contrast to \robustLowCleannessNDet, where $\sigma''$ can be any arbitrary trace of \tracelang.
This is necessary, because \robustUpCleannessNDet is defined w.r.t. a cleanness context; from knowing that $\sigma\in\Std$ is a standard trace and by enforcing that $\mapInp{\sigma} = \mapInp{\sigma''}$ we cannot conclude that also $\sigma''\in\Std$.

\Cref{def:fclean:LTS} shows the definition \fCleannessNDet of mixed-IO systems.

\begin{definition}\label{def:fclean:LTS}
  A mixed-IO system $\tracelang \subseteq (\Inputs \cup \Outputs)^\omega$ is \emph{\fCleanNDet} w.r.t.\ context $\Contract = \langle \StdSet, {d}_\Inputs, {d}_\Outputs, f \rangle$ if and only if $\StdSet \subseteq \tracelang$ and for all $\sigma\in\StdSet$ and $\sigma'\in\tracelang$, 
  \begin{enumerate}
    \item there exists $\sigma''\in\tracelang$ with
    $\mapInp{\sigma'} = \mapInp{\sigma''}$, such that 
    for every index $k\in\NN$, it holds that
    ${d}_\Outputs(\mapOut{\sigma[k]},\mapOut{\sigma''[k]})\leq f({d}_\Inputs(\mapInp{\sigma[k]},\mapInp{\sigma'[k]}))$,
    \hfill \emph{(\fLowCleannessNDet)}
    \label{def:fclean:LTS:L}
    \item there exists $\sigma''\in\StdSet$ with
    $\mapInp{\sigma}=\mapInp{\sigma''}$, such that for every
    index $k\in\NN$, it holds that
    ${d}_\Outputs(\mapOut{\sigma'[k]},\mapOut{\sigma''[k]})\leq f({d}_\Inputs(\mapInp{\sigma[k]},\mapInp{\sigma'[k]}))$.
    \hfill \emph{(\fUpCleannessNDet)}
    \label{def:fclean:LTS:U}
  \end{enumerate}
      
  \end{definition}

  We will in the following refer to \Cref{def:fclean:LTS}.\ref{def:fclean:LTS:L} for \fLowCleannessNDet and \Cref{def:fclean:LTS}.\ref{def:fclean:LTS:U} for \fUpCleannessNDet.

\subsection{Temporal Logics}\label{sec:background:tl}
\subsubsection{HyperLTL}\label{sec:background:tl:hyperltl}
Linear Temporal Logic (LTL)~\cite{DBLP:conf/focs/Pnueli77} is a popular formalism to reason about properties of traces. 
A trace is an infinite word where each literal is a subset of $\ap$, the set of atomic propositions. 
We interpret programs as circuits encoded as sets $\apSys \subseteq(2^{\ap})^\omega$ of such traces. 
LTL provides expressive means to characterise sets of traces, often  called \emph{trace properties}. 
% \mysubsubsec{Temporal Logics for Hyperproperties}
For some set of traces $T$, a trace property defines a subset of $T$ (for which the property holds), whereas a \emph{hyperproperty} defines a \emph{set of} subsets of $T$ (constituting combinations of traces for which the property holds). In this way it  specifies which traces are valid in combination with one another. Many temporal logics have been extended to corresponding hyperlogics supporting the specification of hyperproperties.  

HyperLTL~\cite{ClarksonFKMRS14:post} is such a temporal logic for the specification of hyperproperties of reactive systems.  
It extends LTL with trace quantifiers and trace variables that make it possible to refer to multiple traces within a logical formula.
  A \emph{HyperLTL formula} is defined by the following
grammar, where $\pi$ is drawn from a set $\mathcal{V}$ of \emph{trace
variables} and $a$ from the set $\ap$:
\begin{equation*}
  \begin{array}{c!{\ {::{=}}\ }c!{\ \mid\ }c!{\ \mid\ }c!{\ \mid\ }c!{\ \mid\ }c}
    \psi & \EEE{\pi} \psi & \AAA{\pi} \psi & \multicolumn{3}{l}{\ \ \phi } \\
    \phi & a_\pi & \neg\phi & \phi\land\phi & \ \X\phi \ \, & \phi\U\phi \\
  \end{array}\label{eq:hyperltl:syntax}
\end{equation*}
The quantifiers $\exists$ and $\forall$ quantify existentially and universally, respectively, over the set of 
traces.  For example, the formula
$\AAA{\pi}\EEE{\pi'}\phi$ means that for every trace $\pi$ there exists
another trace $\pi'$ such that $\phi$ holds over the pair of traces.
To account for distinct valuations of atomic propositions across distinct traces, the atomic propositions are indexed with trace variables:  for some atomic proposition $a\in\ap$ and some trace variable $\pi\in \mathcal{V}$, $a_\pi$ states that $a$ holds in the initial position of trace $\pi$.  
The temporal operators and Boolean connectives are interpreted as usual for LTL.
Further operators are derivable:
${\F\phi}\equiv{\true\U\phi}$ enforces $\phi$ to eventually hold in the future,
${\G\phi}\equiv{\neg\F\neg\phi}$ enforces $\phi$ to always hold, and
the weak-until operator  
${\phi\W\phi'}\equiv{{\phi\U\phi'}\lor{\G\phi}}$ 
allows $\phi$ to always hold as an alternative to the obligation for $\phi'$ to eventually hold.

\paragraph{HyperLTL Characterisations of Cleanness}
D'Argenio et al.~\cite{DBLP:conf/esop/DArgenioBBFH17} assume distinct sets of atomic propositions to encode inputs and outputs.
That is, they assume that $\ap = {\ap_\inp\cup\ap_\outp}$ of atomic propositions, where
$\ap_\inp$ and $\ap_\outp$ are the atomic propositions
that define the the input values and, respectively, the output
values.
Thus, in the context of Boolean circuit encodings of programs, we take $\Inputs=2^{\ap_\inp}$ and
$\Outputs=2^{\ap_\outp}$.
We capture the following natural correspondence between reactive programs and Boolean circuits; 
a circuit $\apSys$ can be interpreted as a function $\boolSys:\Inputs^\omega\to 2^{(\Outputs^\omega)}$, where
\begin{equation}
  t\in \apSys \quad \text{ if and only if } \quad
  (\mapTrace{\tr}{\ap_\outp})\ \in\ \boolSys(\mapTrace{\tr}{\ap_\inp}),
  \label{eq:prog:tr-to-fn}
\end{equation}
with $\mapTrace{\tr}{A}$ defined by $(\mapTrace{\tr}{A})[k]=\tr[k]\cap A$ for all
$k\in\NN$.

In the HyperLTL formulas below occur, for convenience, non-atomic propositions.
Their semantics is encoded by atomic propositions and Boolean connectives according to a Boolean encoding of inputs and outputs.
We refer to the original work for the details~\cite[Table 1]{DBLP:conf/esop/DArgenioBBFH17}. 
Further, we assume that there is 
a quantifier-free HyperLTL formula $\Norm_\pi$ that can check whether the trace represented by trace variable $\pi$ is in the set of standard inputs $\Norm \subseteq \Inputs^\omega$.
That is, $\Norm_\pi$ should be defined such that for every trace $\varTrace\in\apSys$ it holds that $\set{\pi \coloneqq \varTrace} \models_\apSys \Norm_\pi$ if and only if $(\mapAPi{\varTrace}) \in \Norm$.

\Cref{prop:mc:hyperltl:robClean} shows HyperLTL formulas for \robustLowCleannessNDetPar and \robustUpCleannessNDetPar, respectively.\footnote{All HyperLTL formulas from D'Argenio et al.~\cite{DBLP:conf/esop/DArgenioBBFH17} are adapted for non-parametrised systems.} 

% ROBUST CLEANNESS (LTL)
\begin{proposition}\label{prop:mc:hyperltl:robClean}%
  Let $\apSys$ be a set of infinite traces over $\powerSet{\ap}$, let $\boolSys$ be the reactive system constructed from $\apSys$ according to \Cref{eq:prog:tr-to-fn}, and let $\Contract = \langle \Norm, \dIn, \dOut, \inpbound, \outpbound \rangle$ be a contract for \robustCleannessNDetPar. 
  Then $\boolSys$ is \robustlyLowCleanNDetPar w.r.t. \Contract 
  if and only if $\apSys$ satisfies the
  HyperLTL formula
  \begin{align*}\small
    & \AAA{\pi_1}\AAA{\pi_2}\EEE{\pi'_2}
    {\Norm_{\pi_1}} \\
    & \qquad\qquad\qquad\qquad \limp \Big( {\G({\inp_{\pi_2}=\inp_{\pi'_2}})}\land 
    \big((\dOutTL(\outp_{\pi_1},\outp_{\pi'_2})\leq\outpbound)\W(\dInTL(\inp_{\pi_1},\inp_{\pi'_2})>\inpbound) \big)\Big), \notag
  \end{align*}
  and $\boolSys$ is \robustlyUpCleanNDetPar w.r.t. \Contract 
  if and only if $\apSys$ satisfies the
  HyperLTL formula
  {\begin{align*}
    &  \AAA{\pi_1}\AAA{\pi_2}\EEE{\pi'_1} 
     {\Norm_{\pi_1}}  \\
    & \qquad\qquad\qquad\qquad \limp \Big({\G({\inp_{\pi_1}=\inp_{\pi'_1}})}\land
    \big((\dOutTL(\outp_{\pi'_1},\outp_{\pi_2})\leq\outpbound)\W(\dInTL(\inp_{\pi'_1},\inp_{\pi_2})>\inpbound)\big)\Big).
  \end{align*}}
\end{proposition}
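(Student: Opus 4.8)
The plan is to establish each of the two biconditionals as a chain of equivalences: unfold the HyperLTL semantics of the stated formula into assertions about inputs and outputs, rewrite trace quantification as input/output quantification via the circuit-to-function correspondence \Cref{eq:prog:tr-to-fn} and the defining property of $\Norm_{\pi_1}$, and then read off the body of \Cref{def:robclean:react} verbatim. I would treat the \robustLowCleannessNDet and the \robustUpCleannessNDet cases in parallel, since they differ only in which trace carries the standard behaviour and in which of the two traces is the existential witness; symmetry of $\dIn$ and $\dOut$ lets me reuse the same inner analysis.

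The technical heart is a small LTL fact about the weak-until subformula. Abbreviating $P_k$ for $\dIn(\inp[k],\inp'[k])\leq\inpbound$ (input within bound at step $k$) and $Q_k$ for $\dOut(\outp[k],\outp'[k])\leq\outpbound$, I claim that $(\dOut(\outp,\outp')\leq\outpbound)\W(\dIn(\inp,\inp')>\inpbound)$ holds along the trace pair if and only if $Q_k$ holds for every $k$ at which $P_j$ holds for all $j\leq k$ — exactly the per-index implication of \Cref{def:robclean:react}. To prove it, let $n_0$ be the first index at which $\neg P$ holds, if any, and unfold ${\phi\W\psi}\equiv{(\phi\U\psi)\lor\G\phi}$. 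If no such $n_0$ exists, the until-disjunct is vacuously false and the formula collapses to $\G(\dOut\leq\outpbound)$, i.e.\ $Q_k$ at every $k$, which is what the definition demands when the input stays in bound forever. If $n_0$ exists, the formula holds iff $Q_k$ holds for all $k<n_0$, and $\{k : \forall j\leq k.\,P_j\}=\{0,\dots,n_0-1\}$; the boundary is the delicate point, since at $k=n_0$ the premise $\forall j\leq k.\,P_j$ already fails (because $j=n_0$ is admitted), so no output obligation is imposed there, matching weak-until, which constrains $Q$ only strictly before $n_0$. This off-by-one bookkeeping is the one step I would spell out fully, as it is where an error would most naturally hide.

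With the lemma available I would translate the quantifier prefix. By \Cref{eq:prog:tr-to-fn} a trace $t\in\apSys$ is interchangeable with the pair $(\mapAPi{t},\mapAPo{t})$ where $\mapAPo{t}\in\boolSys(\mapAPi{t})$, and by its defining property the guard $\Norm_{\pi_1}$ selects precisely the traces whose input lies in $\Norm$. For \robustLowCleannessNDet, $\pi_1$ ranges over all pairs $(\inp,\outp)$ with $\inp\in\Norm$ and $\outp\in\boolSys(\inp)$; the universal $\pi_2$ contributes only its input $\inp'=\mapAPi{\pi_2}$ (its output never appears in the matrix), and the conjunct $\G(\inp_{\pi_2}=\inp_{\pi'_2})$ forces the existential $\pi'_2$ to carry the same input, so choosing $\pi'_2$ is the same as choosing $\outp'\in\boolSys(\inp')$; the matrix then reads ``there exists $\outp'\in\boolSys(\inp')$ such that the weak-until holds'', which by the lemma is the body of \Cref{def:robclean:react}.\ref{def:robclean:react:L}. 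For \robustUpCleannessNDet the dual reading applies: $\pi_1$ contributes only the standard input $\inp$ (its output is unused and is instead pinned by $\G(\inp_{\pi_1}=\inp_{\pi'_1})$ onto the existential $\pi'_1$), $\pi_2$ now ranges over all pairs $(\inp',\outp')$ with $\outp'\in\boolSys(\inp')$, and $\pi'_1$ supplies the existential $\outp\in\boolSys(\inp)$.

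Finally I would reconcile the quantifier order. The HyperLTL prefix reads $\forall\inp\,\forall\outp\,\forall\inp'\,\exists\outp'$ whereas \Cref{def:robclean:react} reads $\forall\inp\,\forall\inp'\,\forall\outp\,\exists\outp'$; the two transposed blocks are both universal, hence commute, and since every step above was an equivalence the biconditional follows (the \robustUpCleannessNDet case is symmetric). The one genuine side condition I would flag explicitly is \emph{totality}: concluding that the universal $\pi_2$ (and the availability of the witnesses $\pi'_1,\pi'_2$) ranges over \emph{all} input sequences requires $\boolSys(\inp')\neq\emptyset$ for every $\inp'$, i.e.\ that the circuit encodes a total reactive function, which I would record as the standing assumption on $\boolSys$ inherited from \Cref{eq:prog:tr-to-fn}. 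Everything apart from the weak-until boundary analysis is routine bookkeeping, so that lemma is the main obstacle.
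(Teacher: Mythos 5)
Your proof is correct and takes essentially the same route the paper itself uses: the paper imports this proposition from D'Argenio et al.\ without reproving it, but its appendix establishes the analogous HyperSTL characterisations (\Cref{prop:hyperstl:robCorrectness}) by exactly your strategy --- a lemma destructuring $\phi\W\psi$ into the per-index implication (\Cref{lemma:testing:stlDerivedOps}, proved by the same case analysis on the first index at which the release condition $\psi$ holds), followed by semantic unfolding of the quantifier prefix and routine rearrangement of adjacent universal quantifiers. Your explicit totality side condition ($\boolSys(\inp')\neq\emptyset$ for every $\inp'$) is a point the paper leaves implicit in the notion of a reactive program, and it is genuinely needed for this reactive-system formulation --- unlike in the mixed-IO setting of \Cref{def:robclean:LTS} --- because \Cref{def:robclean:react} quantifies $\inp'$ over all of $\Inputs^\omega$ whereas the formula's $\pi_2$ ranges only over traces of $\apSys$, so an input with no trace would falsify the definition while remaining invisible to the formula.
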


The first quantifier (for $\pi_1$) in both formulas implicitly quantifies the standard input $\inp$ and 
the second quantifier (for $\pi_2$) implicitly quantifies the second input $\inp'$. 
Due to the potential nondeterminism in the behaviour of the system, the third, existential, quantifier for $\pi_1'$, respectively $\pi_2'$ is necessary.
While the formula for \robustLowCleannessNDetPar has the universal
quantification on the outputs of the program that takes the standard input $\inp$
and the existential quantification on the output for $\inp'$,
the formula for \robustUpCleannessNDetPar works in the other way around.   
Thus, the formulas capture the $\forall\exists$ alternation in  \Cref{def:robclean:react}.
The weak until operator $\W$ has exactly the behaviour necessary to represent the interaction between the distances of inputs and
the distances of outputs.  

The HyperLTL formulas for \fCleannessNDet are given below.
\begin{proposition}\label{prop:mc:hyperltl:f-clean}%
    Let $\apSys$ be a set of infinite traces over $\powerSet{\ap}$, let $\boolSys$ be the reactive system constructed from $\apSys$ according to \Cref{eq:prog:tr-to-fn}, and let $\Contract = \langle  \Norm, \dIn, \dOut, f \rangle$ be a contract for \fCleannessNDetPar.
    Then $\boolSys$ is \fLowCleanNDetPar w.r.t. \Contract 
    if and only if $\apSys$ satisfies the
    HyperLTL formula
    \begin{align*}
      & \AAA{\pi_1}\AAA{\pi_2}\EEE{\pi'_2} {\Norm_{\pi_1}} \limp \left({\G({\inp_{\pi_2}=\inp_{\pi'_2}})}\land{\G\left({\dOutTL(\outp_{\pi_1},\outp_{\pi'_2})}\leq{f(\dInTL(\inp_{\pi_1},\inp_{\pi'_2}))}\right)}\right), 
    \end{align*}
    and 
    $\boolSys$ is \fUpCleanNDetPar w.r.t. \Contract 
    if and only if $\apSys$ satisfies the
    HyperLTL formula
    \begin{align*}
      & \AAA{\pi_1}\AAA{\pi_2}\EEE{\pi'_1} {\Norm_{\pi_1}}  \limp \left({\G({\inp_{\pi_1}=\inp_{\pi'_1}})}\land{\G\left({\dOutTL(\outp_{\pi'_1},\outp_{\pi_2})}\leq{f(\dInTL(\inp_{\pi'_1},\inp_{\pi_2}))}\right)}\right).
    \end{align*}
  \end{proposition}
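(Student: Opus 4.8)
The plan is to prove each of the two biconditionals by unfolding the HyperLTL semantics of the three quantifiers and the temporal matrix, and then reading off the resulting first-order statement as exactly \Cref{def:clean:react:f}. The bridge in both directions is the trace/function correspondence \eqref{eq:prog:tr-to-fn}, which lets us pass freely between a trace $t \in \apSys$ and the input/output pair $(\mapAPi{t}, \mapAPo{t})$ with $\mapAPo{t} \in \boolSys(\mapAPi{t})$; conversely, every such pair arises from some trace of $\apSys$.

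For the \fLowCleannessNDet direction I would first fix the dictionary between quantifiers and the objects in \Cref{def:clean:react:f}.\ref{def:clean:react:f:L}: the universally quantified $\pi_1$ supplies a trace whose input projection is the standard input $\inp$ (the guard $\Norm_{\pi_1}$ being by assumption exactly the condition $\mapAPi{t_1} \in \Norm$) and whose output projection is a witness $\outp \in \boolSys(\inp)$; the universally quantified $\pi_2$ supplies the comparison input $\inp' = \mapAPi{t_2}$; and the existentially quantified $\pi'_2$ supplies the matching output $\outp'$. The conjunct $\G(\inp_{\pi_2} = \inp_{\pi'_2})$ forces $\mapAPi{t'_2} = \inp'$, so by \eqref{eq:prog:tr-to-fn} the output $\mapAPo{t'_2}$ is indeed an output of $\boolSys$ on $\inp'$, and the conjunct $\G(\dOut(\outp_{\pi_1}, \outp_{\pi'_2}) \leq f(\dIn(\inp_{\pi_1}, \inp_{\pi'_2})))$ unfolds, position by position, into the pointwise bound $\dOut(\outp[k], \outp'[k]) \leq f(\dIn(\inp[k], \inp'[k]))$ for all $k$. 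Reading this dictionary from left to right turns ``$\apSys$ satisfies the formula'' into ``$\boolSys$ is \fLowCleanNDet'', and reading it from right to left yields the converse; when the guard $\Norm_{\pi_1}$ fails the implication is vacuous and any witness (e.g.\ $t'_2 := t_2$) will do. The \fUpCleannessNDet case is completely symmetric: there the existential witness $\pi'_1$ is pinned to the standard input via $\G(\inp_{\pi_1} = \inp_{\pi'_1})$ and supplies the standard output $\outp \in \boolSys(\inp)$ that at each position $k$ must stay within $f(\dIn(\inp[k],\inp'[k]))$ of the non-standard output carried by $\pi_2$, matching \Cref{def:clean:react:f}.\ref{def:clean:react:f:U}.

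I expect the purely syntactic translation of the matrix to be routine; the one place demanding care is the alignment of the quantifier ranges. In \Cref{def:clean:react:f} the comparison input $\inp'$ ranges over all of $\Inputs^\omega$, whereas $\pi_2$ in the formula ranges only over inputs realised by some trace of $\apSys$, i.e.\ those with $\boolSys(\inp') \neq \emptyset$. These ranges coincide precisely under the standard input-totality assumption on reactive systems (equivalently, every $\inp' \in \Inputs^\omega$ is the input projection of some trace of $\apSys$), which I would invoke explicitly; without it the formula would silently ignore inputs on which the system produces no output trace. The only other subtlety, shared with \Cref{prop:mc:hyperltl:robClean}, is that the existential trace quantifier does not by itself carry the correct input projection -- this is enforced a posteriori by the input-matching $\G$ conjunct -- so one must check that this conjunct is not in tension with the output-distance conjunct; it is not, since inputs and outputs live in the disjoint proposition sets $\ap_\inp$ and $\ap_\outp$. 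Compared with the weak-until that captures the tube semantics of robust cleanness in \Cref{prop:mc:hyperltl:robClean}, the unconditional pointwise bound of \Cref{def:clean:react:f} is here captured by a plain $\G$, which makes this step strictly simpler.
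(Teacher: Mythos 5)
Your proof is correct and takes essentially the approach the paper itself relies on: the paper states \Cref{prop:mc:hyperltl:f-clean} without proof (it is background imported from D'Argenio et al.), and its appendix proves the analogous HyperSTL characterisations (\Cref{prop:hyperstl:robCorrectness} via \Cref{lemma:hyperstl:translateFormulaIntoLogic}, with the func-cleanness case declared analogous and omitted) by exactly your strategy of unfolding the quantifier and temporal semantics into a first-order statement and matching it against the cleanness definition through the trace/function correspondence of \Cref{eq:prog:tr-to-fn}. Your explicit invocation of input-totality --- every $\inp' \in \Inputs^\omega$ being the input projection of some trace of $\apSys$, so that the range of $\pi_2$ coincides with the range of $\inp'$ in \Cref{def:clean:react:f} --- is a genuine care point that the paper and its source leave implicit in the reactive-system model; flagging it is a refinement, not a deviation.
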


\subsubsection{Signal Temporal Logic}
LTL enables reasoning over traces $\sigma \in {(2^\ap)}^\omega$ 
for which it is necessary to encode values using the atomic propositions in $\ap$.
Each literal in a trace represents a discrete time step of an underlying model.
Thus, $\sigma$ can equivalently be viewed as a function $\N \to 2^\ap$.
One extension of LTL is \emph{Signal Temporal Logic} (STL)~\cite{DBLP:conf/cav/DonzeFM13,DBLP:conf/formats/MalerN04}, which instead is used for reasoning over real-valued signals that may change in value along an underlying continuous time domain.
In this article, we generalise the original work and use \emph{generalised timed traces} (GTTs)~\cite{Gazda2019}, which, for some value domain $X$ and time domain $\mathcal{T}$ define traces as functions $\mathcal{T} \rightarrow X$.
The time domain $\mathcal{T}$ can be either  $\NN$ (\emph{discrete-time}),  or $\RRpos$ (\emph{continuous-time}).
For the value domain we will use vectors of real values $X = \RR^n$ for some $n > 0$ or, to express mixed-IO traces, the set $X = \Inputs \cup \Outputs$.

STL formulas can express properties 
of systems modelled as sets $\SysSTL \subseteq (\mathcal{T}\rightarrow X)$ of traces 
by making the atomic properties refer to booleanisations of the signal values.  The syntax of the variant of STL that we use in this article is as follows, where $f \in X \rightarrow \Reals$: 
\begin{equation*}
  \begin{array}{c!{\ {::{=}}\ }c!{\ \mid\ }c!{\ \mid\ }c!{\ \mid\ }c!{\ \mid\ }c}
    \phi & \top & f > 0 & \neg\phi & \phi\land\phi & \phi\U\phi \\
  \end{array}.\label{eq:stl:syntax}
\end{equation*}

STL replaces atomic propositions by \emph{threshold predicates} of the form $f > 0$, which hold if and only if function $f$ applied to the trace value at the current
 time returns a positive value.
The Boolean operators and the Until operator $\U$ are very similar to those of HyperLTL.
The Next operator $\X$ is not part of STL, because ``next'' is without precise meaning in  continuous time.
The definitions of the derived operators $\F$, $\G$ and $\W$ are the same as for HyperLTL.
Formally, the \emph{Boolean semantics} of an STL formula $\phi$ at time $t \in \mathcal{T}$ for a trace 
$\w \in \mathcal{T} \to X$
is defined inductively: 
\begin{align*}
  & w, t \models \top  && \\
  & w, t \models f > 0 && \quad\text{iff}\quad f(\w(t)) > 0 \\
  & w, t \models \neg\phi && \quad\text{iff}\quad w, t \not\models \phi \\
  & w, t \models \phi \land \psi && \quad\text{iff}\quad w, t \models \phi \text{ and } w, t \models \psi \\
  & w, t \models \phi\U\psi && \quad\text{iff}\quad \text{exists }t' \geq t \text{ s.t. } w, t' \models \psi \text{ and} \\
  &&&\qquad\qquad\qquad\qquad\qquad\qquad\qquad\text{ for all } t'' \in [t, t'),\ w,t'' \models \phi
\end{align*}
A system \SysSTL satisfies a formula $\phi$, denoted $\SysSTL \models \phi$, if and only if for every $\w\in\SysSTL$ it holds that $\w,0 \models \phi$.

\paragraph{Quantitative Interpretation}
STL has been extended by a \emph{quantitative semantics}~\cite{DBLP:conf/cav/DonzeFM13,DBLP:journals/tecs/AbbasFSIG13,DBLP:journals/tcs/FainekosP09}. 
This semantics is designed in such a way that whenever $\rho(\phi, w, t) \neq 0$, its sign indicates whether $w, t \models \phi$ holds in the Boolean semantics.
For any STL formula $\phi$, trace $w$ and time $t$, if $\rho(\phi, w, t) > 0$, then $w, t \models \phi$ holds, and if $\rho(\phi, w, t) < 0$, then $w, t \models \phi$ does not hold.
The quantitative semantics for an STL formula $\phi$, trace $w$, and time $t$ the quantitative semantics is defined inductively:
 \begin{align*}
  \rho(\top, w, t) &\ =\ \infty \\
  \rho(f > 0, w, t) &\ =\ f(\w(t)) \\
  \rho(\neg\phi, w, t) &\ =\ -\rho(\phi, w, t) \\
  \rho(\phi \land \psi, w, t) &\ =\ \min(\rho(\phi, w, t), \rho(\psi, w, t)) \\
  \rho(\phi\U\psi, w, t) &\ =\ \sup_{t' \geq t} \min \set{\rho(\psi, w, t'), \inf_{t'' \in [t, t')} \rho(\phi, w, t'')}
\end{align*}

\paragraph{Robustness and Falsification}
The value of the quantitative semantics can serve as a \emph{robustness estimate} and as such be used to search for a violation of the property 
    \begin{algorithm}[t]
      \renewcommand{\algorithmicrequire}{\textbf{Input:}}
      \renewcommand{\algorithmicensure}{\textbf{Output:}}
      \renewcommand{\algorithmiccomment}[1]{/* #1 */}
    \begin{algorithmic}[1]
      \REQUIRE $w$: Initial trace, $\mathcal{R}$: Robustness function, $\PS$: Proposal Scheme
      \ENSURE $w\in\SysSTL$
      \WHILE{$\mathcal{R}(w) > 0$}
      \STATE $w' \leftarrow \PS(w)$
      \STATE $\alpha \leftarrow \exp(-\beta(\mathcal{R}(w') - \mathcal{R}(w)))$
      \STATE $r \leftarrow \mathsf{UniformRandomReal}(0,1)$
      \IF{$r \leq \alpha$}
      \STATE $w \leftarrow w'$
      \ENDIF
      \ENDWHILE
    \end{algorithmic}
    \caption{Monte-Carlo falsification}
    \label{algo:falsification}
    \end{algorithm}
at hand, i.e.,\ to falsify it.
The robustness of STL formula $\phi$ is its quantitative  value at time $0$, that is, $\mathcal{R}_\phi(w) \coloneqq \rho(\phi, w, 0)$. 
So,  falsifying a formula $\phi$ for a system $\SysSTL$ 
boils down to a search problem with the goal condition $\mathcal{R}_\phi(w) < 0$.
Successful falsification algorithms solve this  problem by understanding it as the
optimisation
 problem $\minimise_{w \in \SysSTL} \mathcal{R}_\phi(w)$.
Algorithm~\ref{algo:falsification}~\cite{DBLP:journals/tecs/AbbasFSIG13,DBLP:conf/hybrid/NghiemSFIGP10} sketches an algorithm for Monte-Carlo Markov Chain falsification, which is based on acceptance-rejection sampling~\cite{chib1995understanding}.

An input to the algorithm is an initial trace $w$ and a computable robustness function $\mathcal{R}$.
Robustness computation for STL formulas 
has been addressed in the literature~\cite{DBLP:conf/cav/DonzeFM13,DBLP:journals/tcs/FainekosP09}; we omit this discussion here.
The third input $\PS$ is a proposal scheme that proposes a new trace to the algorithm based on the previous one (line 2).
The parameter $\beta$ (used in line 3) can be adjusted during the search and is a means to avoid being trapped in local minima, preventing to find a global minimum.

Notably, there exists prior work by Nguyen et al.~\cite{DBLP:conf/memocode/NguyenKJDJ17} that discusses an extension of STL to HyperSTL though using a non-standard semantic underpinning.  In this context, 
they present a falsification approach restricted to the fragment ``t-HyperSTL'' where, according to the authors, ``a nesting structure of temporal logic formulas involving different traces is not allowed''.
Therefore, none of our cleanness definitions belongs to this fragment.

\section{Logical Characterisation of Mixed-IO Cleanness}\label{sec:logicalCharacterisation}

In this section we provide a temporal logic characterisation for \robustCleannessNDet and \fCleannessNDet for mixed-IO systems.
For this, we propose a HyperSTL semantics (different to that of \cite{DBLP:conf/memocode/NguyenKJDJ17}) and propose HyperSTL formulas for \robustCleannessNDet and \fCleannessNDet.
We explain how these formulas 
can be applied to mixed-IO traces and prove that the characterisation is correct.
Furthermore, for the special case that $\Std$ is a finite set, we reformulate the HyperSTL formulas characterising the u-cleannesses as equivalent STL formulas. 

\paragraph{Hyperlogics over Continuous Domains}
Previous work~\cite{DBLP:conf/memocode/NguyenKJDJ17} extends STL to HyperSTL echoing the extension of LTL to HyperLTL. 
We use a similar HyperSTL syntax in this article: 
\begin{equation*}
  \begin{array}{c!{\ {::{=}}\ }c!{\ \mid\ }c!{\ \mid\ }c!{\ \mid\ }c!{\ \mid\ }c}
    \psi & \E{\pi} \psi & \A{\pi} \psi & \multicolumn{2}{l}{\ \ \phi } \\
    \phi & \top & f > 0 & \neg\phi & \phi\land\phi & \ \phi\U\phi \ .\\
  \end{array}\label{eq:hyperstl:syntax}
\end{equation*}
The meaning of the universal and existential quantifier is as for HyperLTL.
In contrast to HyperLTL (and to the existing definition of HyperSTL), we consider it insufficient to allow propositions to refer to only a single trace.
In HyperLTL 
atomic propositions of individual traces can be compared by means of the Boolean connectives.
To formulate thresholds for real values, however, we feel the need to allow real values from multiple traces to be combined in the function $f$, and thus to appear as arguments of $f$.
Hence, in our semantics of HyperSTL, $f>0$ holds if and only if the result of $f$, applied to all traces quantified over, is greater than 0. For this to work formally, the arity of function $f$ is 
the number $m$ of traces quantified over at the occurrence of $f>0$ in the formula, so 
$f: X^m \rightarrow \Reals$.

A trace assignment~\cite{ClarksonFKMRS14:post} $\Pi: \mathcal{V} \rightarrow \SysSTL$ is a partial function assigning traces of $\SysSTL$ to variables.
Let $\Pi[\pi \coloneqq w]$ denote the same function as $\Pi$, except that $\pi$ is mapped to trace $w$. 
The Boolean semantics of HyperSTL is defined below. 
\begin{definition}\label{def:hyperstl:boolSem}
  Let $\psi$ be a HyperSTL formula, $t \in \mathcal{T}$ a time point, $\SysU \subseteq (\mathcal{T}\rightarrow X)$ a set of GTTs, and $\Pi$ a trace assignment.
  Then, the Boolean semantics for $\SysU, \Pi, \varTime \models \psi$ is defined inductively:  
  \begin{align*}
    \SysU, \Pi, \varTime &\models \exists \pi. \psi && \lIff\ \exists \w \in \SysU.\ \SysU, \Pi[\pi \coloneqq w], \varTime \models \psi\\
    \SysU, \Pi, \varTime &\models \forall \pi. \psi && \lIff\ \forall \w \in \SysU.\ \SysU, \Pi[\pi \coloneqq w], \varTime \models \psi\\
    \SysU, \Pi, \varTime &\models \top &&\\
    \SysU, \Pi, \varTime &\models f > 0 && \lIff\ f(\Pi(\pi_1)(t), \dots, \Pi(\pi_m)(t)) > 0 \text{ for } dom(\Pi) = \set{\pi_1, \dots, \pi_m}\footnotemark\setcounter{auxFootnote}{\value{footnote}}\\
    \SysU, \Pi, \varTime &\models \neg \phi && \lIff\ \SysU, \Pi, \varTime \not\models \phi\\
    \SysU, \Pi, \varTime &\models \phi_1 \land \phi_2 && \lIff\ \SysU, \Pi, \varTime \models \phi_1 \text{ and } \SysU, \Pi, \varTime \models \phi_2\\
    \SysU, \Pi, \varTime &\models \phi_1 \U \phi_2 && \lIff\ \exists \varTime' \geq \varTime.\  \SysU, \Pi, \varTime' \models \phi_2 \text{ and } \forall \varTime''\in[\varTime, \varTime').\ \SysU, \Pi, \varTime'' \models \phi_1
  \end{align*}
\end{definition}

A system $\SysU$ satisfies a formula $\psi$ if and only if $\SysU, \emptyset, 0 \models \psi$.
The quantitative semantics for HyperSTL
is defined below:
\begin{definition}\label{def:hyperstl:quantiSem}
  Let $\psi$ be a HyperSTL formula, $t \in \mathcal{T}$ a time point, $\SysU \subseteq (\mathcal{T}\rightarrow X)$ a set of GTTs, and $\Pi$ a trace assignment.
  Then, the quantitative semantics for $\rho(\psi, \SysU, \Pi, t)$ is defined inductively:
  \begin{align*}
    \rho(\exists\pi.\ \psi, \SysU, \Pi, t) &\ =\ \sup_{w \in \SysU} \rho(\psi, \SysU, \Pi[\pi \coloneqq w], t)\\
    \rho(\forall\pi.\ \psi, \SysU, \Pi, t) &\ =\ \inf_{w \in \SysU} \rho(\psi, \SysU, \Pi[\pi \coloneqq w], t)\\
    \rho(\top, \SysU, \Pi, t) &\ =\ \infty \\
    \rho(f > 0, \SysU, \Pi, t) &\ =\ f(\Pi(\pi_1)(t), \dots, \Pi(\pi_m)(t))
    \text{ for } dom(\Pi) = \set{\pi_1, \dots, \pi_m}\footnotemark[\value{footnote}]\\
    \rho(\neg\phi, \SysU, \Pi, t) &\ =\ -\rho(\phi, \SysU, \Pi, t) \\
    \rho(\phi_1 \land \phi_2, \SysU, \Pi, t) &\ =\ \min(\rho(\phi_1, \SysU, \Pi, t), \rho(\phi_2, \SysU, \Pi, t)) \\
    \rho(\phi_1\U\phi_2, \SysU, \Pi, t) &\ =\ \sup_{t' \geq t} \min \{\rho(\phi_2, \SysU, \Pi, t'), \inf_{t'' \in [t, t')} \rho(\phi_1, \SysU, \Pi, t'')\} 
  \end{align*}
\end{definition}
\footnotetext{We admit some sloppiness; the set $dom(\Pi)$ should have a fixed order.}

\paragraph{HyperSTL Characterisation}
The HyperLTL characterisations in \Cref{sec:background:tl:hyperltl} assume the system to be a subset of $(2^\ap)^\omega$ and works with distances between traces by means of a Boolean encoding into atomic propositions.
By using HyperSTL, we can characterise cleanness for systems that are representable as subsets of $(\mathcal{T} \rightarrow X)$. 

We can take the HyperLTL formulas from \Cref{prop:mc:hyperltl:robClean,prop:mc:hyperltl:f-clean} and transform them into HyperSTL formulas by applying simple syntactic changes.
We get for \robustLowCleannessNDet the formula
{\small\begin{align}
  &\psi_{\textsf{l-rob}} \coloneqq  \AAA{\pi_1}\AAA{\pi_2}\EEE{\pi'_2}
  {\Std_{\pi_1} > 0} \label{eq:hyperstl:lrob} \\
  & \qquad\qquad\ \  \limp \Big({\G(\mathsf{eq}(\mapInp{\pi_2}, \mapInp{\pi'_2}) \leq 0)}\land {} \notag \\
  & \qquad\qquad\qquad\qquad\qquad\ \ 
  \big((\dOut(\mapOut{\pi_1},\mapOut{\pi'_2})-\outpbound\leq 0)\W(\dIn(\mapInp{\pi_1},\mapInp{\pi'_2})-\inpbound > 0) \big)\Big),\notag
\end{align}}
\robustUpCleannessNDet is characterised by
{\small\begin{align}
  &\psi_{\textsf{u-rob}} \coloneqq \AAA{\pi_1}\AAA{\pi_2}\EEE{\pi'_1}
  {\Std_{\pi_1} > 0} \label{eq:hyperstl:urob} \\
  & \qquad\qquad\ \   \limp \Big({\Std_{\pi'_1} > 0} \land {\G(\mathsf{eq}(\mapInp{\pi_1}, \mapInp{\pi'_1}) \leq 0)}\land {}\notag \\
  & \qquad\qquad\qquad\qquad\qquad\ \ 
  \big((\dOut(\mapOut{\pi'_1},\mapOut{\pi_2})-\outpbound\leq 0)\W(\dIn(\mapInp{\pi'_1},\mapInp{\pi_2})-\inpbound > 0) \big)\Big),\notag
\end{align}}
for \fLowCleannessNDet we get the formula
{\small\begin{align}
  &\psi_{\textsf{l-fun}} \coloneqq  \AAA{\pi_1}\AAA{\pi_2}\EEE{\pi'_2}
  {\Std_{\pi_1} > 0} \label{eq:hyperstl:lfunc} \\
  & \qquad\quad    \limp \Big( {\G(\mathsf{eq}({\mapInp{\pi_2}, \mapInp{\pi'_2}}) \leq 0)}\land 
  \big(\G(\dOut(\mapOut{\pi_1},\mapOut{\pi'_2}) - f(\dIn(\mapInp{\pi_1},\mapInp{\pi'_2})) \leq 0) \big)\Big),\notag
\end{align}}
and, finally, \fUpCleannessNDet is encoded by
{\small\begin{align}
  &\psi_{\textsf{u-fun}} \coloneqq  \AAA{\pi_1}\AAA{\pi_2}\EEE{\pi'_1}
  {\Std_{\pi_1} > 0} \label{eq:hyperstl:ufunc} \\
  & \qquad\qquad\ \    \limp \Big({\Std_{\pi'_1} > 0} \land  {\G(\mathsf{eq}({\mapInp{\pi_1}, \mapInp{\pi'_1}}) \leq 0)}\land {} \notag \\
  & \qquad\qquad\qquad\qquad\qquad\qquad\qquad\quad\ 
  \big(\G(\dOut(\mapOut{\pi'_1},\mapOut{\pi_2}) - f(\dIn(\mapInp{\pi'_1},\mapInp{\pi_2})) \leq 0) \big)\Big).\notag
\end{align}}
The quantifiers remain unchanged relative to the formulas in \Cref{prop:mc:hyperltl:robClean,prop:mc:hyperltl:f-clean}.
The formulas use generic projection functions $\mapInp{}: X \to \Inputs$ and $\mapOut{}: X \to \Outputs$ to extract the input values, respectively output values from a trace.
To apply the formulas, these functions must be instantiated with functions for the concrete instantiation of the value domain $X$ of the traces to be analysed.
For example, for $\Inputs = \RR^m$, $\Outputs=\RR^l$, and $\SysSTL \subseteq (\timeSet \to \RR^{m+l})$, the projections could be defined for every $\w = (s_1, \dots, s_m, s_{m+1}, \dots, s_{m+l})$ as $\mapInp{\w} = (s_1, \dots, s_m)$ and $\mapOut{\w} = (s_{m+1}, \dots, s_{m+l})$.
The input equality requirement for two traces $\pi$ and $\pi'$ is ensured by globally enforcing
$\mathsf{eq}(\mapInp{\pi}, \mapInp{\pi'}) \leq 0$.
$\mathsf{eq}$ is a generic function that returns zero if its arguments are identical and a positive value otherwise.
It must be instantiated for concrete value domains.
For example, $\mathsf{eq}((s_1, \dots, s_m), (s'_1, \dots, s'_m))$ could be defined as the sum of the component-wise distances $\sum_{1 \leq i \leq m} \abs{s_i - s'_i}$.
Finally, in the above formulas we 
perform simple arithmetic operations to match the syntactic requirements of HyperSTL.

Formulas (\ref{eq:hyperstl:urob}) and (\ref{eq:hyperstl:ufunc}) are prepared to express \robustUpCleannessNDet, respectively \fUpCleannessNDet w.r.t. both cleanness \emph{contracts} or cleanness \emph{contexts}.
That is, we assume the existence of a function $\Std_\pi$ that returns a positive value if and only if the trace assigned to $\pi$ encodes a standard input (when considering cleanness \emph{contracts}) or encodes an input and output that constitute a standard behaviour (when considering cleanness \emph{contexts}).
Explicitly requiring that $\pi'_1$ represents a standard behaviour echoes the setup in Definitions~\ref{def:robclean:LTS}.\ref{def:robclean:LTS:U} and \ref{def:fclean:LTS}.\ref{def:fclean:LTS:U}. 

We remark that for encoding $\StdSet_{\pi}$, due to the absence of the Next-operator in HyperSTL, it might be necessary to add a clock signal $s(t) = t$ to traces in a preprocessing step. 

\begin{example}
  Let $\Inputs = \Outputs = \RR$ be the sets representing real-valued inputs and outputs, $\mathcal{T} = \NN$ be the discrete time domain, and $X = \Inputs \times \Outputs$ the value domain that considers pairs of inputs and outputs as values.
  We consider the \robustCleannessNDet context $\Contract = \langle \StdSet, \dInMIO, \dOutMIO, \inpbound, \outpbound \rangle$,
  where $\StdSet = \set{w_0, w_1}$ contains the two standard traces 
  %$\w_1 = (1,1)\,(3,3)\,(5,5)\,(7,7)\,\cdots$ and  $\w_2 = (1,0)\,(3,0)\,(5,0)\,(7,0)\,\cdots$.
  $\w_0 = (1;0)\,(2;0)\,(3;0)\,(4;0)\,\cdots$ and
  $\w_1 = (1;1)\,(2;2)\,(3;3)\,(4;4)\,\cdots$.
  % Intuitively, systems are supposed to either compute the identity function (which for every input responds with that input) or to always return zero.
  For the distance functions we use the absolute differences, i.e., $\dIn(v_1, v_2) = \dOut(v_1, v_2) = \abs{v_1 - v_2}$.
  Let the value thresholds be $\inpbound = 1$ and $\outpbound = 2$, and let $\mapInp{}, \mapOut{}, \mathsf{eq}$ and $\Std_\pi$ be defined as explained above.
  We consider the non-standard traces 
  $\w_A = (1.3;0)\,(2.6;0)\,(3.9;0)\,(5.2;0)\,\cdots$, $\w_B = (1.3;1.3)\,(2.6;2.6)\,(3.9;3.9)\,(5.2;5.2)\,\cdots$,  and $\w_\text{\tiny\XSolidBold} = (1.5;1.5)\,(2.5;3.2)\,(3.5;4.9)\,(4.5;6.6)\,\cdots$.

  The HyperSTL formulas $\psi_{\textsf{l-rob}}$ and $\psi_{\textsf{u-rob}}$ reason about sets of traces. 
  For example, the set $\SysSTL = \set{\w_0, \w_1, \w_A, \w_B}$ satisfies both formulas.
  If both $\pi_1$ and $\pi_2$ represent standard traces, then $\mapInp{\pi_1} = \mapInp{\pi_2}$, because $\mapInp{\w_0} = \mapInp{\w_1}$, and the formulas hold for $\pi'_2 = \pi_1$, respectively $\pi'_1 = \pi_2$.
  Otherwise,
  assume that $\pi_1$ represents $\w_0$ and $\pi_2$ represents $\w_B$ (the reasoning for other combinations of traces is similar). 

  First considering $\psi_{\textsf{l-rob}}$, we pick $\w_A$ for $\pi'_2$.
  We get that $\mapInp{\pi_2} = \mapInp{\pi'_2}$, because $\mapInp{\w_B} = \mapInp{\w_A}$. 
  Hence, we globally have 
  $\abs{\mapInp{\pi_2} - \mapInp{\pi'_2}} = 0$ and, thus,
  $\mathsf{eq}(\mapInp{\pi_2}, \mapInp{\pi'_2}) = 0$.
  At time steps $0 \leq t \leq 3$, the distance between the outputs $\abs{\mapOut{\w_0}(t) - \mapOut{\w_A}(t)}$ is at most $\outpbound$.
  Hence, the left operand of $\W$ 
  holds and the formula is satisfied for $t \leq 3$.
  At time $t = 3$ we have that $\abs{\mapInp{\w_0}(t) - \mapInp{\w_A}(t)} = \abs{4.0 - 5.2} > \inpbound$.
  Hence, the right operand of the $\W$ operator holds and $\psi_{\textsf{l-rob}}$ is satisfied also for $t \geq 3$.
  Notice that if 
  we would remove $\w_A$ from $\SysSTL$, then it would violate $\psi_{\textsf{l-rob}}$, because there is no possible choice for $\pi'_2$ that has the same inputs as $\w_B$ and where the output distances to $\w_0$ are below the $\outpbound$ threshold.

  To satisfy $\psi_{\textsf{u-rob}}$, we pick $\w_1$ for $\pi'_1$.
  The reasoning why the formula holds for this choice is analogue to $\psi_{\textsf{l-rob}}$.
  Notice that if we add the trace $\w_\text{\tiny\XSolidBold}$ to $\SysSTL$, then $\psi_{\textsf{u-rob}}$ is violated.
  Concretely, $\pi_2$ could represent $\w_\text{\tiny\XSolidBold}$; then, whether we pick $\w_0$ or $\w_1$ for $\pi'_1$, we eventually get outputs that violate the $\outpbound$ constraint, while the $\inpbound$ constraint is always satisfied.
  For example, if we compare $\w_\text{\tiny\XSolidBold}$ and $\w_1$, then we have for all time steps $t \leq 3$ that $\abs{\mapInp{\w_1}(t) - \mapInp{\w_\text{\tiny\XSolidBold}}(t)} = 0.5 \leq \inpbound$, but at time $t = 3$ we get  $\abs{\mapOut{\w_1}(t) - \mapOut{\w_\text{\tiny\XSolidBold}}(t)} = 2.6 > \outpbound$.
  Hence, at $t = 3$ the left and right operand of $\W$ are false, so $\psi_{\textsf{u-rob}}$ is violated.
\end{example}

\paragraph{Correctness under Mixed-IO Interpretation}
Mixed-IO signals are defined in the discrete time domain $\NN$ and value domain $\Inputs \cup \Outputs$.
The abstract functions 
$\mapInp{}$ and $\mapOut{}$ can be defined equally to the syntactically identical projection functions for mixed-IO models defined in \Cref{sec:background:doping}.
The function $\textsf{eq}(\inp_1, \inp_2)$ can be defined 
using the distance function $\dIn$ and some arbitrary small $\varepsilon > 0$:
\begin{align}
  \textsf{eq}(\inp_1, \inp_2) &\coloneqq \begin{cases}
    0, & \text{if } \inp_1 = \inp_2\\
    \dInMIO(\inp_1, \inp_2) + \varepsilon, & \text{if } \inp_1 \neq \inp_2 \land  \inp_1, \inp_2 \in \Inputs \\
    \infty, & \text{otherwise. }
  \end{cases} \label{eq:hyperstl:mixedIO:superD}
\end{align}
 In the second 
 clause of the above definition we add some positive value $\varepsilon$ to the result of $\dInMIO$, 
 because $\dInMIO(\inp_1, \inp_2)$ could be 0 even if $\inp_1 \neq \inp_2$.
For the correctness of 
the above HyperSTL formulas,
however, it is crucial that $\textsf{eq}(\inp_1, \inp_2) = 0$ if and only if $\inp_1= \inp_2$.
For a good performance of the falsification algorithm, we will nevertheless want to make use of $\dInMIO$ 
if $\inp_1 \neq \inp_2$.

\Cref{prop:hyperstl:robCorrectness} shows that HyperSTL formulas (\ref{eq:hyperstl:lrob}) and (\ref{eq:hyperstl:urob}) under the mixed-IO interpretation outlined above indeed characterise \robustLowCleannessNDet and \robustUpCleannessNDet.
\Cref{prop:hyperstl:fCorrectness} shows the same for \fCleannessNDet.

\begin{restatable}{proposition}{propHyperstlRobCorrectness}
  \label{prop:hyperstl:robCorrectness}
  Let $\mixedIOSys \subseteq \NN \to (\Inputs \cup \Outputs)$ be a mixed-IO system and $\Contract = \langle \Std,\allowbreak \dIn,\allowbreak \dOut,\allowbreak \inpbound,\allowbreak \outpbound \rangle$ a contract or context for \robustCleannessNDet with $\Std \subseteq \mixedIOSys$.
  Further, let $\Std_\pi$ be a quantifier-free HyperSTL subformula, such that $\mixedIOSys, \set{\pi \coloneqq w}, 0 \models \Std_\pi$ if and only if $\w \in \Std$. 
  Then, $\mixedIOSys$ is \robustlyLowCleanNDet \cleanFor \Contract  if and only if $\mixedIOSys, \emptyset, 0 \models \psi_{\text{l-rob}}$, and 
  $\mixedIOSys$ is \robustlyUpCleanNDet \cleanFor \Contract  if and only if $\mixedIOSys, \emptyset, 0 \models \psi_{\text{u-rob}}$. 
\end{restatable}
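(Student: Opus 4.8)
The plan is to establish both biconditionals by unfolding the Boolean semantics of \Cref{def:hyperstl:boolSem} for the formulas (\ref{eq:hyperstl:lrob}) and (\ref{eq:hyperstl:urob}) and matching the resulting first-order statements clause-by-clause against \Cref{def:robclean:LTS}. Because $\Std \subseteq \mixedIOSys$ is assumed, the side condition $\Std \subseteq \tracelang$ of robust cleanness is already discharged, so only the quantified conditions must be reconciled. I fix the dictionary $\pi_1 \mapsto \sigma$, $\pi_2 \mapsto \sigma'$ and $\pi'_2 \mapsto \sigma''$ (for the l-case) and evaluate each subformula under a trace assignment $\Pi$ into $\mixedIOSys$.

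First I would discharge the mechanical correspondences. The prefix $\AAA{\pi_1}\AAA{\pi_2}\EEE{\pi'_2}$ unfolds to quantification over $\mixedIOSys$ by \Cref{def:hyperstl:boolSem}; the guard $\Std_{\pi_1} > 0$ evaluates to $\sigma \in \Std$ by the hypothesis on $\Std_\pi$; and $\G(\mathsf{eq}(\mapInp{\pi_2},\mapInp{\pi'_2}) \leq 0)$ is equivalent to $\mapInp{\sigma'} = \mapInp{\sigma''}$. This last equivalence is exactly where the $\varepsilon$-shift in the definition of $\mathsf{eq}$ in \Cref{eq:hyperstl:mixedIO:superD} is needed: it guarantees $\mathsf{eq}(\inp_1,\inp_2)=0$ iff $\inp_1=\inp_2$, even when $\dIn$ fails to separate distinct symbols. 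I would also record the harmless move of pulling $\EEE{\pi'_2}$ across the implication: when $\sigma \notin \Std$ the antecedent is false and any witness (for instance $\sigma'$ itself) serves, so the formula collapses to ``for all $\sigma\in\Std$ and all $\sigma'$ there is $\sigma''$ such that \dots'', matching the prenex of the definition. Once $\mapInp{\sigma'}=\mapInp{\sigma''}$ is established, the input distance $\dIn(\mapInp{\pi_1},\mapInp{\pi'_2})$ occurring in the weak-until coincides with $\dIn(\mapInp{\sigma},\mapInp{\sigma'})$ of the definition, while the output distances already agree syntactically.

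The heart of the argument --- and the step I expect to be the main obstacle --- is an auxiliary equivalence showing that, at time $0$ over the discrete domain $\NN$, the subformula $(\dOut(\cdots)-\outpbound \leq 0)\,\W\,(\dIn(\cdots)-\inpbound > 0)$ expresses precisely the prefix-conditioned obligation ``for every $k$, if $\dIn \leq \inpbound$ holds at all $j\leq k$, then $\dOut \leq \outpbound$ holds at $k$''. I would prove this from the expansion $\phi\W\phi' \equiv (\phi\U\phi')\lor\G\phi$ and the discrete Until semantics, writing $\phi$ for the output-bound predicate and $\phi'$ for the input-divergence predicate. Forward: given $k$ whose whole prefix satisfies $\neg\phi'$, either $\G\phi$ yields the bound at $k$, or the least witness time $t'$ for $\phi'$ satisfies $t'>k$ and $\phi$ holds throughout $[0,t')\ni k$. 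Backward: if $\phi'$ never holds one obtains $\G\phi$, and otherwise the least divergence time $t'$ delimits a prefix $[0,t')$ on which every index meets the hypothesis, giving $\phi$ there and hence $\phi\U\phi'$. The care points are the inclusive bound $j\leq k$ against the half-open interval $[0,t')$ and the well-definedness of the least witness $t'$, both handled by the discreteness of $\NN$.

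Assembling these equivalences identifies $\mixedIOSys,\emptyset,0\models\psi_{\text{l-rob}}$ with \Cref{def:robclean:LTS}.\ref{def:robclean:LTS:L}. For $\psi_{\text{u-rob}}$ I would rerun the same argument with $\pi'_1$ now in the role of $\sigma''$: the additional conjunct $\Std_{\pi'_1}>0$ restricts this witness to $\Std$, matching the requirement $\sigma''\in\Std$ in \Cref{def:robclean:LTS}.\ref{def:robclean:LTS:U}; the input-equality conjunct becomes $\mapInp{\sigma}=\mapInp{\sigma''}$, which is again what lets the weak-until's input distance be read against $\sigma'$ as in the definition; and the swapped output roles are absorbed by the symmetry of $\dOut$.
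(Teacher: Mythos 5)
Your proposal is correct and follows essentially the same route as the paper's proof: the paper likewise unfolds the HyperSTL Boolean semantics into a first-order statement (its Lemma~\ref{lemma:hyperstl:translateFormulaIntoLogic}), relies on exactly your auxiliary equivalence for destructing $\W$ over discrete time via $\phi\W\phi'\equiv(\phi\U\phi')\lor\G\phi$ and a least-witness argument (its Lemma~\ref{lemma:testing:stlDerivedOps}), and then finishes with the $\mathsf{eq}(\inp_1,\inp_2)=0$ iff $\inp_1=\inp_2$ property and a reordering of premises. The only cosmetic difference is that the paper spells out the u-case and declares the l-case analogous, whereas you do the reverse.
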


\begin{restatable}{proposition}{propHyperstlFCorrectness}
  \label{prop:hyperstl:fCorrectness}
  Let $\mixedIOSys \subseteq \NN \to (\Inputs \cup \Outputs)$ be a mixed-IO system and $\Contract = \langle \Std,\allowbreak \dIn,\allowbreak \dOut,\allowbreak f \rangle$ a contract or context for \fCleannessNDet with $\Std \subseteq \mixedIOSys$.
  Further, let $\Std_\pi$ be a quantifier-free HyperSTL subformula, such that $\mixedIOSys, \set{\pi \coloneqq w}, 0 \models \Std_\pi$ if and only if $\w \in \Std$. %, and let $\textsf{eq}$ be defined as in \Cref{eq:hyperstl:mixedIO:superD}.
  Then, $\mixedIOSys$ is \fLowCleanNDet \cleanFor \Contract  if and only if $\mixedIOSys, \emptyset, 0 \models \psi_{\text{l-fun}}$,
  and $\mixedIOSys$ is \fUpCleanNDet \cleanFor \Contract  if and only if $\mixedIOSys, \emptyset, 0 \models \psi_{\text{u-fun}}$.
\end{restatable}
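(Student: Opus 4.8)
The plan is to show that the Boolean semantics of the HyperSTL formulas $\psi_{\text{l-fun}}$ and $\psi_{\text{u-fun}}$, evaluated over $\mixedIOSys$ under the mixed-IO instantiation of the abstract functions, unfold exactly into the quantifier structure of Definition~\ref{def:fclean:LTS}. The argument is almost entirely a matter of carefully translating each semantic clause, so I would structure it as two symmetric implications (soundness and completeness) for each of the two formulas, establishing the $\fLowCleannessNDet$ case first and then indicating that $\fUpCleannessNDet$ is analogous with the roles of $\pi_1$ and $\pi_2$ swapped and the extra $\Std_{\pi'_1}>0$ conjunct accounted for.

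First I would fix the semantic correspondence for the building blocks. By assumption on $\Std_\pi$, the subformula $\Std_{\pi_1}>0$ holds under an assignment $\Pi$ exactly when $\Pi(\pi_1)\in\Std$, so the outer $\forall\pi_1$ combined with the antecedent restricts attention to standard traces, matching the ``$\sigma\in\Std$'' quantification of the definition; likewise $\forall\pi_2$ ranges over all of $\mixedIOSys$, matching $\sigma'\in\tracelang$. For the input-equality conjunct, the key lemma is that $\G(\mathsf{eq}(\mapInp{\pi_2},\mapInp{\pi'_2})\leq 0)$ holds if and only if $\mapInp{\Pi(\pi_2)}=\mapInp{\Pi(\pi'_2)}$: this is precisely where the definition of $\textsf{eq}$ in Equation~\eqref{eq:hyperstl:mixedIO:superD} matters, since the additive $\varepsilon$ in the second clause guarantees $\textsf{eq}(\inp_1,\inp_2)=0 \iff \inp_1=\inp_2$ even when $\dInMIO$ is degenerate. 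Thus $\G(\mathsf{eq}(\dots)\leq 0)$ captures exactly the requirement $\mapInp{\sigma'}=\mapInp{\sigma''}$, so the existential $\exists\pi'_2$ ranges over the same witnesses as $\sigma''\in\tracelang$ with matching input projection.

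Next I would unfold the remaining conjunct $\G(\dOut(\mapOut{\pi_1},\mapOut{\pi'_2})-f(\dIn(\mapInp{\pi_1},\mapInp{\pi'_2}))\leq 0)$. Because the time domain is $\NN$ and $\G\phi$ means $\phi$ holds at every $t\geq 0$, the threshold predicate $f_{\mathrm{thr}}>0$ with $f_{\mathrm{thr}}\coloneqq -(\dOut(\dots)-f(\dIn(\dots)))$ being nonnegative at every $t$ translates directly into ``for every index $k\in\NN$, $\dOut(\mapOut{\sigma[k]},\mapOut{\sigma''[k]})\leq f(\dIn(\mapInp{\sigma[k]},\mapInp{\sigma'[k]}))$.'' Here I must be careful that the multi-trace arity convention of Definition~\ref{def:hyperstl:boolSem} is respected: the function inside the threshold takes the values of all currently quantified traces at time $t$ and evaluates the composite expression pointwise, which is exactly the per-index inequality of the definition. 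Assembling these pieces, the whole formula $\psi_{\text{l-fun}}$ under $\Pi=\emptyset$ at $t=0$ reads as: for all standard $\sigma$ and all $\sigma'$, there exists $\sigma''$ with matching input projection satisfying the pointwise output bound — which is verbatim Definition~\ref{def:fclean:LTS}.\ref{def:fclean:LTS:L}.

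For the $\fUpCleannessNDet$ case the only genuinely new ingredient is the conjunct $\Std_{\pi'_1}>0$, which forces the existential witness $\pi'_1$ to range over $\Std$ rather than all of $\mixedIOSys$; this mirrors exactly the remark after Definition~\ref{def:robclean:LTS} that for u-cleanness the witness $\sigma''$ must be drawn from $\Std$ since matching input projections does not preserve membership in $\Std$. I expect the main obstacle to be bookkeeping rather than conceptual: making the arity-and-order convention of the threshold predicates (the footnoted sloppiness about a fixed order on $\mathrm{dom}(\Pi)$) fully rigorous, and verifying that the syntactic rewriting of each inequality $a\leq b$ into the required ``$f>0$'' threshold form preserves the Boolean truth value including the boundary case of equality. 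Once that translation is pinned down, both directions of each biconditional follow by reading the semantic unfolding forwards and backwards, with the $\varepsilon$-shifted $\textsf{eq}$ guaranteeing the crucial biconditional $\textsf{eq}=0\iff$ equal inputs that the correctness hinges on.
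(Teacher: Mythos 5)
Your proposal is correct and takes essentially the same route as the paper: the paper proves the robust-cleanness analogue (\Cref{prop:hyperstl:robCorrectness}) by destructing $\G$/$\W$ into first-order timing constraints, translating the whole HyperSTL formula into first-order logic, and using that $\mathsf{eq}(\inp_1,\inp_2)=0$ iff $\inp_1=\inp_2$, and then explicitly omits the \fCleannessNDet proof as "conceptually similar" with the $\G$-bounded output condition replacing the $\W$ reasoning. Your unfolding of the Boolean semantics, the $\varepsilon$-shifted $\mathsf{eq}$ argument, the per-index reading of $\G$ over $\NN$, and the treatment of the $\Std_{\pi'_1}$ conjunct for the u-case are exactly the steps that instantiation requires.
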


\paragraph{STL Characterisation for Finite Standard Behaviour}
In many practical settings -- when the different standard behaviours are spelled out upfront explicitly, as in \NEDC and \WLTC{} -- it can be assumed that the number of distinct standard behaviours $\StdSet$ is finite (while there are infinitely many possible behaviours in $\SysSTL$). 
Finiteness of $\StdSet$ makes it possible to remove by enumeration the quantifiers from the \robustUpCleannessNDet and \fUpCleannessNDet HyperSTL formulas. 
This opens the way to work with the STL fragment of HyperSTL, after proper adjustments.
In the following, we assume that the set $\StdSet  = \set{w_1, \dots, w_\stdcnt}$ is an arbitrary standard set with $\stdcnt$ unique standard traces, where every $\w_k: \mathcal{T} \to X$ uses the same time domain $\mathcal{T}$ and value domain $X$.

To encode the HyperSTL formulas (\ref{eq:hyperstl:urob}) and (\ref{eq:hyperstl:ufunc}) in STL, we use the concept of \emph{self-composition}, which has proven useful for the analysis of hyperproperties~\cite{FinkbeinerRS15:cav,BartheDR11:mscs}.
We concatenate a trace under analysis $\w: \mathcal{T} \to X$ and the standard traces $\w_1$ to $\w_\stdcnt$ to the composed trace $w_+ = (\w, \w_1, \dots, \w_\stdcnt) \subseteq (\mathcal{T} \to X^{\stdcnt+1})$.
Given a system $\SysSTL \subseteq (\timeSet \to X)$ and a set $\StdSet  = \set{\w_1, \dots, \w_\stdcnt} \subseteq \SysSTL$, 
we denote by $\SysSTL \circ \StdSet \coloneqq \set{(\w, \w_1, \dots, \w_\stdcnt) \mid \w \in \SysSTL}$ the system in which every trace in \SysSTL is composed with the standard traces in \StdSet.
For every $\w_+ \in \SysSTL \circ \StdSet$, we will in the following STL formula write $\w$ to mean the projection on $\w_+$ to the trace $\w$, and we write $\w_k$, for $1 \leq k \leq \stdcnt$, to mean the projection  on $\w_+$ to the 
$k$th standard trace. 

\begin{restatable}{theorem}{thmHyperstlUrobCorrectnessSTL}
  \label{thm:hyperstl:urobCorrectnessSTL}
  Let $\mixedIOSys \subseteq \NN \to (\Inputs \cup \Outputs)$ be a mixed-IO system and $\Contract = \langle \Std,\allowbreak \dIn,\allowbreak \dOut,\allowbreak \inpbound,\allowbreak \outpbound \rangle$ a context for \robustCleannessNDet with finite standard behaviour $\Std = \set{w_1, \dots, w_\stdcnt} \subseteq \mixedIOSys$.
  Then, $\mixedIOSys$ is \robustlyUpCleanNDet \cleanFor \Contract  if and only if $(\mixedIOSys \circ \StdSet) \models \varphi_{\textsf{u-rob}}$, where
  \begin{align*}
    &\varphi_{\textsf{u-rob}} \coloneqq \bigwedge_{1 \leq a \leq \stdcnt} \ \bigvee_{1 \leq b \leq \stdcnt} 
    \Big({\G(\mathsf{eq}(\mapInp{\w_a}, \mapInp{\w_b}) \leq 0)}\ \land \\
    &\hspace{36.5mm} \big((d_\Outputs(\mapOut{\w_b}, \mapOut{\w})-\outpbound \leq 0)\W(d_\Inputs(\mapInp{\w_b}, \mapInp{\w}) - \inpbound > 0) \big)\Big).
  \end{align*}
\end{restatable}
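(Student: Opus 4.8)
The plan is to reduce the claim to the HyperSTL characterisation already established in \Cref{prop:hyperstl:robCorrectness}. That proposition states that $\mixedIOSys$ is \robustlyUpCleanNDet \cleanFor \Contract if and only if $\mixedIOSys, \emptyset, 0 \models \psi_{\textsf{u-rob}}$, so it suffices to prove the semantic equivalence
\[
  \mixedIOSys, \emptyset, 0 \models \psi_{\textsf{u-rob}} \quad\Longleftrightarrow\quad (\mixedIOSys \circ \StdSet) \models \varphi_{\textsf{u-rob}}.
\]
Since both directions are obtained by the same chain of semantic rewritings, it is enough to unfold the two sides according to \Cref{def:hyperstl:boolSem} and match them step by step, exploiting that $\StdSet = \set{w_1, \dots, w_\stdcnt}$ is finite.

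First I would eliminate the three trace quantifiers of $\psi_{\textsf{u-rob}}$ by enumeration. The outer quantifier $\AAA{\pi_1}$ occurs under the guard ${\Std_{\pi_1} > 0}$ in the antecedent of the implication; any assignment of a non-standard trace to $\pi_1$ satisfies the implication vacuously, so the universal restricts to the standard traces and, by finiteness, becomes the conjunction $\bigwedge_{1 \leq a \leq \stdcnt}$ with $\pi_1$ bound to $w_a$. Dually, the witness for $\EEE{\pi'_1}$ is forced to be standard by the conjunct ${\Std_{\pi'_1} > 0}$, so the existential over $\mixedIOSys$ collapses to an existential over $\StdSet$ and hence to the finite disjunction $\bigvee_{1 \leq b \leq \stdcnt}$ with $\pi'_1$ bound to $w_b$; the now trivially satisfied conjunct ${\Std_{\pi'_1} > 0}$ is dropped. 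The remaining universal $\AAA{\pi_2}$ ranges over all of $\mixedIOSys$; since universal quantification commutes with the conjunction $\bigwedge_a$, it can be pulled outermost and identified with the implicit universal quantification in the STL satisfaction $(\mixedIOSys \circ \StdSet) \models \cdot$, under which every composed trace $w_+ = (w, w_1, \dots, w_\stdcnt)$ contributes its projection $w$ as the value of $\pi_2$ while the projections $w_a, w_b$ recover the standard components. Under the resulting correspondence $\pi_1 \mapsto w_a$, $\pi'_1 \mapsto w_b$, $\pi_2 \mapsto w$, the inner quantifier-free subformula $\G(\mathsf{eq}(\cdot) \leq 0) \land \big((\dOut(\cdot) - \outpbound \leq 0) \W (\dIn(\cdot) - \inpbound > 0)\big)$ is syntactically identical in $\psi_{\textsf{u-rob}}$ and $\varphi_{\textsf{u-rob}}$, so its Boolean semantics agree once the three traces are fixed, and the equivalence follows.

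The main obstacle is to confirm that the quantifier dependency structure is faithfully preserved across this translation. The existential $\EEE{\pi'_1}$ sits in the scope of both $\AAA{\pi_1}$ and $\AAA{\pi_2}$, so its witness may depend on both; in $\varphi_{\textsf{u-rob}}$ the disjunction $\bigvee_b$ is nested inside $\bigwedge_a$ and evaluated pointwise on each composed trace, so the chosen $b$ may depend on $a$ and on $w$, which is exactly the admissible dependence. I would make the commutation of $\AAA{\pi_2}$ past $\bigwedge_a$ explicit (it is sound because $\forall$ distributes over $\land$) and carefully discharge the self-composition bookkeeping, i.e.\ that evaluating a threshold predicate such as $\mathsf{eq}(\mapInp{w_a}, \mapInp{w_b})$ on $w_+$ indeed extracts the $a$th and $b$th standard components and applies the function under the mixed-IO interpretation of $\dIn$, $\dOut$ and $\mathsf{eq}$ fixed earlier. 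A secondary point worth stating explicitly is that finiteness of $\StdSet$ is precisely what lets the existential guard ${\Std_{\pi'_1} > 0}$ be replaced by a finite disjunction over named standard traces, removing the need for the auxiliary predicate $\Std_\pi$ altogether in the STL formula.
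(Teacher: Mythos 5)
Your proposal is correct and follows essentially the same route as the paper: the theorem is proved there by combining \Cref{prop:hyperstl:robCorrectness} with an equivalence between $\psi_{\textsf{u-rob}}$ and $\varphi_{\textsf{u-rob}}$ (the paper's \Cref{prop:hyperstl:STLrobCleannessCorrectness}), and that equivalence rests on exactly the guard-based finite enumeration of the three trace quantifiers that you describe, including the commutation of the universal over $\pi_2$ past the conjunction and the observation that the witness index $b$ may depend on both $a$ and $\w$. The only difference is one of execution: the paper performs the enumeration after first unfolding $\G$ and $\W$ into explicit first-order timing constraints (reusing \Cref{lemma:hyperstl:translateFormulaIntoLogic} and \Cref{lemma:testing:stlDerivedOps}) and then folds the result back into STL, whereas you keep the temporal operators intact and match the quantifier-free kernel semantically under the correspondence $\pi_1 \mapsto \w_a$, $\pi_2 \mapsto \w$, $\pi'_1 \mapsto \w_b$ --- both are sound, and your variant avoids the first-order detour at the cost of having to discharge the trace-assignment versus self-composition bookkeeping explicitly, which you correctly flag as the step requiring care.
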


The theorem for \fUpCleannessNDet is analogue to \Cref{thm:hyperstl:urobCorrectnessSTL}.

\begin{restatable}{theorem}{thmHyperstlUfCorrectnessSTL}
  \label{thm:hyperstl:ufCorrectnessSTL}
  Let $\mixedIOSys \subseteq \NN \to (\Inputs \cup \Outputs)$ be a mixed-IO system and $\Contract = \langle \Std,\allowbreak \dIn,\allowbreak \dOut,\allowbreak f \rangle$ a context for \fCleannessNDet with finite standard behaviour $\Std = \set{w_1, \dots, w_\stdcnt} \subseteq \mixedIOSys$.
  Then, $\mixedIOSys$ is \fUpCleanNDet \cleanFor \Contract  if and only if $(\mixedIOSys \circ \StdSet) \models \varphi_{\textsf{u-fun}}$, where 
  \begin{align*}
    &\varphi_{\textsf{u-fun}} \coloneqq \bigwedge_{1 \leq a \leq \stdcnt} \ \bigvee_{1 \leq b \leq \stdcnt} 
    \Big({\G(\mathsf{eq}(\mapInp{\w_a}, \mapInp{\w_b}) \leq 0)}\ \land \\
    &\hspace{52mm} \big(\G(\dOut(\mapOut{\w_b},\mapOut{\w}) - f(\dIn(\mapInp{\w_b},\mapInp{\w})) \leq 0) \big)\Big).
  \end{align*}
\end{restatable}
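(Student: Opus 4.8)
The plan is to derive this as the functional-distance counterpart of \Cref{thm:hyperstl:urobCorrectnessSTL}, mirroring its proof. I would start from the HyperSTL characterisation of \Cref{prop:hyperstl:fCorrectness}, which equates \fUpCleanNDet \cleanFor $\Contract$ with satisfaction of the HyperSTL formula $\psi_{\textsf{u-fun}}$ from \Cref{eq:hyperstl:ufunc}, i.e.\ with $\mixedIOSys, \emptyset, 0 \models \psi_{\textsf{u-fun}}$. It then suffices to show that finiteness of $\Std$ lets us rewrite $\psi_{\textsf{u-fun}}$ into the self-composed STL formula, that is, that $\mixedIOSys, \emptyset, 0 \models \psi_{\textsf{u-fun}}$ holds if and only if $(\mixedIOSys \circ \StdSet) \models \varphi_{\textsf{u-fun}}$. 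Compared with the robust case, the only structural change is that the weak-until obligation of $\varphi_{\textsf{u-rob}}$ is replaced here by the single global obligation $\G(\dOut(\mapOut{w_b},\mapOut{\w}) - f(\dIn(\mapInp{w_b},\mapInp{\w})) \leq 0)$, because the functional threshold couples input and output distances pointwise and requires no separate ``input-vicinity-breached'' escape clause; so the overall argument is identical and only the discharge of this one conjunct differs.

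I would establish the equivalence by rewriting the quantifier prefix $\AAA{\pi_1}\AAA{\pi_2}\EEE{\pi'_1}$ of $\psi_{\textsf{u-fun}}$ in three steps. First, the two leading universal quantifiers commute, so the prefix may be read as $\AAA{\pi_2}\AAA{\pi_1}\EEE{\pi'_1}$. Second, using the defining semantics of the guard (namely $\Std_\pi > 0$ holds exactly when the trace assigned to $\pi$ lies in $\Std = \set{w_1,\dots,w_\stdcnt}$) together with finiteness of $\Std$, I remove the two standard-trace quantifiers by enumeration: for non-standard assignments of $\pi_1$ the implication $\Std_{\pi_1} > 0 \limp (\cdots)$ is vacuous, so the guarded universal collapses to the finite conjunction $\bigwedge_{1\leq a\leq\stdcnt}$ binding $\pi_1$ to $w_a$ and discharging the antecedent; dually, any witness for $\EEE{\pi'_1}(\Std_{\pi'_1} > 0 \land (\cdots))$ must be a standard trace, so it collapses to the finite disjunction $\bigvee_{1\leq b\leq\stdcnt}$ binding $\pi'_1$ to $w_b$ (the conjunct $\Std_{\pi'_1} > 0$ then holds and drops). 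Third, the one surviving quantifier $\AAA{\pi_2}$ ranges over all of $\mixedIOSys$, which is precisely the quantification built into the satisfaction relation $(\mixedIOSys \circ \StdSet) \models (\cdot)$: each composed trace $\w_+ = (\w, w_1,\dots,w_\stdcnt)$ supplies its free component $\w$ as the assignment of $\pi_2$ and its fixed components $w_a, w_b$ as the assignments of $\pi_1, \pi'_1$. Since $\forall\w$ (from $\models$) and $\bigwedge_a$ (from $\pi_1$) are both universal they commute, yielding exactly $\varphi_{\textsf{u-fun}}$.

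It remains to transport the now quantifier-free matrix. Unfolding each $\G$ over the discrete time domain $\NN$ turns it into an obligation ``for every index $k$''. The input-equality conjunct $\G(\mathsf{eq}(\mapInp{w_a},\mapInp{w_b}) \leq 0)$ reduces, by the defining property $\mathsf{eq}(\inp_1,\inp_2) = 0 \liff \inp_1 = \inp_2$ with $\mathsf{eq} \geq 0$ throughout (\Cref{eq:hyperstl:mixedIO:superD}), to $\mapInp{w_a} = \mapInp{w_b}$, matching the side condition $\mapInp{\sigma} = \mapInp{\sigma''}$ of \Cref{def:fclean:LTS}.\ref{def:fclean:LTS:U} under the identification $\sigma = w_a$, $\sigma' = \w$, $\sigma'' = w_b$. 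The threshold conjunct becomes $\dOut(\mapOut{w_b}[k],\mapOut{\w}[k]) \leq f(\dIn(\mapInp{w_b}[k],\mapInp{\w}[k]))$ for all $k$; using symmetry of $\dOut$ to swap its two arguments and the already-established equality $\mapInp{w_a} = \mapInp{w_b}$ to rewrite the argument of $f$ from $w_b$ to $w_a$, this is exactly $\dOut(\mapOut{\sigma'[k]},\mapOut{\sigma''[k]}) \leq f(\dIn(\mapInp{\sigma[k]},\mapInp{\sigma'[k]}))$, so the matrix coincides with the body of \Cref{def:fclean:LTS}.\ref{def:fclean:LTS:U} and (together with the given $\Std \subseteq \mixedIOSys$) the chain of equivalences closes.

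The step I expect to be the main obstacle, exactly as in the robust case, is getting the self-composition bookkeeping right: one must check that after commutation precisely one universal quantifier ($\pi_2$) is realised by $\models$ while both standard-trace quantifiers are absorbed into the finite $\bigwedge_a$ and $\bigvee_b$, and that the quantifier order is preserved so that the disjunct $w_b$ witnessing the innermost $\exists\pi'_1$ is allowed to depend on both $\w$ and $w_a$. Everything else is a routine transcription of the proof of \Cref{thm:hyperstl:urobCorrectnessSTL}, with its weak-until discharge replaced by the simpler global-threshold discharge above.
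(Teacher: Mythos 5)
Your proposal is correct and follows essentially the same route as the paper: the paper derives \Cref{thm:hyperstl:ufCorrectnessSTL} directly from \Cref{prop:hyperstl:fCorrectness} together with \Cref{prop:hyperstl:STLfCleannessCorrectness}, whose (omitted) proof is exactly your quantifier-enumeration and self-composition argument transcribed from the robust case, with the weak-until discharge replaced by the global threshold conjunct. Your additional bookkeeping in the final paragraph (symmetry of $\dOut$ and rewriting the argument of $f$ via $\mapInp{\w_a} = \mapInp{\w_b}$) matches what the paper's proof of \Cref{prop:hyperstl:robCorrectness} does for the corresponding step, so nothing is missing.
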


\begin{example}
  \label{ex:mixedIOstl}
  We consider the \robustCleannessNDet context $\Contract = \langle \StdSet, \dInMIO, \dOutMIO, \inpbound, \outpbound \rangle$
  where $\StdSet = \set{w_1, w_2}$ contains the two standard traces $w_1 = 1_\inp\, 2_\inp\, 3_\inp\, 7_\outp\, 0_\inp \, \quiescence^\omega$ and  $w_2 = 0_\inp\, 1_\inp\, 2_\inp\, 3_\inp\, 6_\outp \, \quiescence^\omega$.
  We here decorate inputs with index $\inp$ and outputs with index $\outp$, i.e., $w_1$ describes a system receiving the three inputs $1$, $2$, and $3$, then producing the output $7$, and finally receiving input $0$ before entering quiescence.
  We take 
  \begin{align*}
    \!    \dInMIO(\inp_1, \inp_2) = 
        \begin{cases}
        \abs{\inp_1 - \inp_2},\!\!\!\! & \text{ if } \inp_1, \inp_2\in\Inputs \\
        0, & \text{ if } \inp_1\! =\! \inp_2\! =\! \NoInp \\ 
        \infty, & \text{ otherwise,}
      \end{cases}
      \end{align*}
      and
  \begin{align*}
\!    \dOutMIO(\outp_1, \outp_2) = 
    \begin{cases}
    \abs{\outp_1 - \outp_2},\!\!\!\! & \text{ if } \outp_1, \outp_2\in\Outputs \backslash \set{\quiescence} \\
    0, & \text{ if } \outp_1\! =\! \outp_2\! =\! \NoOutp \text{ or } \outp_1\! =\! \outp_2\! =\! \quiescence\\
    \infty, & \text{ otherwise.}
  \end{cases}
  \end{align*} 
  The contractual value thresholds are assumed to be $\inpbound = 1$ and $\outpbound = 6$.

  Assume we are observing the trace $w = 0_\inp\, 1_\inp\, 2_\inp\, 6_\outp\, 0_\inp \, \quiescence^\omega$ to be monitored with 
  STL formula $\varphi_{\textsf{u-rob}}$ (from \Cref{prop:hyperstl:STLrobCleannessCorrectness}).
  First notice, that for combinations of $a$ and $b$ 
  in $\varphi_{\textsf{u-rob}}$, where $a \neq b$, the subformula $\G(\mathsf{eq}(\mapInp{\w_a}, \mapInp{\w_b}) \leq 0)$ is always false, because $\w_1$ and $\w_2$ 
  have different (input) values at time point 0. 
  Hence, it remains to show that
  \[ ({d}_\Outputs(\mapOut{w_1}, \mapOut{w})-\outpbound \leq 0)\W({d}_\Inputs(\mapInp{w_1}, \mapInp{w}) - \inpbound > 0)\hphantom{.\,} \land \]
  \[ ({d}_\Outputs(\mapOut{w_2}, \mapOut{w})-\outpbound \leq 0)\W({d}_\Inputs(\mapInp{w_2}, \mapInp{w}) - \inpbound > 0).\,\hphantom{\land} \]
  For the first conjunct, the input distance between inputs in $\w$ and $\w_1$ is always 1 at positions 1 to 3, it is 0 at position 4 (because $\NoInp$ is compared to $\NoInp$), and remains 0 in position 5 and beyond.
  Thus, ${d}_\Inputs(\mapInp{\w_1}, \mapInp{\w}) - \inpbound$ is always at most 0, and the right hand-side of the $\W$ operator is always false.
  Consequently, by definition of $\W$, the left operand of $\W$ must always hold, i.e., ${d}_\Outputs(\mapOut{\w_1}, \mapOut{\w})$ must always be less or equal to 6.
  This is the case for $\w_1$ and $\w$: at all positions except for 4, $\NoOutp$ is compared to $\NoOutp$ (or $\quiescence$ to $\quiescence$), so the difference is 0, and at position 4, the distance of 6 and~7~is~1.

  For the second $\W$-formula, $\w$ is compared to $\w_2$.
  These two traces are comparable only to a limited extent: the order of input and output is altered at the last two positions of the signals before quiescence.
  Hence, the right operand of $\W$ is true at position 4, and the formula holds for the remaining trace.
  For positions 1 to 3, the input distances are 0, because the input values are identical.
  At these positions, the left operand must hold.
  The values are input values, so $\NoOutp$ is compared to $\NoOutp$ at each position.
  This distance is defined to be 0, so it holds that $-6 \leq 0$, and the formula is satisfied.
  Since both formulas hold, the conjunction of both holds, too, and trace $\w$ is qualified as robustly clean.
  There could however be other system traces not considered in this example, that overall could violate robust cleanness of the system.
\end{example}

\paragraph{Restriction of input space}
\label{sec:restrictInput}
Robust cleanness puts semantic requirements on fragments of a system's input space, outside of which the system's behaviour remains unspecified. 
Typically, the fragment of the input space covered is rather small.
To falsify the STL formula $\varphi_{\textsf{u-rob}}$ from \Cref{prop:hyperstl:STLrobCleannessCorrectness}, the falsifier has two challenging tasks.
First, it has to find a way to stay in the relevant input space, i.e., select inputs with a distance of at most $\inpbound$ from the standard behaviour. 
Only if this is assured it can search for an output large enough to violate the $\outpbound$ requirement. In this, a large robustness estimate provided by the quantitative semantics of STL cannot serve as an indicator for deciding whether an input is too far off or whether an output stays too close to the standard behaviour.
We can improve the efficiency of the falsification process significantly by narrowing upfront the input space the falsifier uses.

In practice, test execution traces will always be finite.
In previous real-life doping tests, test execution lengths have been bounded by some constant $B \in \N$~\cite{DBLP:journals/tomacs/BiewerDH21}, i.e., systems are represented as sets of finite traces $\SysU \subseteq (\Inputs\cup\Outputs)^{B}$ (which for formality reasons each can be considered suffixed with $\quiescence^\omega$).
In this bounded horizon, we can provide a predicate discriminating between relevant and irrelevant input sequences.
Formally, the restriction to the relevant  input space fragment of a system $\SysU \subseteq (\Inputs\cup\Outputs)^{B}$  is given by the set
$\Inputs_{\StdSet, \inpbound} = \set{w \in \SysU \mid \exists w' \in \StdSet.\ \bigwedge_{k=0}^{B\!-\!1} (\dInMIO(\mapInp{w[k]}, \mapInp{w'[k]}) \leq \inpbound)}$.
Since $\StdSet$ and $B$ are finite, membership is computable.

There are rare cases in which this optimisation may prevent the falsifier from finding a counterexample. 
This is only the case if there is an input prefix leading to a violation of the formula  for which there is no suffix such that the whole trace satisfies the $\inpbound$ constraint. 
Below is a pathological example in which this could make a difference.

\begin{example}
  Apart from \NOx{} emissions, \NEDC (and \WLTC) tests are used to measure fuel consumption.
  Consider a contract similar to the contracts above, but with fuel rate as the output quantity.
  Assuming a ``normal'' fuel rate behaviour during the standard test, there might be a test within a reasonable $\inpbound$ distance, where the fuel is wasted insanely.
  Then, the fuel tank might run empty before the intended end of the test, which therefore could not be finished within the $\inpbound$ distance, because speed would be constantly 0 at the end.
  The actually driven test is not in set $\Inputs_{\StdSet, \inpbound}$, but there is a prefix within $\inpbound$ distance that violates the robust cleanness property.
\end{example}

Notably, there may be additional techniques to reduce the size of the input space. For example, if the next input symbol depends on the history of inputs, this constraint could be considered in the proposal scheme.

\section{Supervision of Diesel Emission Cleaning Systems}\label{sec:dieselSupervision}

The severity of the diesel emissions scandal showed that the regulations alone are insufficient to prevent car manufacturers from implementing tampered -- or doped -- emission cleaning systems.
Recent works~\cite{DBLP:journals/tomacs/BiewerDH21} shows that \robustCleannessNDet is a suitable means to extend the precisely defined behaviour of cars for the \NEDC to test cycles within a $\inpbound$ range around the \NEDC.
To demonstrate the usefulness of \robustCleannessNDet, the essential details of the emission testing scenario were modelled: the set of inputs is the set of speed values, an output value represents the amount of emissions -- in particular, the nitric oxide (\NOx) emissions -- measured at the exhaust pipe of a car.
The distance functions are the absolute differences of speed, respectively \NOx, values, and the standard behaviour is the singleton set that contains a trace that consists of the inputs that define the test cycle followed by the average amount of \NOx gas measured during the test.
Thus, formally, we 
get $\Inputs = \RR$, $\Outputs = \RR$, $\Std = \set{\NEDC\cdot \outp}$,\footnote{$\NEDC$ is the sequence of $1180$ inputs with the $k$th input defining the speed of the car after $k$ seconds from the beginning of the \NEDC} and $\dIn$ and $\dOut$ as defined in \Cref{ex:mixedIOstl}~\cite{DBLP:journals/tomacs/BiewerDH21}.

The STL formulas developed in the previous section, combined with the probabilistic falsification approach, give rise to further improvements to the existing testing-based work~\cite{DBLP:journals/tomacs/BiewerDH21} on diesel doping detection.

To use the falsification algorithm in \Cref{algo:falsification}, we implement the
restriction of the input space  to $\Inputs_{\set{\NEDC\cdot \outp},\inpbound}$ as explained in Section~\ref{sec:restrictInput}. 
With this restriction the STL formula $\varphi_{\textsf{u-rob}}$ from \Cref{prop:hyperstl:STLrobCleannessCorrectness} can be simplified to
\begin{align}
  \G(\dOutMIO(\mapOut{(\NEDC\cdot \outp)}, \mapOut{\w})-\outpbound \leq 0).
  \label{eq:stl:nedc-clean}
\end{align}

This is because the conjunction and disjunction over standard traces becomes obsolete for only a single standard trace.
For the same reason, the requirement $\G(\mathsf{eq}(\mapInp{\w_a}, \mapInp{\w_b}) \leq 0)$ becomes obsolete, as the compared traces are always identical.
In the $\W$ subformula, the right proposition is always false, because of the restricted input space. 
We implemented Algorithm~\ref{algo:falsification} for the robustness computation according to formula~(\ref{eq:stl:nedc-clean}).

In practice, running tests like \NEDC with real cars is a time consuming and expensive endeavour. Furthermore, tests on chassis dynamometers are usually prohibited to be carried out with rented cars by the rental companies.
On the other hand, car emission models for simulation are not available to the public -- and models provided by the manufacturer cannot be considered trustworthy. 
To carry out our experiments, we instead use an approximation technique that estimates the  amount of \NOx{} emissions of a car along a certain trajectory based on data recorded during previous trips with the same car, sampled at a frequency of \SI{1}{\hertz} (one sample per second).
Notably, these trips do not need to have much in common with the trajectory to be approximated.
A trip is represented as a finite sequence $\vartheta \in (\Reals \times \Reals \times\Reals)^*$ of triples, where 
each such triple $(v,a,n)$ represents the speed, the acceleration, and the (absolute) amount of \NOx{} emitted at a particular time instant in the sample.
Speed and acceleration can be considered as the main parameters influencing the instant emission of \NOx{}. This is, for instance, reflected in the regulation~\cite{DBLP:conf/rv/KohlHB18,LEX:32017R1151} where the decisive quantities to validate test routes for real-world driving emissions tests on public roads are speed and acceleration.

A recording \recording is the union of finitely many trips $\vartheta$. We can turn such a recording  into a predictor $\predictor$ of the \NOx{} values given pairs of speed and acceleration as follows:
\[
\predictor(v,a) = \mathsf{average} [n \ | \ (\exists v', a'.\ (\abs{v-v'} \leq 2 \ \land \abs{a-a'} \leq 2 \ \land \ (v', a', n) \in \recording))].
\]
The amount of \NOx{} assigned to a pair $(v,a)$ here is the average of all \NOx{} values seen in the recording \recording for $v \pm \ell$ and $a \pm \ell$, with $0 \leq \ell \leq 2$.
To overcome measurement inaccuracies and to increase the robustness of the approximated emissions, the speed and acceleration may deviate up to $\SI{2}{\kilo\meter\per\hour}$, and $\SI{2}{\meter\per\second\squared}$, respectively.
This tolerance is adopted from the official \NEDC regulation~\cite{nedc}, which allows up to $\SI{2}{\kilo\meter\per\hour}$ of deviations while driving the \NEDC.

\paragraph{}
To demonstrate the practical applicability of our implementation of Algorithm~\ref{algo:falsification} and our \NOx{} approximation, we report here on 
two
experiments with
an Audi A6 Avant Diesel admitted in June 2020 and with its successor admitted in 2021.
We will refer to the former as car \emph{A20} and to the latter as car \emph{A21}.
We used the app \LolaDrives to perform in total 
six
low-cost \RDE tests -- two with A20 and four\footnote{We do not consider test A21.3 in this article, see~\cite[Section 5]{C3OTHER:STTT-toappear} for details} with A21 -- and recorded the data received from the cars' diagnosis ports.
The raw data is available on Zenodo~\cite{biewer_sebastian_2023_8058770}.
Using the emissions predictor proposed above we estimate that for an \NEDC test A20 emits $\SI{86}{\milli\gram\per\kilo\meter}$ of \NOx and that A21 emits $\SI{9}{\milli\gram\per\kilo\meter}$.
Car A20 has previously been falsified w.r.t. the RDE specification. 
Neither A20 nor A21 has been falsified w.r.t. robust cleanness.

Before turning to falsification, we spell out meaningful contexts for \robustCleannessNDet.  
We identified suitable $\Inputs$, $\Outputs$, $\Std$, $\dIn$, and $\dOut$ at the beginning of the section.
For $\inpbound$, it turned out that $\inpbound = \SI{15}{\kilo\meter\per\hour}$ is a reasonable choice, as it leaves enough flexibility for human-caused driving mistakes and intended deviations~\cite{DBLP:journals/tomacs/BiewerDH21}.
The threshold for \NOx{} emissions under lab conditions is $\SI{80}{\milli\gram\per\kilo\meter}$.
The emission limits for \RDE tests depend on the admission date of the car.
Cars admitted in 2020 or earlier, must  emit $\SI{168}{\milli\gram\per\kilo\meter}$ at most, and cars admitted later must adhere to the limit of $\SI{120}{\milli\gram\per\kilo\meter}$.
For our experiments, we use $\outpbound = \SI{88}{\milli\gram\per\kilo\meter}$ for A20 and $\outpbound = \SI{40}{\milli\gram\per\kilo\meter}$ for A21 to have the same tolerances as for \RDE tests.
Effectively, the upper threshold for A20 is $84 + 88 = \SI{172}{\milli\gram\per\kilo\meter}$, and for A21 the limit is $9 + 40 = \SI{49}{\milli\gram\per\kilo\meter}$.
Notice that for software doping analysis, the output observed for a certain standard behaviour and the constant $\outpbound$ define the effective threshold; this threshold is typically different from the thresholds defined by the regulation.

We modified Algorithm~\ref{algo:falsification} by adding a timeout condition: if the algorithm is not able to find a falsifying counterexample within 3,000 iterations, it terminates and returns both the trace for which the smallest robustness has been observed and its corresponding robustness value.
Hence, if falsification of robust cleanness for a system is not possible, 
the algorithm outputs an upper bound on how robust the system satisfies robust cleanness.
\begin{figure}[tb]
  \begin{center}
  \hspace{1cm}    
  \pgfplotsset{
      axis line style={white}
    }
    \begin{tikzpicture}[trim axis left, scale=1]
      \begin{axis}[
          width=.97\columnwidth,
          height=0.42\columnwidth,
          xlabel={Time [s]},
          ylabel={Speed [$\frac{\mathit{km}}{h}$]},
          xmin=-2, xmax=1180,
          ymin=-2, ymax=125,
          xtick={0,200, 400, 600, 800, 1000, 1180},
          ytick={0, 32, 70, 100, 120},
          legend pos=north west,
          ymajorgrids=true,
          grid style=dotted,draw=gray!10,
      ]
       
      \addplot[
          color=blue,
          line width=.5pt,
          dash pattern=on 2pt off 1pt,
          ]
          coordinates {
          (0,0)(6,0)(11,0)(15,15)(23,15)(25,10)(28,0)(44,0)(49,0)(54,15)(56,15)(61,32)(85,32)(93,10)(96,0)(112,0)(117,0)(122,15)(124,15)(133,35)(135,35)(143,50)(155,50)(163,35)(176,35)(178,35)(185,10)(188,0)(195,0)(201,0)(206,0)(210,15)(218,15)(220,10)(223,0)(239,0)(244,0)(249,15)(251,15)(256,32)(280,32)(288,10)(291,0)(307,0)(312,0)(317,15)(319,15)(328,35)(330,35)(338,50)(350,50)(358,35)(371,35)(373,35)(380,10)(383,0)(390,0)(396,0)(401,0)(405,15)(413,15)(415,10)(418,0)(434,0)(439,0)(444,15)(446,15)(451,32)(475,32)(483,10)(486,0)(502,0)(507,0)(512,15)(514,15)(523,35)(525,35)(533,50)(545,50)(553,35)(566,35)(568,35)(575,10)(578,0)(585,0)(591,0)(596,0)(600,15)(608,15)(610,10)(613,0)(629,0)(634,0)(639,15)(641,15)(646,32)(670,32)(678,10)(681,0)(697,0)(702,0)(707,15)(709,15)(718,35)(720,35)(728,50)(740,50)(748,35)(761,35)(763,35)(770,10)(773,0)(780,0)(800,0)(805,15)(807,15)(816,35)(818,35)(826,50)(828,50)(841,70)(891,70)(895,60)(899,50)(968,50)(981,70)(1031,70)(1066,100)(1096,100)(1116,120)(1126,120)(1142,80)(1150,50)(1160,0)(1180,0)
          };
      
      \addplot [color=red,style=solid,line width=.2pt ]
        coordinates {(0,0)(1,0)(2,0)(3,0)(4,0)(5,0)(6,0)(7,0)(8,0)(9,0)(10,0)(11,0)(12,0)(13,0)(14,0)(15,0)(16,0)(17,2)(18,0)(19,0)(20,1)(21,3)(22,5)(23,3)(24,5)(25,6)(26,5)(27,7)(28,8)(29,6)(30,5)(31,4)(32,2)(33,1)(34,1)(35,0)(36,0)(37,0)(38,0)(39,0)(40,0)(41,0)(42,0)(43,0)(44,0)(45,0)(46,0)(47,0)(48,0)(49,0)(50,0)(51,0)(52,0)(53,1)(54,0)(55,0)(56,0)(57,1)(58,2)(59,4)(60,5)(61,7)(62,9)(63,10)(64,10)(65,12)(66,14)(67,15)(68,17)(69,17)(70,19)(71,20)(72,21)(73,23)(74,21)(75,21)(76,23)(77,24)(78,26)(79,27)(80,26)(81,26)(82,26)(83,26)(84,26)(85,25)(86,27)(87,28)(88,30)(89,29)(90,28)(91,27)(92,25)(93,23)(94,21)(95,19)(96,17)(97,15)(98,13)(99,13)(100,12)(101,10)(102,8)(103,6)(104,4)(105,2)(106,0)(107,0)(108,0)(109,0)(110,0)(111,0)(112,0)(113,0)(114,0)(115,0)(116,0)(117,0)(118,0)(119,0)(120,0)(121,0)(122,0)(123,0)(124,0)(125,1)(126,1)(127,2)(128,4)(129,6)(130,8)(131,9)(132,11)(133,12)(134,14)(135,15)(136,17)(137,19)(138,17)(139,19)(140,21)(141,23)(142,25)(143,27)(144,28)(145,30)(146,32)(147,33)(148,34)(149,36)(150,38)(151,39)(152,41)(153,43)(154,45)(155,44)(156,44)(157,44)(158,44)(159,45)(160,43)(161,41)(162,40)(163,38)(164,37)(165,36)(166,34)(167,32)(168,32)(169,30)(170,28)(171,26)(172,28)(173,30)(174,28)(175,30)(176,28)(177,30)(178,29)(179,28)(180,30)(181,28)(182,27)(183,27)(184,25)(185,25)(186,23)(187,21)(188,23)(189,21)(190,19)(191,20)(192,18)(193,16)(194,14)(195,12)(196,11)(197,9)(198,7)(199,5)(200,3)(201,1)(202,0)(203,0)(204,0)(205,0)(206,0)(207,0)(208,0)(209,0)(210,0)(211,0)(212,0)(213,0)(214,0)(215,2)(216,0)(217,1)(218,3)(219,4)(220,6)(221,7)(222,8)(223,6)(224,6)(225,4)(226,2)(227,0)(228,0)(229,0)(230,0)(231,0)(232,0)(233,0)(234,0)(235,0)(236,0)(237,0)(238,0)(239,1)(240,0)(241,1)(242,0)(243,0)(244,0)(245,0)(246,0)(247,0)(248,0)(249,0)(250,0)(251,0)(252,1)(253,2)(254,1)(255,2)(256,3)(257,5)(258,7)(259,9)(260,10)(261,12)(262,13)(263,15)(264,17)(265,19)(266,21)(267,22)(268,24)(269,26)(270,27)(271,27)(272,25)(273,26)(274,25)(275,24)(276,23)(277,25)(278,26)(279,28)(280,27)(281,25)(282,27)(283,27)(284,25)(285,23)(286,25)(287,23)(288,22)(289,20)(290,20)(291,18)(292,16)(293,14)(294,13)(295,12)(296,10)(297,8)(298,6)(299,4)(300,2)(301,0)(302,0)(303,0)(304,0)(305,0)(306,0)(307,0)(308,0)(309,0)(310,1)(311,1)(312,1)(313,0)(314,0)(315,0)(316,0)(317,1)(318,0)(319,1)(320,2)(321,3)(322,5)(323,6)(324,7)(325,7)(326,9)(327,10)(328,11)(329,11)(330,13)(331,15)(332,17)(333,17)(334,19)(335,20)(336,22)(337,24)(338,26)(339,28)(340,29)(341,31)(342,33)(343,35)(344,37)(345,39)(346,41)(347,42)(348,44)(349,42)(350,43)(351,43)(352,44)(353,46)(354,46)(355,44)(356,42)(357,43)(358,41)(359,39)(360,37)(361,35)(362,34)(363,32)(364,31)(365,29)(366,31)(367,29)(368,28)(369,30)(370,30)(371,29)(372,29)(373,31)(374,33)(375,32)(376,30)(377,28)(378,28)(379,27)(380,26)(381,24)(382,22)(383,20)(384,18)(385,16)(386,14)(387,12)(388,12)(389,11)(390,9)(391,7)(392,5)(393,5)(394,3)(395,1)(396,0)(397,0)(398,0)(399,0)(400,0)(401,0)(402,0)(403,0)(404,0)(405,0)(406,0)(407,0)(408,0)(409,0)(410,2)(411,3)(412,4)(413,5)(414,6)(415,8)(416,8)(417,9)(418,7)(419,5)(420,3)(421,3)(422,2)(423,0)(424,0)(425,0)(426,1)(427,0)(428,0)(429,0)(430,0)(431,0)(432,0)(433,1)(434,0)(435,0)(436,0)(437,0)(438,0)(439,0)(440,0)(441,0)(442,0)(443,1)(444,0)(445,0)(446,0)(447,0)(448,2)(449,4)(450,5)(451,4)(452,6)(453,8)(454,10)(455,11)(456,12)(457,11)(458,13)(459,15)(460,17)(461,19)(462,19)(463,21)(464,23)(465,24)(466,26)(467,25)(468,26)(469,28)(470,28)(471,27)(472,27)(473,26)(474,28)(475,26)(476,26)(477,28)(478,27)(479,26)(480,25)(481,24)(482,22)(483,22)(484,21)(485,20)(486,18)(487,18)(488,16)(489,14)(490,13)(491,12)(492,10)(493,8)(494,6)(495,4)(496,2)(497,0)(498,1)(499,1)(500,0)(501,0)(502,0)(503,0)(504,1)(505,0)(506,0)(507,0)(508,0)(509,0)(510,0)(511,0)(512,2)(513,0)(514,1)(515,3)(516,4)(517,3)(518,5)(519,6)(520,7)(521,7)(522,8)(523,10)(524,10)(525,12)(526,13)(527,15)(528,16)(529,18)(530,20)(531,20)(532,22)(533,24)(534,26)(535,28)(536,30)(537,32)(538,34)(539,36)(540,38)(541,39)(542,40)(543,41)(544,43)(545,44)(546,43)(547,41)(548,40)(549,40)(550,41)(551,40)(552,40)(553,38)(554,37)(555,36)(556,35)(557,34)(558,33)(559,32)(560,30)(561,31)(562,31)(563,30)(564,29)(565,29)(566,27)(567,28)(568,28)(569,30)(570,32)(571,30)(572,29)(573,27)(574,26)(575,27)(576,25)(577,24)(578,22)(579,20)(580,19)(581,17)(582,15)(583,13)(584,12)(585,11)(586,9)(587,8)(588,6)(589,5)(590,3)(591,2)(592,0)(593,0)(594,0)(595,0)(596,0)(597,0)(598,0)(599,0)(600,0)(601,0)(602,0)(603,0)(604,0)(605,2)(606,3)(607,4)(608,4)(609,5)(610,7)(611,5)(612,3)(613,4)(614,2)(615,1)(616,0)(617,0)(618,0)(619,0)(620,0)(621,0)(622,0)(623,0)(624,0)(625,0)(626,1)(627,0)(628,0)(629,1)(630,0)(631,0)(632,0)(633,0)(634,0)(635,0)(636,0)(637,0)(638,0)(639,0)(640,0)(641,1)(642,0)(643,1)(644,3)(645,5)(646,7)(647,6)(648,8)(649,9)(650,10)(651,12)(652,14)(653,15)(654,17)(655,19)(656,20)(657,22)(658,24)(659,26)(660,28)(661,28)(662,28)(663,26)(664,26)(665,26)(666,27)(667,26)(668,28)(669,26)(670,27)(671,27)(672,26)(673,25)(674,25)(675,23)(676,22)(677,22)(678,20)(679,19)(680,17)(681,18)(682,16)(683,14)(684,12)(685,10)(686,8)(687,6)(688,4)(689,3)(690,1)(691,0)(692,0)(693,0)(694,0)(695,0)(696,0)(697,0)(698,0)(699,0)(700,0)(701,0)(702,0)(703,0)(704,0)(705,0)(706,0)(707,2)(708,0)(709,0)(710,0)(711,2)(712,4)(713,5)(714,7)(715,7)(716,9)(717,11)(718,13)(719,14)(720,16)(721,16)(722,18)(723,20)(724,22)(725,24)(726,26)(727,27)(728,29)(729,30)(730,29)(731,31)(732,32)(733,34)(734,36)(735,38)(736,40)(737,42)(738,43)(739,45)(740,44)(741,43)(742,44)(743,44)(744,42)(745,41)(746,39)(747,39)(748,40)(749,38)(750,36)(751,35)(752,34)(753,32)(754,30)(755,30)(756,28)(757,29)(758,27)(759,29)(760,28)(761,28)(762,29)(763,27)(764,28)(765,28)(766,28)(767,28)(768,29)(769,27)(770,26)(771,25)(772,23)(773,21)(774,19)(775,18)(776,16)(777,14)(778,12)(779,10)(780,10)(781,8)(782,6)(783,4)(784,4)(785,2)(786,0)(787,0)(788,0)(789,0)(790,0)(791,0)(792,0)(793,0)(794,0)(795,0)(796,0)(797,0)(798,0)(799,0)(800,0)(801,0)(802,0)(803,0)(804,0)(805,0)(806,0)(807,0)(808,0)(809,1)(810,3)(811,3)(812,5)(813,6)(814,8)(815,10)(816,12)(817,14)(818,16)(819,17)(820,19)(821,21)(822,23)(823,25)(824,27)(825,29)(826,30)(827,31)(828,33)(829,34)(830,36)(831,36)(832,37)(833,39)(834,41)(835,43)(836,45)(837,47)(838,47)(839,49)(840,51)(841,53)(842,55)(843,57)(844,59)(845,60)(846,60)(847,62)(848,62)(849,64)(850,63)(851,64)(852,65)(853,63)(854,64)(855,62)(856,62)(857,63)(858,62)(859,64)(860,63)(861,64)(862,63)(863,62)(864,63)(865,65)(866,65)(867,67)(868,65)(869,67)(870,68)(871,66)(872,66)(873,64)(874,66)(875,67)(876,65)(877,63)(878,63)(879,63)(880,64)(881,64)(882,66)(883,64)(884,63)(885,61)(886,63)(887,63)(888,61)(889,63)(890,65)(891,64)(892,62)(893,60)(894,60)(895,60)(896,59)(897,60)(898,60)(899,59)(900,57)(901,55)(902,53)(903,51)(904,51)(905,49)(906,47)(907,45)(908,45)(909,45)(910,46)(911,44)(912,44)(913,42)(914,43)(915,45)(916,47)(917,45)(918,43)(919,44)(920,45)(921,45)(922,46)(923,45)(924,47)(925,46)(926,44)(927,46)(928,48)(929,46)(930,47)(931,46)(932,45)(933,45)(934,47)(935,45)(936,46)(937,44)(938,46)(939,44)(940,43)(941,43)(942,42)(943,44)(944,44)(945,45)(946,46)(947,46)(948,46)(949,44)(950,46)(951,44)(952,43)(953,44)(954,44)(955,43)(956,44)(957,46)(958,45)(959,43)(960,43)(961,45)(962,47)(963,45)(964,46)(965,47)(966,46)(967,48)(968,47)(969,49)(970,48)(971,47)(972,48)(973,47)(974,48)(975,50)(976,49)(977,50)(978,51)(979,52)(980,54)(981,55)(982,57)(983,59)(984,61)(985,63)(986,64)(987,66)(988,64)(989,66)(990,67)(991,65)(992,65)(993,63)(994,63)(995,65)(996,65)(997,64)(998,64)(999,65)(1000,64)(1001,65)(1002,66)(1003,66)(1004,67)(1005,66)(1006,64)(1007,66)(1008,64)(1009,66)(1010,65)(1011,66)(1012,64)(1013,64)(1014,64)(1015,65)(1016,63)(1017,61)(1018,62)(1019,63)(1020,64)(1021,63)(1022,64)(1023,64)(1024,64)(1025,65)(1026,66)(1027,64)(1028,63)(1029,62)(1030,64)(1031,66)(1032,64)(1033,65)(1034,63)(1035,64)(1036,64)(1037,65)(1038,65)(1039,67)(1040,69)(1041,70)(1042,71)(1043,73)(1044,72)(1045,72)(1046,73)(1047,74)(1048,73)(1049,74)(1050,74)(1051,76)(1052,78)(1053,78)(1054,80)(1055,82)(1056,84)(1057,83)(1058,85)(1059,86)(1060,87)(1061,87)(1062,86)(1063,88)(1064,89)(1065,90)(1066,88)(1067,90)(1068,92)(1069,94)(1070,96)(1071,95)(1072,95)(1073,93)(1074,94)(1075,93)(1076,93)(1077,95)(1078,95)(1079,93)(1080,95)(1081,94)(1082,93)(1083,94)(1084,92)(1085,93)(1086,94)(1087,93)(1088,94)(1089,94)(1090,95)(1091,93)(1092,92)(1093,93)(1094,95)(1095,96)(1096,95)(1097,94)(1098,95)(1099,94)(1100,96)(1101,97)(1102,97)(1103,99)(1104,99)(1105,97)(1106,98)(1107,100)(1108,99)(1109,100)(1110,102)(1111,104)(1112,106)(1113,107)(1114,109)(1115,111)(1116,111)(1117,110)(1118,110)(1119,112)(1120,114)(1121,114)(1122,115)(1123,113)(1124,115)(1125,116)(1126,116)(1127,114)(1128,115)(1129,113)(1130,113)(1131,111)(1132,110)(1133,109)(1134,108)(1135,106)(1136,104)(1137,102)(1138,100)(1139,100)(1140,98)(1141,96)(1142,96)(1143,94)(1144,92)(1145,90)(1146,88)(1147,86)(1148,86)(1149,84)(1150,83)(1151,81)(1152,80)(1153,78)(1154,76)(1155,74)(1156,72)(1157,70)(1158,68)(1159,65)(1160,60)(1161,55)(1162,50)(1163,45)(1164,40)(1165,35)(1166,30)(1167,25)(1168,20)(1169,15)(1170,13)(1171,11)(1172,9)(1173,7)(1174,7)(1175,6)(1176,5)(1177,4)(1178,4)(1179,2)(1180,1)}; % Output=181.64568720064392

      %\addplot[color=blue,mark=*,mark size=1.3pt,only marks]
      %	coordinates {
      %    (61,32)(143,50)(280,32)(374,35)(841,70)(899, 50)(1031,70)(1066,100)
      %    };
      %    
      %\node at (axis cs:61,32) [anchor=south] {$i_{61}$};
      %\node at (axis cs:143,50) [anchor=south] {$i_{143}$};
      %\node at (axis cs:280,32) [anchor=south] {$i_{280}$};
      %\node at (axis cs:374,35) [anchor=south west] {$i_{374}$};
      %\node at (axis cs:841,70) [anchor=south] {$i_{841}$};
      %\node at (axis cs:899,50) [anchor=north] {$i_{899}$};
      %\node at (axis cs:1031,70) [anchor=south east] {$i_{1031}$};
      %\node at (axis cs:1066,100) [anchor=south east] {$i_{1066}$};
       %
      \end{axis}
      \end{tikzpicture}
      
  \caption{\NEDC speed profile (blue, dashed) and input falsifying $\Contract$ for $\outpbound = \SI{88}{\milli\gram\per\kilo\meter}$ (red) with $\SI{182}{\milli\gram\per\kilo\meter}$ of emitted \NOx{}.}
  \label{fig:nox-falsification}
  \end{center}
\end{figure}

For the concrete case of the diesel emissions, the robustness value during the first 1180 inputs (sampled from the restricted input space $\Inputs_{\StdSet, \inpbound}$) is always $\outpbound$.
When the \NEDC output $o_\NEDC$ and the non-standard output $o$ are compared, the robustness value is $\outpbound - \abs{o_\NEDC - o}$ (cf., eq.~(\ref{eq:stl:nedc-clean}), the quantitative semantics of STL, and definition of $\dOutMIO$).
Hence, for test cycles with small robustness values, we get \NOx{} emissions $o$ that are either very small or very large compared to $o_\NEDC$.
We ran the modified Algorithm~\ref{algo:falsification}  
on A20 and A21 for the contexts defined above.
For A20, it found a robustness value of $-8$, i.e., it was able to falsify robust cleanness relative to the assumed contract and found a test cycle for which \NOx emissions of $\SI{182}{\milli\gram\per\kilo\meter}$ are predicted.
The test cycle is shown in \Cref{fig:nox-falsification}.
For A21, the smallest robustness estimate found -- even after 100 independent executions of the algorithm -- was $38$, i.e., 
A21 is predicted to satisfy robust cleanness with a 
very high robustness estimate.
The corresponding test cycle is shown in \Cref{fig:nox-optimisation}.
\begin{figure}[tb]
  \begin{center}
  \hspace{1cm}    
  \pgfplotsset{
      axis line style={white}
    }
    \begin{tikzpicture}[trim axis left, scale=1]
      \begin{axis}[
          width=.97\columnwidth,
          height=0.42\columnwidth,
          xlabel={Time [s]},
          ylabel={Speed [$\frac{\mathit{km}}{h}$]},
          xmin=-2, xmax=1180,
          ymin=-2, ymax=125,
          xtick={0,200, 400, 600, 800, 1000, 1180},
          ytick={0, 32, 70, 100, 120},
          legend pos=north west,
          ymajorgrids=true,
          grid style=dotted,draw=gray!10,
      ]
       
      \addplot[
          color=blue,
          line width=.5pt,
          dash pattern=on 2pt off 1pt,
          ]
          coordinates {
          (0,0)(6,0)(11,0)(15,15)(23,15)(25,10)(28,0)(44,0)(49,0)(54,15)(56,15)(61,32)(85,32)(93,10)(96,0)(112,0)(117,0)(122,15)(124,15)(133,35)(135,35)(143,50)(155,50)(163,35)(176,35)(178,35)(185,10)(188,0)(195,0)(201,0)(206,0)(210,15)(218,15)(220,10)(223,0)(239,0)(244,0)(249,15)(251,15)(256,32)(280,32)(288,10)(291,0)(307,0)(312,0)(317,15)(319,15)(328,35)(330,35)(338,50)(350,50)(358,35)(371,35)(373,35)(380,10)(383,0)(390,0)(396,0)(401,0)(405,15)(413,15)(415,10)(418,0)(434,0)(439,0)(444,15)(446,15)(451,32)(475,32)(483,10)(486,0)(502,0)(507,0)(512,15)(514,15)(523,35)(525,35)(533,50)(545,50)(553,35)(566,35)(568,35)(575,10)(578,0)(585,0)(591,0)(596,0)(600,15)(608,15)(610,10)(613,0)(629,0)(634,0)(639,15)(641,15)(646,32)(670,32)(678,10)(681,0)(697,0)(702,0)(707,15)(709,15)(718,35)(720,35)(728,50)(740,50)(748,35)(761,35)(763,35)(770,10)(773,0)(780,0)(800,0)(805,15)(807,15)(816,35)(818,35)(826,50)(828,50)(841,70)(891,70)(895,60)(899,50)(968,50)(981,70)(1031,70)(1066,100)(1096,100)(1116,120)(1126,120)(1142,80)(1150,50)(1160,0)(1180,0)
          };
      
      \addplot [color=red,style=solid,line width=.2pt ]
        coordinates {(0,0)(1,0)(2,0)(3,0)(4,0)(5,0)(6,0)(7,0)(8,0)(9,0)(10,2)(11,1)(12,0)(13,2)(14,0)(15,1)(16,3)(17,2)(18,4)(19,2)(20,0)(21,2)(22,2)(23,1)(24,3)(25,5)(26,7)(27,5)(28,5)(29,3)(30,1)(31,1)(32,2)(33,1)(34,1)(35,1)(36,1)(37,1)(38,2)(39,2)(40,0)(41,0)(42,1)(43,1)(44,0)(45,1)(46,1)(47,1)(48,1)(49,0)(50,2)(51,0)(52,2)(53,1)(54,1)(55,2)(56,1)(57,0)(58,1)(59,0)(60,2)(61,2)(62,4)(63,6)(64,7)(65,9)(66,9)(67,11)(68,13)(69,15)(70,17)(71,19)(72,20)(73,19)(74,21)(75,22)(76,24)(77,24)(78,22)(79,24)(80,24)(81,25)(82,24)(83,24)(84,25)(85,23)(86,23)(87,24)(88,25)(89,25)(90,23)(91,22)(92,24)(93,22)(94,20)(95,18)(96,16)(97,14)(98,12)(99,11)(100,10)(101,8)(102,7)(103,7)(104,5)(105,3)(106,2)(107,0)(108,1)(109,2)(110,0)(111,1)(112,0)(113,0)(114,1)(115,2)(116,0)(117,1)(118,1)(119,0)(120,2)(121,0)(122,2)(123,3)(124,1)(125,1)(126,0)(127,2)(128,4)(129,5)(130,3)(131,5)(132,5)(133,6)(134,8)(135,10)(136,11)(137,13)(138,15)(139,17)(140,19)(141,20)(142,22)(143,24)(144,25)(145,26)(146,27)(147,29)(148,27)(149,29)(150,31)(151,33)(152,35)(153,37)(154,39)(155,40)(156,41)(157,39)(158,40)(159,41)(160,42)(161,41)(162,43)(163,41)(164,39)(165,37)(166,35)(167,34)(168,32)(169,30)(170,29)(171,28)(172,27)(173,28)(174,26)(175,26)(176,27)(177,25)(178,25)(179,25)(180,26)(181,28)(182,26)(183,26)(184,24)(185,24)(186,23)(187,21)(188,21)(189,20)(190,18)(191,16)(192,14)(193,12)(194,10)(195,9)(196,7)(197,6)(198,4)(199,2)(200,0)(201,1)(202,0)(203,0)(204,2)(205,2)(206,0)(207,2)(208,1)(209,2)(210,0)(211,2)(212,1)(213,2)(214,1)(215,0)(216,1)(217,0)(218,2)(219,3)(220,4)(221,3)(222,4)(223,5)(224,4)(225,2)(226,0)(227,1)(228,2)(229,1)(230,2)(231,1)(232,0)(233,0)(234,2)(235,2)(236,1)(237,0)(238,0)(239,1)(240,1)(241,1)(242,1)(243,1)(244,2)(245,1)(246,2)(247,1)(248,1)(249,1)(250,0)(251,1)(252,1)(253,1)(254,2)(255,1)(256,2)(257,4)(258,6)(259,5)(260,7)(261,9)(262,11)(263,12)(264,14)(265,16)(266,17)(267,19)(268,21)(269,21)(270,23)(271,24)(272,23)(273,23)(274,23)(275,24)(276,25)(277,27)(278,25)(279,25)(280,25)(281,25)(282,25)(283,24)(284,23)(285,23)(286,23)(287,23)(288,21)(289,19)(290,17)(291,17)(292,15)(293,14)(294,12)(295,10)(296,8)(297,7)(298,5)(299,3)(300,3)(301,2)(302,1)(303,1)(304,2)(305,0)(306,1)(307,1)(308,0)(309,2)(310,1)(311,0)(312,0)(313,2)(314,2)(315,1)(316,1)(317,2)(318,3)(319,3)(320,2)(321,3)(322,2)(323,4)(324,4)(325,4)(326,6)(327,7)(328,9)(329,10)(330,9)(331,11)(332,13)(333,15)(334,16)(335,18)(336,18)(337,20)(338,20)(339,22)(340,24)(341,26)(342,27)(343,28)(344,30)(345,32)(346,34)(347,35)(348,37)(349,39)(350,41)(351,41)(352,39)(353,38)(354,39)(355,39)(356,39)(357,41)(358,39)(359,37)(360,35)(361,33)(362,31)(363,29)(364,28)(365,27)(366,25)(367,27)(368,28)(369,28)(370,26)(371,26)(372,27)(373,27)(374,27)(375,27)(376,27)(377,25)(378,24)(379,24)(380,23)(381,21)(382,21)(383,20)(384,18)(385,16)(386,14)(387,12)(388,12)(389,10)(390,9)(391,9)(392,7)(393,6)(394,5)(395,3)(396,1)(397,3)(398,2)(399,2)(400,0)(401,1)(402,0)(403,2)(404,1)(405,2)(406,1)(407,1)(408,1)(409,3)(410,1)(411,0)(412,2)(413,4)(414,3)(415,5)(416,7)(417,7)(418,6)(419,5)(420,4)(421,3)(422,2)(423,2)(424,0)(425,0)(426,0)(427,0)(428,0)(429,1)(430,0)(431,1)(432,2)(433,1)(434,0)(435,1)(436,0)(437,1)(438,1)(439,1)(440,0)(441,0)(442,1)(443,0)(444,2)(445,2)(446,1)(447,1)(448,0)(449,0)(450,1)(451,2)(452,2)(453,3)(454,4)(455,6)(456,8)(457,10)(458,12)(459,14)(460,16)(461,18)(462,19)(463,20)(464,21)(465,22)(466,23)(467,23)(468,23)(469,24)(470,25)(471,27)(472,25)(473,24)(474,26)(475,24)(476,22)(477,23)(478,23)(479,23)(480,21)(481,20)(482,20)(483,18)(484,19)(485,19)(486,17)(487,16)(488,15)(489,13)(490,11)(491,9)(492,9)(493,7)(494,5)(495,6)(496,5)(497,3)(498,1)(499,0)(500,1)(501,1)(502,1)(503,1)(504,1)(505,3)(506,3)(507,2)(508,2)(509,0)(510,0)(511,2)(512,0)(513,2)(514,1)(515,1)(516,0)(517,0)(518,0)(519,1)(520,1)(521,2)(522,3)(523,4)(524,6)(525,7)(526,9)(527,11)(528,13)(529,15)(530,16)(531,18)(532,20)(533,22)(534,23)(535,24)(536,26)(537,28)(538,30)(539,32)(540,34)(541,36)(542,38)(543,40)(544,38)(545,39)(546,39)(547,40)(548,40)(549,40)(550,39)(551,37)(552,37)(553,36)(554,36)(555,35)(556,34)(557,34)(558,32)(559,30)(560,28)(561,29)(562,27)(563,26)(564,27)(565,27)(566,27)(567,26)(568,24)(569,26)(570,27)(571,27)(572,28)(573,26)(574,24)(575,25)(576,23)(577,21)(578,23)(579,21)(580,19)(581,17)(582,15)(583,14)(584,12)(585,10)(586,8)(587,8)(588,6)(589,4)(590,3)(591,2)(592,1)(593,0)(594,1)(595,1)(596,0)(597,2)(598,0)(599,1)(600,0)(601,0)(602,2)(603,0)(604,1)(605,1)(606,1)(607,2)(608,3)(609,5)(610,6)(611,5)(612,5)(613,3)(614,5)(615,3)(616,2)(617,1)(618,2)(619,1)(620,0)(621,2)(622,1)(623,0)(624,0)(625,1)(626,2)(627,0)(628,1)(629,1)(630,2)(631,2)(632,2)(633,0)(634,0)(635,0)(636,2)(637,2)(638,0)(639,2)(640,1)(641,1)(642,1)(643,1)(644,0)(645,2)(646,4)(647,5)(648,7)(649,5)(650,7)(651,8)(652,9)(653,11)(654,13)(655,15)(656,17)(657,19)(658,20)(659,21)(660,23)(661,24)(662,22)(663,22)(664,21)(665,23)(666,25)(667,25)(668,26)(669,27)(670,26)(671,26)(672,24)(673,26)(674,27)(675,28)(676,26)(677,24)(678,22)(679,20)(680,19)(681,17)(682,15)(683,13)(684,11)(685,9)(686,7)(687,5)(688,3)(689,2)(690,1)(691,1)(692,0)(693,1)(694,2)(695,0)(696,2)(697,0)(698,2)(699,0)(700,0)(701,0)(702,1)(703,0)(704,2)(705,2)(706,1)(707,2)(708,1)(709,2)(710,1)(711,1)(712,0)(713,2)(714,1)(715,3)(716,4)(717,3)(718,4)(719,5)(720,7)(721,9)(722,11)(723,13)(724,15)(725,17)(726,18)(727,20)(728,22)(729,23)(730,24)(731,26)(732,28)(733,30)(734,32)(735,34)(736,36)(737,37)(738,39)(739,41)(740,41)(741,42)(742,44)(743,42)(744,40)(745,38)(746,38)(747,36)(748,35)(749,36)(750,36)(751,34)(752,32)(753,34)(754,32)(755,31)(756,29)(757,27)(758,26)(759,24)(760,26)(761,26)(762,27)(763,25)(764,24)(765,26)(766,25)(767,23)(768,22)(769,22)(770,23)(771,24)(772,22)(773,20)(774,19)(775,17)(776,17)(777,16)(778,15)(779,14)(780,13)(781,11)(782,9)(783,7)(784,5)(785,3)(786,2)(787,2)(788,2)(789,2)(790,1)(791,0)(792,0)(793,2)(794,0)(795,1)(796,2)(797,1)(798,1)(799,0)(800,0)(801,2)(802,1)(803,0)(804,0)(805,2)(806,0)(807,1)(808,2)(809,2)(810,2)(811,4)(812,4)(813,5)(814,6)(815,6)(816,8)(817,9)(818,10)(819,11)(820,13)(821,15)(822,17)(823,19)(824,18)(825,20)(826,21)(827,23)(828,25)(829,26)(830,28)(831,28)(832,30)(833,31)(834,33)(835,35)(836,37)(837,38)(838,40)(839,42)(840,44)(841,46)(842,48)(843,50)(844,52)(845,54)(846,56)(847,58)(848,60)(849,58)(850,59)(851,61)(852,61)(853,62)(854,62)(855,62)(856,61)(857,61)(858,62)(859,60)(860,58)(861,60)(862,60)(863,58)(864,57)(865,58)(866,60)(867,59)(868,61)(869,62)(870,62)(871,63)(872,65)(873,63)(874,63)(875,63)(876,61)(877,63)(878,61)(879,59)(880,60)(881,62)(882,62)(883,60)(884,61)(885,62)(886,63)(887,62)(888,62)(889,61)(890,60)(891,60)(892,62)(893,60)(894,61)(895,59)(896,60)(897,62)(898,61)(899,60)(900,58)(901,56)(902,54)(903,52)(904,51)(905,50)(906,49)(907,47)(908,46)(909,45)(910,44)(911,44)(912,43)(913,42)(914,42)(915,44)(916,44)(917,45)(918,43)(919,42)(920,41)(921,40)(922,40)(923,41)(924,41)(925,40)(926,42)(927,44)(928,46)(929,44)(930,42)(931,41)(932,40)(933,42)(934,40)(935,41)(936,41)(937,40)(938,40)(939,42)(940,42)(941,41)(942,40)(943,40)(944,40)(945,41)(946,39)(947,41)(948,39)(949,39)(950,41)(951,43)(952,44)(953,42)(954,41)(955,39)(956,41)(957,42)(958,40)(959,41)(960,40)(961,39)(962,41)(963,43)(964,43)(965,42)(966,40)(967,42)(968,43)(969,43)(970,44)(971,43)(972,45)(973,44)(974,45)(975,47)(976,49)(977,49)(978,49)(979,50)(980,52)(981,52)(982,54)(983,54)(984,56)(985,58)(986,60)(987,61)(988,63)(989,62)(990,63)(991,61)(992,61)(993,62)(994,61)(995,59)(996,59)(997,60)(998,62)(999,62)(1000,63)(1001,63)(1002,62)(1003,62)(1004,60)(1005,62)(1006,62)(1007,60)(1008,60)(1009,62)(1010,61)(1011,62)(1012,61)(1013,60)(1014,59)(1015,60)(1016,62)(1017,61)(1018,62)(1019,64)(1020,63)(1021,64)(1022,62)(1023,63)(1024,63)(1025,62)(1026,62)(1027,63)(1028,62)(1029,62)(1030,63)(1031,62)(1032,62)(1033,61)(1034,61)(1035,62)(1036,63)(1037,62)(1038,62)(1039,64)(1040,66)(1041,68)(1042,66)(1043,66)(1044,66)(1045,68)(1046,68)(1047,70)(1048,71)(1049,73)(1050,73)(1051,73)(1052,74)(1053,74)(1054,76)(1055,76)(1056,78)(1057,80)(1058,79)(1059,77)(1060,79)(1061,79)(1062,81)(1063,83)(1064,85)(1065,87)(1066,88)(1067,87)(1068,88)(1069,90)(1070,92)(1071,92)(1072,94)(1073,93)(1074,93)(1075,93)(1076,93)(1077,93)(1078,91)(1079,92)(1080,94)(1081,92)(1082,92)(1083,93)(1084,91)(1085,93)(1086,93)(1087,92)(1088,93)(1089,94)(1090,92)(1091,93)(1092,94)(1093,94)(1094,92)(1095,92)(1096,90)(1097,91)(1098,90)(1099,89)(1100,88)(1101,89)(1102,91)(1103,92)(1104,92)(1105,94)(1106,93)(1107,92)(1108,94)(1109,96)(1110,97)(1111,99)(1112,99)(1113,101)(1114,103)(1115,102)(1116,104)(1117,105)(1118,107)(1119,109)(1120,111)(1121,113)(1122,113)(1123,111)(1124,110)(1125,112)(1126,114)(1127,114)(1128,112)(1129,112)(1130,111)(1131,109)(1132,108)(1133,107)(1134,107)(1135,105)(1136,104)(1137,105)(1138,103)(1139,102)(1140,100)(1141,98)(1142,97)(1143,95)(1144,93)(1145,92)(1146,90)(1147,88)(1148,86)(1149,84)(1150,82)(1151,80)(1152,79)(1153,77)(1154,75)(1155,75)(1156,73)(1157,71)(1158,69)(1159,65)(1160,60)(1161,55)(1162,50)(1163,45)(1164,40)(1165,35)(1166,30)(1167,25)(1168,20)(1169,15)(1170,13)(1171,13)(1172,11)(1173,9)(1174,8)(1175,7)(1176,5)(1177,3)(1178,3)(1179,1)(1180,1)}; % Output=11.23740074732794

      %\addplot[color=blue,mark=*,mark size=1.3pt,only marks]
      %	coordinates {
      %    (61,32)(143,50)(280,32)(374,35)(841,70)(899, 50)(1031,70)(1066,100)
      %    };
      %    
      %\node at (axis cs:61,32) [anchor=south] {$i_{61}$};
      %\node at (axis cs:143,50) [anchor=south] {$i_{143}$};
      %\node at (axis cs:280,32) [anchor=south] {$i_{280}$};
      %\node at (axis cs:374,35) [anchor=south west] {$i_{374}$};
      %\node at (axis cs:841,70) [anchor=south] {$i_{841}$};
      %\node at (axis cs:899,50) [anchor=north] {$i_{899}$};
      %\node at (axis cs:1031,70) [anchor=south east] {$i_{1031}$};
      %\node at (axis cs:1066,100) [anchor=south east] {$i_{1066}$};
       %
      \end{axis}
      \end{tikzpicture}
      
  \caption{\NEDC speed profile (blue, dashed) and input maximising \NOx{} emissions to $\SI{11}{\milli\gram\per\kilo\meter}$ (red).}
  \label{fig:nox-optimisation}
  \end{center}
\end{figure}

\paragraph{On Doping Tests for Cyber-physical Systems}
\label{sec:testing:integrated}
The proposed probabilistic falsification approach to find instances of software doping needs several hundreds of iterations.
This is problematic for testing real-world cyber-physical systems (CPS) to which inputs cannot be passed in an automated way.
To conduct a test with a car, for example, the input to the system is a test cycle that is passed to the vehicle by driving it.
Notably, we consider here the scenario that the CPS is tested by an entity that is different from the manufacturer. 
While the latter might have tools to overcome these technical challenges, the former typically does not have access to them.

We propose the following \emph{integrated testing approach} for effective doping tests of cyber-physical systems. The big picture is provided in \Cref{fig:testing:integrated}.
In a first step, the CPS is used under real-world conditions without enforcing any specific constraints on the inputs to the system.
For all executions, the inputs and outputs are recorded.
So, essentially, the system can be used as it is needed by the user, but all interactions with it are recorded.
From these recordings, a \emph{model} can be learned that for arbitrary inputs (whether they were covered in the recorded data or not) predicts the output of the system.
Such learning can be as simple as using statistics as we did for the emissions example above, or as complex as using deep neural nets.
For the learned model, the probabilistic falsification algorithm computes a test input that falsifies it -- inputs to this model can be passed automatically and an output is produced almost instantly.
The resulting input serves as an input for the real CPS. 
If the prediction was correct, also the real system is falsified.
If it was incorrect, the learned model can be refined and the process starts again.
\begin{figure}
  \centering
  \includegraphics[width=8.5cm]{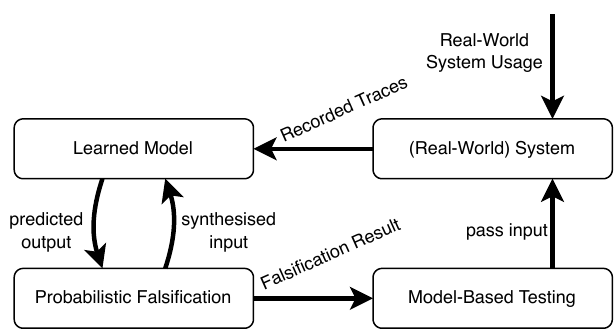}
  \caption{Integrated testing approach}
  \label{fig:testing:integrated}
\end{figure}

For diesel emissions, the first part of this integrated testing approach has been carried out as part of the work reported in this article.
We leave the second part -- evaluating the generated test traces from \Cref{fig:nox-falsification,fig:nox-optimisation} with a real car -- for future work.

\paragraph{Technical Context}
Software doping theory provides a formal basis for  enlarging the requirements on vehicle exhaust emissions beyond too narrow lab test conditions. That conceptual limitation has by now been addressed by the official authorities responsible for car type approval~\cite{LEX:32017R1151,TUTUIANU201561}:
The old \NEDC-based test procedure is replaced by the newer \emph{Worldwide Harmonised Light Vehicles Test Procedure} (\WLTP), which is deemed to be more realistic. \WLTP replaces the \NEDC test by a new \WLTC test, but \WLTC still is just a single test scenario.
In addition, \WLTP embraces so called  \emph{Real Driving Emissions} (\RDE) tests to be conducted on public roads. A recently launched mobile phone app~\cite{DBLP:conf/tacas/BiewerFHKSS21,C3OTHER:STTT-toappear}, \LolaDrives, harvests runtime monitoring technology for making low-cost \RDE tests accessible to everyone.

Learning or approximating the behaviour of a system under test has been studied intensively.
Meinke and Sindhu~\cite{DBLP:conf/tap/MeinkeS11} were among the first to present a testing approach incrementally learning a Kripke structure representing a reactive system.
Volpato and Tretmans~\cite{DBLP:journals/eceasst/VolpatoT15} propose a learning approach which gradually refines an under- and over-approximation of an input-output transition system representing the system under test.
The correctness of this approach needs several assumptions, e.g., an oracle indicating when, for some trace, all outputs, which extend the trace to a valid system trace, have been observed.

\section{Individual Fairness of Systems Evaluating Humans}\label{sec:fairness}
\Cref{ex:hrWomanIntro} introduces a new application domain for cleanness definitions.
\hrwoman uses an AI system that is supposed to assist her with the selection of applicants for a hypothetical university.
Cleanness of such a system can be related to the fair treatment of the humans that are evaluated by it.
A usable fairness analysis can happen no later than at runtime, since \hrwoman needs to make a timely decision on whether to include the applicant in further considerations.
We describe technical measures that help in mitigating this challenge by providing her with information from an individual fairness analysis in a suitable, purposeful, expedient way.
To this end, we propose a formal definition for individual fairness extending the one by~\cite{dwork2012fairness} and 
based on \fCleannessNDet. We develop a runtime monitor that analyses every output of $\seqSys$ immediately after $\seqSys$'s decision, which strategically searches for unfair treatment of a particular individual by comparing them to relevant hypothetical alternative individuals so as to provide a fairness assessment in a timely manner.

Much like $\seqSys$ is to support \hrwoman, AI systems -- in the broadest sense of the word -- more and more often support human decision makers.
Undoubtedly, such systems should be compliant with applicable law (such as the future European AI Act \cite{eu-0106-2021,ai-act-amendments} or the Washington State facial recognition law \cite{washstatefacerecbill}) and ought to minimise any risks to health, safety or fundamental rights. Sometimes, we cannot mitigate all these risks in advance by technical measures and also some risk-mitigation requires trade-off decisions involving features that are either impossible or difficult to operationalise and formalise. This is why it is essential that a human effectively oversees the system (which is also emphasised by several institutions such as UNESCO \cite{aiunesco} and the European High Level Expert Group \cite{aihleg}). 
\emph{Effective} human oversight, however, is only possible with the appropriate technical measures that allow human overseers to better understand the system at runtime~\cite{DBLP:journals/ai/LangerOSHKSSB21}. 
From a technical point of view, this raises the pressing question of what such technical measures can and ought to look like to actually enable humans to live up to these responsibilities. 
Our contribution is intended to bridge the gap between the normative expectations of law and society and the current reality of technological design.

\subsection{Positioning within Related Research Topics}

Our contribution draws on and adds to three vibrant topics of current research, namely
Explainable AI (XAI), AI fairness, and discrimination.

\paragraph{XAI}
Many of the most successful AI systems today are some kind of black boxes \cite{bathaee2017artificial}.
Accordingly, the field of \enquote{Explainable AI}~\cite{DarpaXAI2016} focuses on the question of how to provide users (and possibly other stakeholders) with more information via several key perspicuity properties~\cite{DBLP:conf/re/SterzBLH21} of these systems and their outputs to make them understand these systems and their outputs in ways necessary to meet various desiderata~\cite{doi:10.1177/2053951716679679,e23010018,DBLP:conf/re/LangerBHHSW21,arrieta2020explainable,nunes2017systematic,9604587}.  
The concrete expectations and promises associated with various XAI methods are manifold. 
Among them are enabling warranted trust in systems~\cite{10.1145/2939672.2939778, DBLP:journals/chb/SchlickerLOBKW21, jacovi2021formalizing, DBLP:conf/re/KastnerLLSSS21}, increasing human-system decision-making performance~\cite{lai2019human} for instance through increasing human situation awareness when operating systems~\cite{sanneman2020situation}, enabling responsible decision-making and effective human oversight~\cite{Baumetal-BAUFRT-2, Mecacci2020-MECMHC, 10.3389/frobt.2018.00015}, as well as identifying and reducing discrimination~\cite{e23010018}. 
It often remains unclear what kind of explanations are generated by the various  explainability methods  and how they are meant to contribute to the fulfilment of the desiderata, even though these questions have become the subject of systematic and interdisciplinary research~\cite{DBLP:journals/ai/LangerOSHKSSB21,DBLP:conf/re/LangerBHHSW21}.

Our approach can be taxonomised along at least two different distinctions~\cite{10.1145/2939672.2939778, DBLP:journals/corr/RibeiroSG16a, DBLP:conf/pkdd/MolnarCB20, DBLP:journals/ai/LangerOSHKSSB21, 10.1145/3531146.3534639}: First, it is \textit{model-agnostic} (not \textit{model-specific}), i.e., it is not tailored to a particular class of models but operates on observable behaviour -- the inputs and outputs of the model. Second, our method is a \textit{local method} (not \textit{global}), i.e., it is meant to shed light on certain outputs rather than the system as a whole. \vspace*{-0.5em}

\paragraph{(Un-)Fair Models}

Fairness, discrimination, justice, equal opportunity, bias, prejudice, and many more such concepts are part of a meaningfully interrelated cluster that has been analysed and dissected for millennia~\cite{Aristot_ne, Aristot_pol}. Many fields are traditionally concerned with the concepts of fairness and discrimination, ranging from philosophy \cite{Aristot_ne, Aristot_pol, 10.2307/2265047,10.1145/3433949,rawls1999theory,10.2307/2265349, rawls2001justice} 
to legal sciences~\cite{Borgesius2020, wachter2020bias, hartmann, Thuesing}, 
to psychology~\cite{hough2001determinants, ziegert2005employment}, 
to sociology~\cite{alves1978should,jewson1986modes}, 
to political theory~\cite{rawls1999theory}, 
to economics~\cite{10.2307/42919257}. 
Nowadays, it has also become a technological topic that calls for cross-disciplinary perspectives~\cite{9445793}.

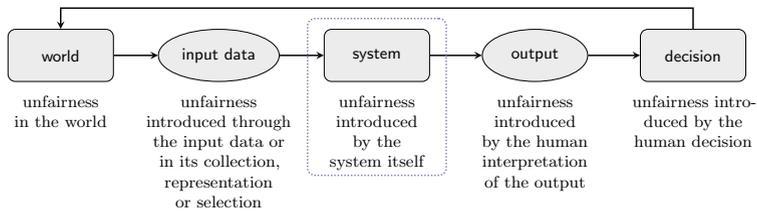
\begin{figure}[t]
    \centering
    \resizebox{0.9\linewidth}{!}{
      \tikzstyle{rectang} = [rectangle, rounded corners, minimum width=2cm, minimum height=1cm,text centered, draw=black, fill=gray!15, font=\sffamily]
\tikzstyle{round} = [ellipse, minimum width=2cm, minimum height=1cm,text centered, draw=black, fill=gray!15, font=\sffamily]
\tikzstyle{myplain} = [draw=none, fill=none, minimum width=2cm, text width=2.5cm, text depth=1.2cm, text centered]
\tikzstyle{arrow} = [thick,->,>=stealth]

\begin{tikzpicture}[node distance=3cm] \small
\node (world)       [rectang]                   {world};
\node (input)       [round, right of=world]     {input data};
\node (system)      [rectang, right of=input]   {system};
%\node (training)    [round, dashed, above=0.4cm of system, text width=1cm]    {training data};
\node (output)      [round, right of=system]    {output};
\node (decision)    [rectang, right of=output]  {decision};

\draw [arrow] (world) -- (input);
\draw [arrow] (input) -- (system);
\draw [arrow] (system) -- (output);
\draw [arrow] (output) -- (decision);

%\draw [arrow, dashed] (training) -- (system);
%\draw [arrow, dashed] (world) -- (training.west);

\draw [arrow] (decision.north) -- ++(0,+0.4) -|  (world.north);

\node (worldEx)     [myplain, below=0.65cm of world.center] {unfairness in the world};
\node (inputEx)     [myplain, text width=2.75cm, below=0.65cm of input.center] {unfairness introduced through the input data or in its collection, representation or selection};
\node (systemEx)    [myplain, text width=2cm, text depth=1.0cm, below=0.65cm of system.center] {unfairness introduced by the system itself};
\node (outputEx)    [myplain, text width=2.75cm, below=0.65cm of output.center] {unfairness introduced by the human interpretation of the output};
\node (decisionEx)  [myplain, below=0.65cm of decision.center] {unfairness introduced by the human decision};

\node (rfmBox) [draw, thick, blue!40!black!50, densely dotted, rounded corners, fit= (system) (systemEx), inner sep = 2mm] {};
%\node (rfmLabel) [align=right, text=blue!40!black!50, font=\sffamily\footnotesize, rotate=90, text width=4cm, anchor=north, below=1cm of rfmBox.north east] {our fairness monitoring};

\end{tikzpicture}
    }
    \vspace{2em}
    \caption{Sketch of different origins of unfairness in a decision process supported by a system; the dotted box indicates which unfairness our monitoring targets.}
    \label{fig:pipeline}
\end{figure}

With regard to fairness, there are two distinctions that are especially relevant to our work. 
First, one distinction is made between \emph{individual fairness}, i.e., that similar individuals are treated similarly~\cite{dwork2012fairness}, and \emph{group fairness}, i.e., that there is adequate group parity~\cite{10.1145/3351095.3372864}. 
Measures of individual fairness 
are often
close to the Aristotelian dictum to treat like cases alike \cite{Aristot_ne, Aristot_pol}. 
In a sense, operationalisations of individual fairness are robustness measures~\cite{DBLP:conf/emsoft/TabuadaBCSM12,DBLP:journals/acta/BloemCGHHJKK14}, but 
instead of requiring robustness with respect to noise or adversarial attacks, measures of individual fairness, such as the one by Dwork et al.~\cite{dwork2012fairness}, call for robustness with respect to highly context-dependent differences between representations of human individuals.
Second, 
recent work from the field of law~\cite{wachter2020bias}
suggests to differentiate between \emph{bias preserving} and \emph{bias transforming} fairness metrics.
Bias preserving fairness metrics seek to avoid adding new bias. For such metrics, historic performances are the benchmarks for models, with equivalent error rates for each group being a constraint.
In contrast, bias transforming metrics do not accept existing bias as a given or neutral starting point, but aim at adjustment. Therefore, 
they require to make a \enquote{positive normative choice}~\cite{wachter2020bias}, 
i.e. to actively decide which biases the system is allowed to exhibit, and which it must not exhibit.

Over the years, many concrete approaches have been suggested to foster different kinds of fairness in artificial systems, especially in AI-based ones~\cite{mehrabi, e23010018, wachter2020bias, zehlike,10.1145/3494672}. 
Yet, to the best of our knowledge, an approach like ours is still missing.
One of the approaches that is closest to ours,
namely that by John et al.~\cite{john}, is not local and therefore not suitable for runtime monitoring. Also, it is not model-agnostic. So, to the best of our knowledge, our approach  
provides a new contribution to the debate on unfairness detection. 

It is important to note/recognise that our approach can only be understood as part of a more holistic approach to preventing or reducing unfairness. After all, there are many sources of unfairness \cite{barocas2016big} (also see \autoref{fig:pipeline} and \autoref{app:pipeline_full}). 
Therefore, not every technical measure is able to detect every kind of unfairness and eliminating one source of unfairness might not be sufficient to eliminate all unfairness. Our approach tackles only unfairness introduced by the system, but not other kinds of unfairness.  

\paragraph{Discrimination}
We understand discrimination as dissimilar treatment of similar cases 
or similar treatment of dissimilar cases without justifying reason.
This is a definition that can also be found in the law \cite[§43]{ecj-c-356-12}.
Our work is exclusively focused on 
discrimination \emph{qua} dissimilar treatment of similar cases.
Discrimination requires a thoughtful and largely not formalisable consideration of \enquote{justifying reason}. 
However, we will exploit the relation of discrimination and fairness: 
Unfairness in a system can arguably be a good proxy of discrimination --
even though not every unfair treatment by a system necessarily constitutes discrimination (especially not in the legal sense).
Thus, a tool that highlights cases of unfairness in a system can be highly instrumental in detecting discriminatory
features of a system. It is not viable, though, to let such a tool rule out unfair treatment fully automatically without human oversight, since there could be justifying reason to treat two similar inputs in a dissimilar way.

\subsection{Individual Fairness}

\hrwoman from Example \ref{ex:hrWomanIntro} should be able to detect individual unfairness. 
An operationalisation thereof by Dwork et al.~\cite{dwork2012fairness} 
is based on
the Lipschitz condition 
to enforce that similar individuals are treated similarly. 
To measure similarity, they
assume the existence of an input distance function $\dIn$ and an output distance function $\dOut$.
This assumption is very similar to the one that we implicitly made in the previous sections for \robustCleannessNDet and \fCleannessNDet.
However, in the case of the fair treatment of humans finding reasonable distance functions is more challenging than it was for the examples in the previous chapters.
Dwork et al.\ assume that both distance functions perfectly measure distances between individuals\footnote{For easier readability, we will not distinguish between \emph{individuals} and their \emph{representations} unless this distinction is relevant in the specific context. It is nevertheless important to note that inputs are not individuals, but only representations of individuals, since an input could inadequately represent an individual and therefore be unfair (also see \autoref{app:pipeline_full}).} and between outputs of the system, respectively, but admit that in practice these distance functions are only approximations of a ground truth at best.
They suggest that distance measures might be learned, 
but there is no one-size-fits-all approach to selecting distance measures. 
Indeed, obtaining such distance metrics is a topic of active research~\cite{zemel2013learning,pmlr-v119-mukherjee20a,ilvento2019metric}.
Additionally, the Lipschitz condition assumes a Lipschitz constant $L$ to establish a linear constraint between input and output distances.

\begin{definition}
    A deterministic sequential program $\seqSys: \Inputs \to \Outputs$ is \emph{Lipschitz-fair} w.r.t. 
    %distance functions 
    $\dIn: \Inputs \times \Inputs \to \RR$, $\dOut: \Outputs \times \Outputs \to \RR$, and a Lipschitz constant $L$,
  if and only if for all $\inp_1, \inp_2 \in \Inputs$, $\dOut(\seqSys(\inp_1), P(\inp_2)) \leq L \cdot \dIn(\inp_1, \inp_2)$.
\end{definition}

Lipschitz-fairness comes with some restrictions that limit its suitability for practical application:

\begin{description}
  \item[$\dIn$-$\dOut$-relation:] High-risk systems are typically complex systems and ask for more complex fairness constraints than the linearly bounded output distances provided by the Lipschitz condition. 
  For example, using the Lipschitz condition prevents us from allowing small local 
  jumps in the output and at the same time forbidding jumps of the same rate of increase over larger ranges of the input space (also see supplementary material in \Cref{app:fbetterL}).

  \item[Input relevance:] 
  The condition quantifies over the entire input domain of a program. This overlooks two things: first, it is questionable whether each input in such a domain is plausible as a representation for a real-world individual. But whether a system is unfair for two implausible and purely hypothetical inputs is largely irrelevant in practice. 
Secondly, it also ignores that mere potential unfair treatment is at most a threat, not necessarily already a harm~\cite{Rowe2022-ROWCAR-4}. Therefore, even with a restriction to only plausible applicants, the analysis might take into account more inputs than needed for many real-world applications. 
What is important in practice is the ability to determine whether \emph{actual} applicants are treated unfairly -- and for this it is often not needed to look at the entire input domain.
  
 \item[Monitorability:]  In a monitoring scenario with the Lipschitz condition in place, a fixed input $\inp_1$ must be compared to potentially all other inputs $\inp_2$.
  Since the input domain of the system can be arbitrarily large, the Lipschitz condition is not yet suitable for monitoring in practice
  (for a related point see John et al.~\cite{john}). 
\end{description}

\paragraph{}
We propose a notion of individual fairness that is based on 
\Cref{def:clean:seq:f:det}.
Instead of cleanness contracts we consider here \emph{fairness contracts}, which are tuples $\fContract = \langle d_\Inputs, d_\Outputs, f \rangle$ containing input and output distance functions and the function $f$ relating input distances and output distances.
Notably, the set of standard inputs $\Norm$ known from cleanness contracts is not part of a fairness contract; it is unknown what qualifies an input to be `standard' in the context of fairness analyses.
Still, our fairness definition evaluates fairness for a set  of individuals $\Is\subseteq \Inputs$ (e.g., a set of applicants), which has conceptual similarities to the set $\Norm$.
A \fairCon specifies certain fairness parameters for a concrete context or situation. 
Such parameters should generally not already include $\Is$ to avoid introducing new unfairness through the monitor by tailoring it to specific inputs individually or by treating certain inputs differently from others. \FfairnessC can thus be defined as follows:

\begin{definition}\label{def:ffairC}
    A deterministic sequential program $\seqSys: \Inputs \rightarrow \Outputs$ is \emph{\ffairC} for a set $\Is \subseteq \Inputs$ of actual inputs w.r.t.
    a  \fairCon $\fContract = \langle d_\Inputs, d_\Outputs, f \rangle$, 
    if and only if 
    for every $\inp \in \Is$ and $\inp' \in \Inputs$, $\dOut(\seqSys(\inp), \seqSys(\inp')) \leq f(\dIn(\inp,\inp'))$.
\end{definition}

The idea behind \ffairnessC is 
that every individual in set $\Is$ is compared to potential other inputs in the domain of $\seqSys$.
These other inputs do not necessarily need to be in $\Is$, nor do these inputs need to have ``physical counterparts'' in the real world.
Driven by the insights of the 
\textit{Input relevance}
restriction of Lipschitz-fairness,
we explicitly distinguish 
inputs in the following
and will call inputs that are given to $\seqSys$ by a user \emph{actual inputs}, denoted
$\realInp$, and call inputs 
to which such $\realInp$ are compared to \emph{synthetic inputs},
denoted $\synInp$.
Actual inputs are typically\footnote{A case where actual inputs might not have real-world counterparts is testing.} inputs that have a real-world counterpart, 
while this might or might not be true for synthetic inputs. 
On first glance, an alternative to using synthetic inputs is to use only actual inputs, e.g., to compare every actual input with every other actual input in $\Is$. 
For example, for a university admission, all applicants could be compared to every other applicant.
However, this would heavily rely on contingencies: the detection of unfair treatment of an applicant depends on whether they were lucky enough that, coincidentally, another candidate has also applied who aids in unveiling the system's unfairness towards them.
Instead, \ffairnessC prefers to over-approximate the set of plausible inputs that actual inputs are compared to rather than under-approximating it by comparing only to other inputs in $\Is$. 
This way, the attention of the human exercising oversight of the system might be drawn to cases that are actually not unfair, but as a competent human in the loop, they will most likely be able to judge that the input was compared to an implausible counterpart. This will usually enable more effective human oversight than an under-approximation that misses to alert the human to unfair cases.

Notice that \ffairnessC is a conservative extension of Lipschitz-fairness. 
With $\Is = \Inputs$ and $f(x) = L \cdot x$, \ffairnessC mimics Lipschitz-fairness. 
Wachter et al.~\cite{wachter2020bias} classify the Lipschitz-fairness of Dwork et al.~\cite{dwork2012fairness} as bias-transforming. As we generalise this and introduce no element that has to be regarded as bias-preserving, our approach arguably is bias-transforming, too.

\FfairnessC, with its function $f$, provides a powerful tool to model complex fairness constraints.
How such an $f$ is defined has profound impact on the quality of the fairness analysis.
A full discussion about which types of functions make a good $f$ go beyond the scope of this article.
A suitable choice for $f$ and the distance functions $\dIn$ and $\dOut$ heavily depends on the context in which fairness is analysed -- there is no one-fits-it-all solution. 
\FfairnessC makes this explicit with the formal \fairCon $\fContract = \langle d_\Inputs, d_\Outputs, f \rangle$.

\subsection{Fairness Monitoring}

\newcommand{\AlgInpSet}{\Is}
\begin{algorithm}[t]
  \caption{\fMinimiser, \\with $\robmin\ S = (\rob, \inp_1, \inp_2)$ only if $(\rob, \inp_1, \inp_2) \in S$ and for all $(\rob', \inp'_1, \inp'_2) \in S$, $\rob' \geq \rob$}
\label{algo:fMinimiser}
\begin{algorithmic}[1]
\item[\textbf{Falsification Parameters:}] $\PS$: Proposal scheme, $\beta$: Temperature parameter
\REQUIRE \Aisystem $P: \Inputs \rightarrow \Outputs$, \FairCon $\fContract = \langle d_\Inputs, d_\Outputs, f \rangle$, and 
set of actual inputs $\AlgInpSet$
\ENSURE A minimal fairness score triple from 
$\RR \times \AlgInpSet \times \Inputs$.
\STATE $\synInp \leftarrow$ any input $\realInp \in \AlgInpSet$
\STATE $(\rob, \inp_\mathsf{min}, \synInp) \leftarrow \robmin \set{(F(\realInp,\synInp), \realInp,\synInp) \ | \ \realInp\in\AlgInpSet}$
\STATE $(\rob_\mathsf{min}, \inp_1, \inp_2) \leftarrow (\rob, \inp_\mathsf{min}, \synInp)$
\WHILE {\NOT timeout}
  \STATE $\synInp' \leftarrow \PS(\synInp,\seqSys(\synInp))$
  \STATE $(\rob', \inp'_\mathsf{min}, \synInp') \leftarrow \robmin \set{(F(\realInp,\synInp'), \realInp, \synInp') \ | \ \realInp\in\AlgInpSet}$  
  \STATE $(\rob_\mathsf{min}, \inp_1, \inp_2) \leftarrow \robmin \set{ (\rob_\mathsf{min}, \inp_1, \inp_2), (\rob', \inp'_\mathsf{min}, \synInp') }$
  \STATE $\alpha \leftarrow \exp(-\beta(\rob' - \rob))$
  \STATE $r \leftarrow \mathsf{UniformRandomReal}(0,1)$
  \IF{$r \leq \alpha$}
    \STATE $\synInp \leftarrow \synInp'$
    \STATE $\rob \leftarrow \rob'$
  \ENDIF
\ENDWHILE
\RETURN $(\rob_\mathsf{min}, \inp_1, \inp_2)$ 
\end{algorithmic}
\end{algorithm}

We develop a probabilistic-falsification-based fairness monitor that, given a set of actual inputs, searches for a synthetic counterexample to falsify a system $\seqSys$ w.r.t. a \fairCon \fContract.
To this end, it is necessary to provide a quantitative description of \ffairnessC that satisfies the characteristics of a robustness estimate.
We call this description \emph{fairness score}.
For an actual input $\realInp$ and a synthetic input $\synInp$ we define the fairness score
as $F(\realInp, \synInp) \coloneqq f(d_\Inputs(\realInp, \synInp)) - d_\Outputs(\seqSys(\realInp),\seqSys(\synInp))$.
$F$ is indeed a robustness estimate function: if $F(\realInp, \synInp)$ is non-negative, then $d_\Outputs(\seqSys(\realInp),\seqSys(\synInp)) \leq f(d_\Inputs(\realInp, \synInp))$, and if it is negative, then $d_\Outputs(\seqSys(\realInp),\seqSys(\synInp)) \not\leq f(d_\Inputs(\realInp, \synInp))$.
For a set of actual inputs $\Is$, the definition generalises to 
$F(\Is, \synInp) \coloneqq \min \set{ F(\realInp, \synInp) \mid \realInp\in\Is }$, i.e., the overall fairness score is the minimum of the concrete fairness scores of the inputs in $\Is$.
Notice that $\mathcal{R}_\Is(\synInp) \coloneqq F(\Is, \synInp)$ is essentially the quantitative interpretation of $\varphi_{\textsf{u-func}}$ (from \Cref{prop:hyperstl:STLfCleannessCorrectness}) after simplifications attributed to the fact that \seqSys is a sequential and deterministic program (cf. \Cref{def:clean:react:f}.\ref{def:clean:react:f:U} vs. \Cref{def:clean:seq:f:det}).

Algorithm~\ref{algo:fMinimiser} shows \fMinimiser, which builds on  Algorithm~\ref{algo:falsification} to search for the minimal fairness score in a \aisystem $\seqSys$ for \fairCon $\fContract$.
The algorithm stores fairness scores in triples that also contain the two inputs for which the fairness score was computed. 
The minimum in a set of such triples is defined by the function $\robmin$ that returns the triple with the smallest fairness score of all triples in the set.
The first line of \fMinimiser initialises the variable $\synInp$ with an arbitrary actual input from $\AlgInpSet$.
For this value of $\synInp$, the algorithm checks the corresponding fairness scores for all actual inputs $\realInp \in \Is$ and stores the smallest one.
In line 3, the globally smallest fairness score triple is initialised.
In line 5 it uses the proposal scheme to get the next synthetic input $\synInp'$.
Line 6 is similar to line 2: for the newly proposed $\synInp'$ it finds the smallest fairness score, stores it, and updates the global minimum if it found a smaller fairness score (line 7).
Lines 8-13 come from Algorithm~\ref{algo:falsification}.
The only difference is that in addition to $\synInp$ we also store the fairness score $\rob$.
Line~4 of \Cref{algo:fMinimiser} differs from \Cref{algo:falsification} by terminating the falsification process after a timeout occurs (similar to the adaptation of \Cref{algo:falsification} in \Cref{sec:dieselSupervision}).
Hence, the algorithm does not (exclusively) aim to falsify the fairness property, but aims at minimising the fairness score;  even if the fair treatment of the inputs in $\Is$ cannot be falsified in a reasonable amount of time, we still learn how robustly they are treated fairly, i.e., how far the least fairly treated individual in $\Is$ is away from being treated unfairly.
After the timeout occurs, the algorithm returns the triple with the overall smallest seen fairness score $\rob_\mathsf{min}$, together with the actual input $\inp_1$ and the synthetic input $\inp_2$ for which  $\rob_\mathsf{min}$ was found. 
In case $\rob_\mathsf{min}$ is negative, $\inp_2$ is a counterexample for $\seqSys$ being \ffairC.

\fMinimiser implements a sound
\fContract-unfairness detection as stated in \Cref{prop:fMinimiserIsSound}.
However, it is not complete, i.e., it is not generally the case that $\seqSys$ is \ffairC for $\Is$ if $\rob$ is positive. 
It may happen that there is a counterexample, but \fMinimiser did not succeed in finding it before the timeout.
This is analogue to results obtained for model-agnostic \robustCleannessNDet analysis~\cite{DBLP:journals/tomacs/BiewerDH21}.

\begin{proposition}
\label{prop:fMinimiserIsSound}
Let $\seqSys: \Inputs \rightarrow \Outputs$ be a deterministic sequential program, $\fContract = \langle d_\Inputs, d_\Outputs, f \rangle$ a \fairCon, and $\Is$ a set of actual inputs. Further, let $(\rob_\mathsf{min}, \inp_1, \inp_2)$ be the result of $\fMinimiser(\seqSys, \fContract, \Is)$.
If $\rob_\mathsf{min}$ is negative, then $\seqSys$ is not \ffairC for $\Is$ w.r.t. \fContract.
\end{proposition}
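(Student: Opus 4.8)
The plan is to show that \Cref{algo:fMinimiser} never reports a score that is not backed by a concrete witness pair. Concretely, I would establish the loop invariant that the global minimum triple $(\rob_\mathsf{min}, \inp_1, \inp_2)$ maintained by the algorithm always satisfies $\rob_\mathsf{min} = F(\inp_1, \inp_2)$ together with the domain memberships $\inp_1 \in \Is$ and $\inp_2 \in \Inputs$. Once this invariant is available for the returned triple, the proposition follows immediately: a negative $\rob_\mathsf{min}$ unfolds, by definition of $F$, to $\dOut(\seqSys(\inp_1), \seqSys(\inp_2)) > f(\dIn(\inp_1, \inp_2))$ with $\inp_1 \in \Is$ and $\inp_2 \in \Inputs$, which is exactly a violation of the universally quantified inequality in \Cref{def:ffairC}. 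Hence the pair $(\inp_1, \inp_2)$ witnesses that $\seqSys$ is not \ffairC for $\Is$ w.r.t. \fContract.

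The invariant itself I would prove by induction over the execution. The crucial observation is that every triple ever placed into a set handed to $\robmin$ is \emph{consistent}, in the sense that its first coordinate equals $F$ applied to its second and third coordinates. This holds for the triples built in line~2 and line~6, since they are literally of the form $(F(\realInp, \cdot), \realInp, \cdot)$; it is preserved by $\robmin$, which by its defining property returns an \emph{element} of the set it is given and therefore never fabricates a fresh, inconsistent triple; and in line~7 both arguments of $\robmin$ are consistent, the second by the line~6 argument and the first by the induction hypothesis (initialised at line~3). The domain memberships are equally routine: second coordinates always range over $\realInp \in \Is$, while third coordinates are either the initial synthetic input (an actual input, hence in $\Is \subseteq \Inputs$) or a value produced by the proposal scheme $\PS$, which ranges over $\Inputs$.

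The only point that deserves care, and the step I expect to be the main (if modest) obstacle, is the handling of $\robmin$. Because the triples are carried around opaquely, one must argue that $\robmin$ never returns a triple whose numeric first coordinate is inconsistent with the actual fairness score of the accompanying input pair; this is precisely why the \enquote{only if} clause in the definition of $\robmin$, namely that its output is a member of its input set, is needed, rather than merely that the output attains the minimum value. With the consistency and membership invariants in hand, no quantitative reasoning about the acceptance--rejection mechanism in lines~8--13 is required: those lines only affect the exploratory variable $\synInp$ and the book-keeping value $\rob$, never the returned global minimum, so soundness is independent of how the Markov chain explores the input space. This also explains, in passing, why completeness fails, as the chain may simply never visit a genuine counterexample before the timeout.
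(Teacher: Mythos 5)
Your proof is correct, and it follows exactly the reasoning the paper relies on: the paper itself gives no explicit proof of \Cref{prop:fMinimiserIsSound} (it is one of the results whose proof is omitted from the appendix), treating it as immediate from the fact that $F$ is a robustness estimate and that \fMinimiser only ever returns a score paired with the concrete inputs that produced it. Your loop invariant --- that the stored triple always satisfies $\rob_\mathsf{min} = F(\inp_1,\inp_2)$ with $\inp_1 \in \Is$, $\inp_2 \in \Inputs$, preserved because $\robmin$ returns a \emph{member} of its argument set and lines~8--13 never touch the global minimum --- is precisely that implicit argument made rigorous.
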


Moreover, \fMinimiser circumvents major restrictions of the Lipschitz-fairness:
\begin{description}
  \item[$\dIn$-$\dOut$-relation:]
\FfairnessC defines constraints between input and output distances by means of a function $f$, which allows to express also complex fairness constraints.
For a more elaborate discussion, see \Cref{app:fbetterL}.
  \item[Input relevance:] 
  \FfairnessC explicitly distinguishes between actual and synthetic inputs.
  This way, \ffairnessC acknowledges a possible obstacle of the fairness theory when it comes to a real-world usage of the analysis, 
  namely that 
  only some elements of the system's input domain might be plausible and that usually only few of them become actual inputs that have to be monitored for unfairness.
  \item[Monitorability:] \fMinimiser demonstrates that \ffairnessC is monitorable.
  It resolves the quantification over $\Inputs$ using the above concepts from probabilistic falsification using the robustness estimate function $F$ as defined above.
\end{description}

\paragraph{Towards \ffairnessC in the loop}

\begin{figure}
    \centering
    \begin{minipage}{6.75cm}
      \resizebox{1\linewidth}{!}{
        \tikzstyle{rectang} = [rectangle,text centered, draw=black, fill=gray!15, minimum width=1.3cm, minimum height=0.8cm]
\tikzstyle{invisible} = [inner sep=0pt, outer sep=0pt]

\begin{tikzpicture}[node distance=1cm] \small

\node (progr) [rectang] {$P$};
\node (minF)  [rectang, below=1.4cm of progr.west, anchor=west] {\fMinimiser};
\node (input) [left=1cm of progr.west] {input}; %{$j$};
\node (outputP) [right=1.7cm of progr.east, anchor=west] {output of $P$}; %{$P(j)$};
\node (outputFS) [below=1.2cm of outputP.west, anchor=west] {fairness score}; %{$fs$};
\node (outputCE) [below=1.6cm of outputP.west, anchor=west] {(counter)example}; %{$(i, P(i))$};

\node (progrAnchorLeft) [invisible, left=0.1cm of progr.south] {};
\node (progrAnchorRight) [invisible, right=0.1cm of progr.south] {};
\node (auxInput) [invisible, left=0.3cm of progr.west] {};
\node (auxOutput) [invisible, right=0.3cm of progr.east] {};

\draw [<-, semithick] (progrAnchorLeft.center) to [out=235,in=125] (progrAnchorLeft.center |- minF.north);% node[midway, left] {}; %{$\cdot$};
\draw [->, semithick] (progrAnchorRight.center)  to [out=305,in=55] (progrAnchorRight.center |- minF.north); %node[midway, right] {}; %{$P(\cdot)$};

%\draw [<-, semithick] (progrAnchorLeft.center) -- (progrAnchorLeft.center |- minF.north) node[midway, left] {};
%\draw [->, semithick] (progrAnchorRight.center) -- (progrAnchorRight.center |- minF.north) node[midway, right, align=left] {search for\\ counterexample};

\draw[->, semithick] (input)--(progr);
\draw[->, semithick] (auxInput.center) |- (minF.west);

\draw [<-, semithick] (outputFS.west) -- (outputFS.west -| minF.east) {};
\draw [<-, semithick] (outputCE.west) -- (outputCE.west -| minF.east) {};

\draw[->, semithick] (progr) -- (outputP);
%\draw [->, semithick] (auxOutput.center) -- (auxOutput.center |- minF.north) {};

\node (box) [rectangle, draw, inner sep=0.3cm, fit=(progr) (minF) (auxInput)] {};
\node [above=0cm of box] {\fWrapper};
\end{tikzpicture}
      }
      \end{minipage}
      \caption{Schematic visualisation of \fWrapper}%
      \label{fig:flow}
    \end{figure}

    \begin{algorithm}[t]
        \caption{\fWrapper}
      \label{algo:fWrapper}
      \begin{algorithmic}[1]
      \item[\textbf{Parameters:}] \Aisystem $\seqSys: \Inputs \rightarrow \Outputs$, \FairCon $\fContract = \langle d_\Inputs, d_\Outputs, f \rangle$
      \REQUIRE Input $\realInp \in \Inputs$
      \ENSURE Tuple of the system output, normalised fairness score, and synthetic values witnessing the fairness score 
      \STATE $(\rob_\mathsf{min}, \realInp, \synInp) \leftarrow \fMinimiser(\seqSys,\fContract,\set{\realInp})$
      \RETURN $\left(\seqSys(\realInp), \rob_\mathsf{min} \div f(d_\Inputs(\realInp, \synInp)), \left(\synInp, \seqSys(\synInp)\right)\right)$  
      \end{algorithmic}
      \end{algorithm}

    If a high-risk \aisystem is in operation, a human in the loop must oversee the correct and fair functioning of the outputs of the system. To do this, the human needs real-time fairness information.
    \Cref{fig:flow} shows how this can be achieved by coupling the system $\seqSys$ and the \fMinimiser in \Cref{algo:fMinimiser} in a new system called \fWrapper.
    \fWrapper is sketched in \Cref{algo:fWrapper}.
    Intuitively, the \fWrapper is a higher-order program that is parameterised with the original program $\seqSys$ and the \fairCon~\fContract.
    When instantiated with these parameters, the program takes arbitrary (actual) inputs $\realInp$ from $\Inputs$.
    In the first step, it does a fairness analysis using \fMinimiser
    with arguments \seqSys, \fContract, and $\set{\realInp}$.
    To make fairness scores comparable, \fWrapper normalises the fairness score $\rob$ received from \fMinimiser by dividing\footnote{\label{ftnote:divisionByZero}For $f$ that can return $0$, there may be a $0 \div 0$ division. The result of this division should be defined depending on the concrete context; 
    reasonable values range from the extreme scores $0$ (to indicate that the score is on the edge to becoming `unfair') to $1$ (to indicate that more fairness is impossible).} it by the output distance limit $f(d_\Inputs(\realInp, \synInp))$.
    For fair outputs, the score will be between 0 (almost unfair) and 1 (as fair as possible).\footnote{Fairness may be a vague concept that cannot be dichotomised. By its choice of the \fairCon parameters, our approach nevertheless specifies a (non-arbitrary) cut-off point at 0; but it does so for purely instrumental and non-ontological reasons.}
    Outputs that are not \ffairC are accompanied by a negative score representing how much the limit $f(d_\Inputs(\realInp, \synInp))$ is exceeded.
    A fairness score of $-n$ means that the output distance of $\seqSys(\realInp)$ and $\seqSys(\synInp)$ is $n+1$ times as high as that limit.
    Finally, \fWrapper returns the triple with $\seqSys$'s output for $\realInp$, the normalised fairness score, and the synthetic input with its output witnessing the fairness score.

\paragraph{Interpretation of monitoring results}
Especially when \fWrapper finds a violation of \ffairnessC, the suitable interpretation and appropriate response to the normalised fairness score proves to be a non-trivial matter that requires 
expertise.

\begin{example}
Instead of using $\seqSys$ from \Cref{ex:hrWomanIntro} on its own, \hrwoman now uses \fWrapper with a suitable fairness contract. 
(Which fairness contracts are suitable is an open research problem, see \emph{\limitsAndChallenges} in \Cref{sec:conclusion}.)
and thereby receive a fairness score along with $\seqSys$'s verdict on each applicant. If the fairness score is negative, she can also take into account the information on the  synthetic counterpart returned by \fWrapper. Among the \numberofapplicants applicants for the PhD program, the monitoring assigns a negative fairness score to three candidates: \alexa, who received a low score, \evgenji, who was scored very highly, and \john, who got an average score. According to their scoring, \alexa would be desk-rejected, while \evgenji and \john would be considered further.

\alexa's synthetic counterpart, let's call him \syntbad, is ranked much higher than \alexa. In fact, he is ranked so high that \syntbad would not be desk-rejected. \hrwoman compares \alexa and \syntbad and finds that they only differ in one respect: \syntbad's graduate university is the one in the official ranking that is immediately \emph{below} the one that \alexa attended. \hrwoman does some research and finds that \alexa's institution is predominantly attended by People of Colour, while this is not the case for \syntbad's institution. Therefore, \fWrapper helped \hrwoman not only to find an unfair treatment of \alexa, but also to uncover a case of potential racial discrimination.

\john's counterpart, \synthiaB, is ranked much lower than him. \hrwoman manually inspects \john's previous institution (an infamous online university), his GPA of 1.8, and his test result with only 13\%. She finds that this very much suggests that \john will not be a successful PhD candidate and desk-rejects him. Therefore, \hrwoman has successfully used \fWrapper to detect a fault in scoring system $\seqSys$ whereby \john would have been treated unfairly in a way that would have been to his advantage. 

\evgenji received a top score, but his synthetic counterpart, \syna, received only an average one. \hrwoman suspects that \evgenji was ranked too highly given his graduate institution, GPA, and test score. However, as he would not have been desk-rejected either way, nothing changes for \evgenji, and the unfairness he was subject to, is not of effect to him. 

The cases of \john and \evgenji share similarities with the configuration in (b) in \Cref{fig:example}, the one of \alexa with (a), and the ones of all other \numberofapplicantsminusthree candidates with (c).
\end{example}

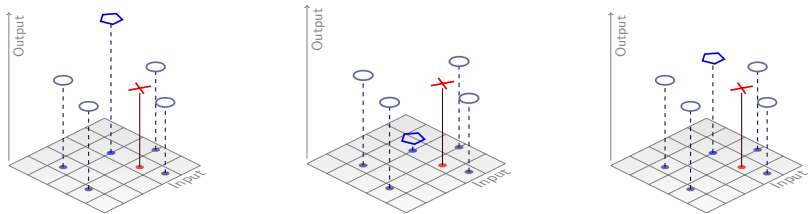
\begin{figure}[t]
    \centering
    \hfill
    \begin{subfigure}[t]{0.3\linewidth}
    {\resizebox{\linewidth}{!}{
      \dimendef\prevdepth=0

%styles of actual input
\tikzstyle{acT} = [cross, minimum size=0.3cm, rotate=20, very thick, draw, red!90!black!100,  minimum size=0.3cm]
\tikzstyle{acB} = [circle, ultra thin, fill, red!90!black!70, inner sep= 0pt, minimum size=1.5mm]
\tikzstyle{acLine} = [red!20!black!100]

%styles of synthetic inputs
\tikzstyle{syT} = [circle, very thick, draw, blue!40!black!60,  minimum size=0.3cm]
\tikzstyle{syB} = [circle, ultra thin, fill, blue!40!black!60, inner sep= 0pt, minimum size=1.5mm]
\tikzstyle{syLine} = [dashed, blue!20!black!100]

%styles of counterexamples
\tikzstyle{ceT} = [regular polygon,regular polygon sides=5, very thick, draw, blue!90!black!100, minimum size=0.2cm]
\tikzstyle{ceB} = [circle, ultra thin, fill, blue!90!black!70, inner sep= 0pt, minimum size=1.5mm]
\tikzstyle{ceLine} = [dashed, blue!20!black!100]

\begin{tikzpicture}[every node/.style={minimum size=0.2cm},on grid, scale=0.8]

\begin{scope}[every node/.append style={
    yslant=0.5,xslant=-1},yslant=0.5,xslant=-1]
 \shade[bottom color=black!2, top color=black!10] (3,3) rectangle +(-3,-3);
  \draw[step=0.6cm, gray] (0,0) grid (3,3);

  \node[align=right, anchor=east, text=gray] (input) at (3,-0.25) {\sffamily Input};
  \node (aux1) at (0,3) {};
  \node[above=4cm of aux1] (aux2) {};
  \node (aux3) at (0,0) {};
    
  %\node at (0,0) {(0,0)};
  %\node at (3,0) {(3,0)};
  %\node at (3,3) {(3,3)};
  %\node at (0,3) {(0,3)};
  \node[acB] (aB) at (2,0.9) {};
  \node[acT, above=2 of aB] (aT) {};
  \draw[acLine] (aB.center) -- (aT);
  
  \node[syB] (s1B) at (2.8,1.2) {};
  \node[syT, above=2.1 of s1B] (s1T) {};
  \draw[syLine] (s1B.center) -- (s1T);
  
  \node[syB] (s2B) at (2.2,0.3) {};
  \node[syT, above=1.8 of s2B] (s2T) {};
  \draw[syLine] (s2B.center) -- (s2T);
 
  \node[syB] (s3B) at (0.5,1) {};
  \node[syT, above=2.1 of s3B] (s3T) {};
  \draw[syLine] (s3B.center) -- (s3T);
  
  \node[syB] (s4B) at (0.8,2.1) {};
  \node[syT, above=2.2 of s4B] (s4T) {};
  \draw[syLine] (s4B.center) -- (s4T); 
  
  \node[ceB] (s5B) at (2,1.8) {};
  \node[ceT, above=3.4 of s5B] (s5T) {};
  \draw[ceLine] (s5B.center) -- (s5T);
 
\end{scope}
 
  \draw[->, gray] (aux1) -- (aux2) node[pos=0.84
  , rotate=90, below] {\sffamily Output};
  %\node[below=0.5cm of aux3] {(1)};
  
\end{tikzpicture}
    }}
    \subcaption{case of unfairness where input is treated worse than relevant counterpart}
    \end{subfigure}
    \hfill
    \begin{subfigure}[t]{0.3\linewidth}
    {\resizebox{\linewidth}{!}{\dimendef\prevdepth=0

    %styles of actual input
\tikzstyle{acT} = [cross, minimum size=0.3cm, rotate=20, very thick, draw, red!90!black!100,  minimum size=0.3cm]
\tikzstyle{acB} = [circle, ultra thin, fill, red!90!black!70, inner sep= 0pt, minimum size=1.5mm]
\tikzstyle{acLine} = [red!20!black!100]

%styles of synthetic inputs
\tikzstyle{syT} = [circle, very thick, draw, blue!40!black!60,  minimum size=0.3cm]
\tikzstyle{syB} = [circle, ultra thin, fill, blue!40!black!60, inner sep= 0pt, minimum size=1.5mm]
\tikzstyle{syLine} = [dashed, blue!20!black!100]

%styles of counterexamples
\tikzstyle{ceT} = [regular polygon,regular polygon sides=5, very thick, draw, blue!90!black!100, minimum size=0.2cm]
\tikzstyle{ceB} = [circle, ultra thin, fill, blue!90!black!70, inner sep= 0pt, minimum size=1.5mm]
\tikzstyle{ceLine} = [dashed, blue!20!black!100]

    \begin{tikzpicture}[every node/.style={minimum size=0.2cm},on grid, scale=0.8]
    
    \begin{scope}[every node/.append style={
        yslant=0.5,xslant=-1},yslant=0.5,xslant=-1]
     \shade[bottom color=black!2, top color=black!10] (3,3) rectangle +(-3,-3);
      \draw[step=0.6cm, gray] (0,0) grid (3,3);

      \node[align=right, anchor=east, text=gray] (input) at (3,-0.25) {\sffamily Input};
      \node (aux1) at (0,3) {};
      \node[above=4cm of aux1] (aux2) {};
      \node (aux3) at (0,0) {};
        
      %\node at (0,0) {(0,0)};
      %\node at (3,0) {(3,0)};
      %\node at (3,3) {(3,3)};
      %\node at (0,3) {(0,3)};
      
      \node[acB] (aB) at (2,0.9) {};
  \node[acT, above=2 of aB] (aT) {};
  \draw[acLine] (aB.center) -- (aT);
  
  \node[syB] (s1B) at (2.8,1.2) {};
  \node[syT, above=2.1 of s1B] (s1T) {};
  \draw[syLine] (s1B.center) -- (s1T);
  
  \node[syB] (s2B) at (2.2,0.3) {};
  \node[syT, above=1.8 of s2B] (s2T) {};
  \draw[syLine] (s2B.center) -- (s2T);
 
  \node[syB] (s3B) at (0.5,1) {};
  \node[syT, above=2.1 of s3B] (s3T) {};
  \draw[syLine] (s3B.center) -- (s3T);
  
  \node[syB] (s4B) at (0.8,2.1) {};
  \node[syT, above=2.2 of s4B] (s4T) {};
  \draw[syLine] (s4B.center) -- (s4T); 
      
      \node[ceB] (s5B) at (2,1.8) {};
      \node[ceT, above=0.3 of s5B] (s5T) {};
      \draw[ceLine] (s5B.center) -- (s5T);
     
    \end{scope}
     
      \draw[->, gray] (aux1) -- (aux2) node[pos=0.84
      , rotate=90, below] {\sffamily Output};
      %\node[below=0.5cm of aux3] {(1)};
      
    \end{tikzpicture}}}
    \subcaption{case of unfairness where input is treated better than relevant counterpart}
    \end{subfigure}
    \hfill
    \begin{subfigure}[t]{0.3\linewidth}
    {\resizebox{\linewidth}{!}{
      \dimendef\prevdepth=0

%styles of actual input
\tikzstyle{acT} = [cross, minimum size=0.3cm, rotate=20, very thick, draw, red!90!black!100,  minimum size=0.3cm]
\tikzstyle{acB} = [circle, ultra thin, fill, red!90!black!70, inner sep= 0pt, minimum size=1.5mm]
\tikzstyle{acLine} = [red!20!black!100]

%styles of synthetic inputs
\tikzstyle{syT} = [circle, very thick, draw, blue!40!black!60,  minimum size=0.3cm]
\tikzstyle{syB} = [circle, ultra thin, fill, blue!40!black!60, inner sep= 0pt, minimum size=1.5mm]
\tikzstyle{syLine} = [dashed, blue!20!black!100]

%styles of counterexamples
\tikzstyle{ceT} = [regular polygon,regular polygon sides=5, very thick, draw, blue!90!black!100, minimum size=0.2cm]
\tikzstyle{ceB} = [circle, ultra thin, fill, blue!90!black!70, inner sep= 0pt, minimum size=1.5mm]
\tikzstyle{ceLine} = [dashed, blue!20!black!100]

\begin{tikzpicture}[every node/.style={minimum size=0.2cm},on grid, scale=0.8]

\begin{scope}[every node/.append style={
    yslant=0.5,xslant=-1},yslant=0.5,xslant=-1]
 \shade[bottom color=black!2, top color=black!10] (3,3) rectangle +(-3,-3);
  \draw[step=0.6cm, gray] (0,0) grid (3,3);

  \node[align=right, anchor=east, text=gray] (input) at (3,-0.25) {\sffamily Input};
  \node (aux1) at (0,3) {};
  \node[above=4cm of aux1] (aux2) {};
  \node (aux3) at (0,0) {};
    
  %\node at (0,0) {(0,0)};
  %\node at (3,0) {(3,0)};
  %\node at (3,3) {(3,3)};
  %\node at (0,3) {(0,3)};
  
  \node[acB] (aB) at (2,0.9) {};
  \node[acT, above=2 of aB] (aT) {};
  \draw[acLine] (aB.center) -- (aT);
  
  \node[syB] (s1B) at (2.8,1.2) {};
  \node[syT, above=2.1 of s1B] (s1T) {};
  \draw[syLine] (s1B.center) -- (s1T);
  
  \node[syB] (s2B) at (2.2,0.3) {};
  \node[syT, above=1.8 of s2B] (s2T) {};
  \draw[syLine] (s2B.center) -- (s2T);
 
  \node[syB] (s3B) at (0.5,1) {};
  \node[syT, above=2.1 of s3B] (s3T) {};
  \draw[syLine] (s3B.center) -- (s3T);
  
  \node[syB] (s4B) at (0.8,2.1) {};
  \node[syT, above=2.2 of s4B] (s4T) {};
  \draw[syLine] (s4B.center) -- (s4T); 
  
  \node[ceB] (s5B) at (2,1.8) {};
  \node[ceT, above=2.4 of s5B] (s5T) {};
  \draw[ceLine] (s5B.center) -- (s5T);
 
\end{scope}
 
  \draw[->, gray] (aux1) -- (aux2) node[pos=0.84
  , rotate=90, below] {\sffamily Output};
  %\node[below=0.5cm of aux3] {(1)};
  
\end{tikzpicture}
    }}
    \subcaption{case of no detected unfairness}
    \end{subfigure}
    \hfill
    \caption{Exemplary illustration of configurations of an input (red cross) and its synthetic counterparts (grey circles) and the synthetic counterpart with the minimal fairness score (blue polygon); with a two-dimensional input space (grid) and a one-dimensional output.}
    \label{fig:example}
\end{figure}

If our monitor finds only a few problematic cases in a (sufficiently large and diverse) set of inputs, our monitoring helps \hrwoman from our running example by drawing her attention to cases that require special attention. Thereby, individuals who are judged by the system have a better chance of being treated fairly, since even rare instances of unfair treatment are detected. If, on the other hand, the number of problematic cases found is large, or \hrwoman finds especially concerning cases or patterns,
this can point to larger issues within the system. 
In these cases, \hrwoman should take appropriate steps and make sure that the system is no longer used until clarity is established why so many violations or concerning patterns are found. If the system is found to be systematically unfair, it should arguably be removed from the decision process. A possible conclusion could also be that the system is unsuitable for certain use cases, e.g., for the use on individuals from a particular group. Accordingly, it might not have to be removed altogether but only needs to be restricted such that problematic use cases are avoided. 
In any case, significant findings should also be fed back to developers or deployers of the potentially problematic system.
A fairness monitoring such as in \fWrapper or a fairness analysis as in \fMinimiser could also be useful to developers, regulating authorities, watchdog organisations, or forensic analysts as it helps them to check the individual fairness of a system in a controlled environment.

\section{Interdisciplinary Assessment of Fairness Monitoring}\label{sec:assessment}
Regulations for car related emissions are in force for a considerable amount of time, thus, its legal interpretation is mostly clear.
In case of human oversight of AI systems, the AI act is new and parts of it are legally ambiguous.
This raises the question of whether our approach meets requirements that go beyond pre-theoretical deliberations.
Even though comprehensive analyses would go far beyond the scope of this paper, we will nevertheless assess some key normative aspects in philosophical and legal terms, and also briefly turn to the related empirical aspects, especially from psychology.

\subsection{Psychological assessment}

Fairness monitoring promises various advantages in terms of human-system interaction in application contexts -- provided it is extended by an adequate user interface -- which call for empirical tests and studies.
We will only discuss a possible benefit that closely aligns with the current draft of the AI Act: our approach may support effective human oversight. Two central aspects of effective oversight are situation awareness and warranted trust. Our method highlights unfairness in outputs which can be expected to increase users' situation awareness (i.e., \enquote{the perception of the elements in the environment within a volume of time and space, the comprehension of their meaning and the projection of their status in the near future}~\cite[p.~36]{Endsley1995Toward}), which is a variable central for effective oversight~\cite{Endsley2017}. In the minimal case, this allows users to realise that something requires their attention and that they should check the outputs for plausibility and adequacy. In the optimal case and after some experience with the monitor, it may even allow users to predict instances where a system will produce potentially unfair outputs. In any case, the monitoring should enable them to understand limitations of the system and to feed back their findings to developers who can improve the system. 
This leads us to warranted trust, which includes that users are able to adequately judge when  
to rely on system outputs and when to reject them~\cite{Lee2004,jacovi2021formalizing}. Building warranted trust strongly depends on users being able to assess system trustworthiness in the given context of use \cite{schlicker2021towards,Lee2004}. 
According to their theoretical model on trust in automation, Lee and See \cite{Lee2004} propose that trustworthiness relates to different facets of which performance (e.g., whether the system performs reliably with high accuracy) and process (e.g., knowing how the system operates and whether the system's decision-processes help to fulfil the trustor's goals) are especially relevant in our case. 
Specifically, fairness monitoring should enable users to more accurately judge system performance (e.g., by revealing possible issues with system outputs) and system processes (e.g., whether the system's decision logic was appropriate). In line with Lee and See's propositions, this should provide a foundation for users to be better able to judge system trustworthiness and should thus be a promising means to promote warranted trust.
In consequence, our monitoring provides a needed addition to high-risk use contexts of AI because it offers information enabling humans to more adequately use AI-based systems in the sense of possibly better human-system decision performance and with respect to user duties as described in the AI Act. 

\subsection{Philosophical assessment}

More effective oversight promises more informed decision-making. This, in turn, enables morally better decisions and outcomes, since humans can morally ameliorate outcomes in terms of fairness 
and can see to it that moral values are promoted.
Also, fairness monitoring  helps in safeguarding fundamental democratic values if it is applied to potentially unfair systems which are used in certain societal institutions of a high-risk character such as courts or parliaments. It could, for example, make AI-aided court decisions more transparent and promote equality before the law. 
However, since our approach requires finding context-appropriate and morally permissible parameters for \fContract, 
moral requirements arise to enable the finding of such parameters. This not only affects, e.g., developers of such systems, but also those who are in a position to enforce that adequate parameters are chosen, such as governmental authorities, supervising institutions or certifiers.

Apart from that, various parties have arguably a legitimate interest in adequately ascribing moral responsibility for the outcomes of certain decisions to human deciders~\cite{Baumetal-BAUFRT-2}
-- regardless of whether the decision making process is supported by a system. Adequately ascribing moral responsibility is not always possible, though. One precondition for moral responsibility is that the agent had sufficient epistemic access to the consequences of their doing~\cite{sep-moral-responsibility, sep-computing-responsibility}, i.e., that they have enough and sufficiently well justified beliefs about the results of their decision. 
Someone overseeing a university selection process (like \hrwoman) should, for example, have sufficiently well justified beliefs that, at the very least, their decisions do not result in more unfairness in the world. If the admission process is supported by a black-box  \aisystem, though, \hrwoman cannot be expected to have any such beliefs since she lacks insight in the fairness of the system. Therefore, adequate responsibility ascription is usually not possible in this scenario. Our monitoring alleviates this problem by providing the decider with better epistemic access to the fairness of the system. 

\fWrapper helps in making \hrwoman's role in the decision process significant and not only that of a mere button-pusher. \fWrapper makes it possible for her to fulfil some of the responsibilities and duties plausibly associated with her role. For example, she can now be realistically expected to not only detect, but resolve at least some cases of apparent unfairness competently (although she may need additional information to do so). In this respect, she should not be \enquote*{automated away} (cf.~\cite{Matthias2004-MATTRG}).

\subsection{Legal assessment}
A central legislative debate of our time is how to counter the risks AI systems can  pose to the health and safety or fundamental rights of natural persons. 
Protective measures must be taken at various levels: First, before being permitted on the market, it must be ensured \textit{ex ante} that such high-risk AI-systems are in conformity with mandatory requirements\footnote{The specific risks set by AI-systems may also give reason to consider an adaptation and expansion of European legal frameworks such that an even broader prohibition of discrimination (cf.\ Appendix \ref{app:legal:discrimination}) is set into place.} regarding safety and human rights.
This means in particular that the selection of the properties which a system should exhibit requires a positive normative choice 
and should not simply replicate biases present in the status quo \cite{wachter2020bias}. In addition, AI-systems must be designed and developed in such a way that natural persons can oversee their functioning. For this purpose, it is necessary for the provider to identify appropriate human oversight measures before its placing on the market or putting into service. In particular, such measures should guarantee that the natural persons to whom human oversight has been assigned have the necessary competence, training and authority to carry out that role \cite[recital 48]{eu-0106-2021}\cite[Art. 14 (5)]{ai-act-amendments}.

Second, during runtime, the proper functioning of high-risk AI systems, which have been placed on the market lawfully, must be ensured. 
To achieve this goal, a bundle of different measures is needed, ranging from legal obligations to implement and perform meaningful oversight mechanisms to user training and awareness in order to counteract \enquote*{automation bias}.
Furthermore, the AI Act proposal requires deployers to inform the  provider or distributor and suspend the use of the system when they have identified any serious incidents or any malfunctioning \cite[Art.\ 29(4)]{eu-0106-2021,ai-act-amendments}.

Third, and \emph{ex post}, providers must act and take the necessary corrective actions as soon as they become aware, e.g. through information provided by the deployer, that the high-risk system does not (or no longer) meet the legal requirements \cite[Art. 16(g)]{eu-0106-2021,ai-act-amendments}. To this end, they must establish and document a system of monitoring that is proportionate to the type of AI technology and the risks of the high-risk AI system \cite[Art. 61(1)]{eu-0106-2021,ai-act-amendments}.

Fairness monitoring can be helpful in all three of the above respects. Therefore, we argue that there is even a legal obligation to use technical measures such as the method presented in this paper if this is the only way to ensure effective human oversight.

\section{Conclusion \& Future Work}\label{sec:conclusion}

This articles brings together software doping theory and probabilistic falsification techniques.
To this end, it proposes a suitable HyperSTL semantics and characterises \robustCleannessNDet and \fCleannessNDet as HyperSTL formulas and, for the special case of finite standard behaviour, STL formulas.
Software doping techniques have been extensively applied to the tampered diesel emission cleaning systems; this article continues this path of research by demonstrating how testing of real cars can become more effective.
For the first time, we apply software doping techniques to high-risk (AI) systems.
We propose a runtime fairness monitor to promote effective human oversight of high-risk systems.
The development of this monitor is complemented by an interdisciplinary evaluation from a psychological, philosophical, and legal perspective.

\paragraph{\limitsAndChallenges}

A challenge to those employing \robustCleannessNDet or \fCleannessNDet analysis is the selection of suitable parameters, especially $d_\Inputs$, $d_\Outputs$, and $f$ or $\inpbound$ and $\outpbound$. Because of their high degree of context sensitivity, there are no paradigmatic candidates for them that one can default to.  Instead, they have to be carefully selected with the concrete system, the structure of input data and the situation of use in mind.

Reasonable choices for \robustCleannessNDet analysis of diesel emissions have been proposed in recent work~\cite{DBLP:journals/tomacs/BiewerDH21,DBLP:journals/lmcs/BiewerDFGHHM22}.
With respect to individual fairness analysis, potential systems to which \fWrapper or \fMinimiser can be applied to are still too diverse to give recommendations for the contract parameters.
Obviously, further technical limitations include that $f$, $d_\Inputs$, and $d_\Outputs$ must be computable.

With a particular regard to fairness analysis, we identify also non-technical limitations.
As seen in \Cref{fig:pipeline}, our fairness monitoring aims to uncover a particular kind of unfairness, namely individual unfairness that originates from within the \aisystem. 
This excludes all kinds of group unfairness as well as unfairness from sources other than the system. 
Another limitation is the human's competence to interpret the system outputs. 
Even though this is not a limitation that is inherent to our approach, it nevertheless will arguably be relevant in some practical cases, and an implementation of the monitoring always has to happen with the human in mind. For example, the design of the tool should avoid creating the false impression that the system is proven to be fair for an individual if no counterexample has been found. Interpretations like this could lead to inflated judgements of system trustworthiness and eventually to overtrusting system outputs \cite{schlicker2021towards, schlicker2022trustworthiness}. 
Also, it might be reasonable to limit access to the monitoring results: if individuals who are processed by the system have full access to their fairness analysis, they could use this to \enquote*{game} the system, i.e. they could use the synthetic inputs to slightly modify their own input such that they receive a better outcome. 
While more transparency for the user is generally desirable, this has to be kept in mind to avoid introducing new unfairness on a meta-level.

\paragraph{Future Work}
The probabilistic falsification technique we use in this article can be seen as a modular framework that consists of several interchangeable components.
One of these components is the optimisation technique used to find the input with minimal robustness value.
\Cref{algo:falsification} uses a simulated annealing technique~\cite{chib1995understanding,DBLP:books/lib/Rubinstein81}, but other techniques have been proposed for temporal logic falsification, too~\cite{DBLP:conf/hybrid/SankaranarayananF12,5675195}.
We want to further look into such alternative optimisation techniques and to evaluate if they offer benefits w.r.t. cleanness falsification.

Finally, the fairness monitoring approach has been presented using a toy example. It is not claimed to be readily applicable to real-life scenarios. 
Besides the future work that has already been mentioned throughout the paper, we are planning on 
various extensions of our approach, and are working on an implementation that will allow us to integrate the monitoring into a real system. Moreover, we plan to test the possible benefits and shortcomings of the approach in user studies where decision-makers are tasked to make hiring decisions with and without the fairness monitoring approach.
Further 
work will encompass activities such as the improvement and embedding of the algorithm \fWrapper into a proper tool that can be used by non-computer-scientists, and the extension of the monitoring technique to cover more types of unfairness. 
For example, logging the output of the fairness monitor could be used to identify groups that are especially likely to be treated unfairly by the system: 
The individual fairness verdicts provided by \fWrapper and \fMinimiser may also be logged and considered for further fairness assessments or other means of quality assurance of system $P$.
Statistical analysis might unveil that individuals of certain groups are treated unfairly more frequently than individuals from other groups. Depending on the distinguishing features of the evaluated group, this can uncover problems in $P$, especially if protected attributes, such as gender, race, age, etc, are taken into account. Thereby, system fairness can be assessed for protected attributes without including them in the input of $P$, which should generally be avoided, and even without disclosing them to the human in the loop.
By evaluating the monitoring logs from sufficiently many diverse runs of \fWrapper, our local method can be lifted such that it resembles a global method for many practical applications, i.e. we can make statistical statements about the general fairness of $P$. Such an evaluation can also be used to extract prototypes and counterexamples in the spirit of Been et al.~\cite{10.5555/3157096.3157352} illustrating the \emph{tendency} to judge unfairly. 
This is an interesting combination of individual and group fairness that we want to look into further.
Other insights from the research on  reactive systems  \cite{DBLP:conf/esop/DArgenioBBFH17,DBLP:journals/tomacs/BiewerDH21,DBLP:journals/lmcs/BiewerDFGHHM22} can potentially be used to further enrich the monitoring. 
Finally, various disciplines have to join forces to resolve highly interdisciplinary questions such as what constitutes reasonable and adequate choices for $f$, $d_\Inputs$, and $d_\Outputs$ in given contexts of application.

\backmatter

\section*{Declarations}

\bmhead{Funding}
This work is partially funded by DFG grant 389792660 as part of TRR~248 -- CPEC (see \url{https://perspicuous-computing.science}) and by 
VolkswagenStiftung as part of grants AZ 98514, 98513 and 98512 EIS -- Explainable Intelligent Systems (see \url{https://explainable-intelligent.systems}).

\bmhead{Competing interests}
The authors have no competing interests to declare that are relevant to the content of this article.

\bmhead{Data Availability}
The datasets analysed during the current study are available in a Zenodo repository~\cite{biewer_sebastian_2023_8058770} (\url{https://zenodo.org/record/8058770}).

\bmhead{Ethics approval}
Not applicable

\bmhead{Consent to participate}
Not applicable

\bmhead{Consent for publication}
Not applicable

\bmhead{Code availability}
Not applicable

\bmhead{Authors' contributions}
Not applicable

\begin{appendices}

\section{Technical Appendix}\label{app:fbetterL}

This appendix 
illustrates that \ffairnessC is 
more expressive than Lipschitz-fairness and why this is useful. 
For this, we use as a toy example a very simple, hypothetical HR scoring system that aggregates five scores given to the candidates. 
We remark that the whole scenario, the implementation of the system, the choice of distance functions and $f$, is likely not applicable for real-life situations; everything is picked so that our explanations are understandable.

Suppose that certain qualities and characteristics of the applicants are pre-scored by other systems on a scale from $0$ to $\SI{100}{\percent}$, where $0$ means that the candidate is utterly unsuitable for the job in a certain regard, while a scoring of $\SI{100}{\percent}$ means that the candidate is perfect for the job in this regard. 
In particular, we will assume that the following marks are given to each applicant: an \emph{education mark} for how well they are academically suitable for the job, an \emph{experience mark} for how well their previous work experience fits the job, a \emph{personality mark} for their personal and social skills, a \emph{mental ability mark} for what is colloquially referred to as an applicant's general intelligence, and, finally, a \emph{skill mark} that tracks the special skills that applicants have which might be beneficial for the job, such as their knowledge of foreign languages.

The system $P$ that is of interest for us in this example is the one that aggregates all of these marks and gives out an overall score of how well the candidate is suited for the job. 
The human responsible for the hiring process 
can use this in her hiring decision, e.g., she can focus on the top-scoring candidates and choose among them. 

Let $\markSet = [0, 1] \subseteq \RR$ be the reals between $0$ and $1$.
Each of the five marks mentioned above is a real number from set $\markSet$.
The input domain $\Inputs = \markSet^5$ for the sketched HR system is a tuple of five marks.
The output of the system is the overall suitability score of an applicant, which is also a value from \markSet.
The distance between two inputs is defined as the euclidean distance, normalised to a value between $0$ and $1$, i.e., 
\begin{align*}
     & d_\Inputs\big((\eduMark_1, \expMark_1, \personMark_1, \intelliMark_1, \skillMark_1), (\eduMark_2, \expMark_2, \personMark_2, \intelliMark_2, \skillMark_2) \big) = \\
     & \qquad  \sqrt{\frac{(\eduMark_1 - \eduMark_2)^2 + (\expMark_1 - \expMark_2)^2 + (\personMark_1 - \personMark_2)^2 + (\intelliMark_1 - \intelliMark_2)^2 + (\skillMark_1 - \skillMark_2)^2}{5}},
\end{align*}
where $\eduMark$ represents the education mark, $\expMark$ the experience mark, $\personMark$ the personality mark, $\intelliMark$ the mental ability mark, and $\skillMark$ the skill mark of an applicant.
The distance between two outputs $d_\Outputs(\outp_1, \outp_2) = \abs{\outp_1 - \outp_2}$ is the absolute difference between the overall scores $\outp$ and $\outp'$.
Note that also output distances are values between $0$ and $1$.

Our scoring system is a function $P: \markSet^5 \rightarrow \markSet$.
We will assume here that $P$ is defined as the sum of five subscoring systems, one for each of the five input marks, computing a value between $0$ and $0.2$.
Then, 
\[ P((\eduMark, \expMark, \personMark, \intelliMark, \skillMark)) \coloneqq P_\eduMark(\eduMark) + P_\expMark(\expMark) + P_\personMark(\personMark) + P_\intelliMark(\intelliMark) + P_\skillMark(\skillMark).\]

Let $P_\eduMark$, $P_\expMark$, $P_\personMark$ and  $P_\intelliMark$ be defined according to the plot shown in Fig.~\ref{fig:subscoring} a). 
With an increasing mark, these subscores increases up to an input mark of $0.8$, whereafter the applicant becomes overqualified and the subscore slowly decreases.
$P_\skillMark$ is depicted in Fig.~\ref{fig:subscoring} b): The skill mark is less important, however a minimum amount of skills is required for the job.
Hence, there is a jump of the skill score at an skill mark of roughly $0.19$.
Let \john be an applicant with $\eduMark = \expMark = \personMark = \intelliMark = 0.5$ and a skill mark of $\skillMark = 0.2$, which maps to a skill score on the plateau after the jump.
The subscores for education, experience, personality and mental ability mark are $0.12$ each.
The skill score computed for \john is $0.05$.
Hence, \john's overall score is $P(\john) = 4 \cdot 0.12 + 0.05 = 0.53$.
Let \synthiaA be a synthetic applicant with the same marks as \john, except for the skill mark, which is $0.19$ in \synthiaA's case.
As depicted in Fig.~\ref{fig:subscoring} b), the skill subscore for skill mark $0.19$ is $0.02$ -- \synthiaA is at the plateau right before the jump of the skill score.
Her overall score is $P(\synthiaA) = 4 \cdot 0.12 + 0.02 = 0.50$.
The input distance between \john and \synthiaA is $d_\Inputs(\john,\synthiaA) = \sqrt{\frac{0.01^2}{5}} \approx 0.0045$ and the output distance is $d_\Outputs(\john,\synthiaA) = \abs{0.53-0.5} = 0.03$.
It is easy to see that if we use Lipschitz-fairness, the Lipschitz constant $L$ must be at least $L = 6.7$ to allow the small jump in the skill subscoring function.
We argue that small jumps like those in the skill subscore are normal behaviour and, hence, fair.
Assume for the remainder of this example that we use Lipschitz-fairness with $L=6.7$. 

\begin{figure}
  \includegraphics[width=\textwidth]{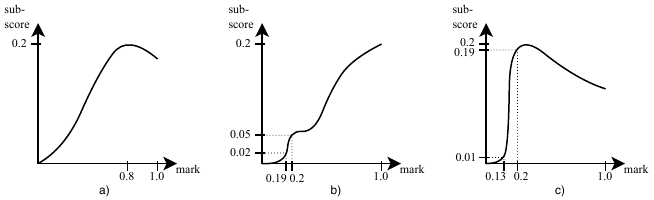}
  \caption{Visualisation of subscoring functions mapping marks to subscores}%
  \label{fig:subscoring}
\end{figure}

Consider now a slightly modified variant $P'$ of $P$.
$P'$ is as $P$ but uses a different subscoring function $P'_\skillMark$ for the skill score.
Fig.~\ref{fig:subscoring} c) shows the skill subscoring function for $P'$.
$P'_\skillMark$ has a jump at skill mark $0.13$ that is significantly larger than that in $P_\skillMark$.
We assume in this example that such a big jump is unfair. This assumption is warranted since, for many applications, such a small change in technical skills which has an immense impact on the skill subscore is not reasonable.  
Considering applicant \john, his skill mark still maps to a very high skill score of $0.19$.
Let \synthiaB be a third (potentially synthetic) applicant with $\eduMark = \expMark = \personMark = \intelliMark = 0.5$ (as for \john and \synthiaA) and $\skillMark = 0.13$. 
Her skill mark maps to a very small skill score of $0.01$.
The overall scores are $P'(\john) = 4 \cdot 0.12 + 0.19 = 0.67$ and $P'(\synthiaB) = 4 \cdot 0.12 + 0.01 = 0.49$.
The input distance is $d_\Inputs(\john,\synthiaB) = 0.0313$ and the output distance is $d_\Outputs(\john, \synthiaB) = 0.18$.
Applying the Lipschitz condition to $P'$ and $d_\Inputs(\john,\synthiaB)$, it easy to see that $d_\Outputs(\john, \synthiaB)$ may become as large as $0.21$.
Hence, $P'$ is classified as fair w.r.t. the Lipschitz condition.
We see that a problem of the Lipschitz condition is that it is not possible to allow small jumps and at the same time disallow large jumps with equal increasing rate.
This is because the distance of the inputs can only be used to multiply it with the Lipschitz constant.

\begin{wrapfigure}[5]{R}{0.52\textwidth}
  \vspace{-1.0cm}
  \begin{minipage}{0.36\textwidth}
    \begin{align*}
      f(d) = \label{eq:f}
      \begin{cases}
        0.001 + 8d, & \text{ for }d\in [0.0, 0.01]\\
        0.001 + 4d, & \text{ for }d\in (0.01, 0.1]\\
        0.001 + 2d, & \text{ for }d\in (0.1, 1.0]\\
      \end{cases}
    \end{align*} 
  \end{minipage}
\end{wrapfigure}
\FfairnessC is different in this regard. 
Function $f$ receives the input distance and can freely define a bound on output distances based on the input distance.
Indeed, the concrete $f$ on the right overcomes the problem observed in the example. 
It uses the input distance for a case distinction on the magnitude of the input distance.
For input distances up to $0.01$, $f$ effectively applies Lipschitz-fairness with $L=8$ to allow small jumps.
For input distances between $0.01$ and $0.1$, $f$ behaves like Lipschitz-fairness for $L=4$, and for larger input distances, it enforces $L=2$.
In all cases we add $0.001$ to the result to avoid $f$ becoming zero (see footnote~\ref{ftnote:divisionByZero} on page~\pageref{ftnote:divisionByZero} in the main paper).
Applying \ffairnessC with $\Contract = \langle d_\Inputs, d_\Outputs, f \rangle$ to $P$, the combination of \john and \synthiaA (and hence the small jump of the skill score function) is not highlighted by \fWrapper, i.e., it is correctly detected as \ffairC. 
Applied to $P'$, however, \john and \synthiaB fall into the second case in the definition of $f$, but, as the emulated Lipschitz condition with $L=4$ is violated, \fWrapper likely finds a negative fairness score, i.e., $P'$ is not \ffairC w.r.t. \john.
We remark that we propose this $f$ for purely illustrative purposes. 
For real-world examples, $f$ should be more sophisticated. % (and ideally should not contain jumps).
Finding a suitable $f$ can be a non-trivial task which hinges on various aspects that are crucial for the fairness evaluation in a given context.
Clearly, the $P$ and $f$ provided in this illustration are toy examples that are probably inappropriate for real-world usage.

\subsection{Proofs}
\label{sec:appendix:proofs}

In this section, we will provide proofs for most of the propositions and theorems in the main paper.
First, we show the correctness of the HyperSTL characterisations of \robustCleannessNDet and \fCleannessNDet.

We first provide a lemma, which  destructs the globally ($\G$) and weak until ($\W$) operators such that the timing constraints encoded by these operators becomes explicit.

\begin{lemma}
  \label{lemma:testing:stlDerivedOps}
  Let $\sigma: \timeSet \to X$ be a trace with $\timeSet = \NN$ or $\timeSet = \RRpos$ and let $\phi$ and  $\psi$ be STL formulas. Then the following equivalences hold.
  \begin{enumerate}
    \item \label{lemma:testing:stlDerivedOps:globally} $\sigma, 0 \models \G \phi$ if and only if $ \forall \varTime \geq 0.\; \sigma, \varTime \models \phi$,
    \item \label{lemma:testing:stlDerivedOps:wUntil} if $\timeSet = \NN$, then $\sigma, 0 \models \phi \W \psi $ if and only if $ \forall \varTime \geq 0.\; (\forall \varTime' \leq \varTime.\; \sigma, \varTime' \models \neg \psi) \lImp \sigma, \varTime \models \phi$.
  \end{enumerate}
\end{lemma}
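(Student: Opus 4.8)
The plan is to prove both equivalences by unfolding the abbreviations $\G$ and $\W$ into $\F$ and $\U$ and then applying the Boolean semantics of $\U$ at time $0$; the only genuinely non-mechanical point is the backward direction of the second equivalence, where discreteness of $\timeSet=\NN$ is used.

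For the first equivalence I would expand $\G\phi \equiv \neg\F\neg\phi \equiv \neg(\true\U\neg\phi)$. Applying the semantics of $\U$ at time $0$, we have $\sigma,0\models\true\U\neg\phi$ iff there is some $t\geq 0$ with $\sigma,t\models\neg\phi$, since the left operand $\true$ imposes no constraint on the prefix $[0,t)$. Negating this existential statement yields exactly $\forall t\geq 0.\ \sigma,t\models\phi$, which is the claim. This step is purely a rewriting of the definitions and requires no case analysis.

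For the second equivalence, expand $\phi\W\psi \equiv (\phi\U\psi)\lor\G\phi$, so that $\sigma,0\models\phi\W\psi$ iff $\sigma,0\models\phi\U\psi$ or $\sigma,0\models\G\phi$; write $(\ast)$ for the right-hand side of the stated equivalence. For the \emph{forward} direction I would split on the disjunct that holds: if $\sigma,0\models\G\phi$, then by the first equivalence $\phi$ holds at every time and the consequent of $(\ast)$ is satisfied unconditionally; if instead $\sigma,0\models\phi\U\psi$ with witness $t^\ast$ (so $\sigma,t^\ast\models\psi$ and $\sigma,t''\models\phi$ for all $t''\in[0,t^\ast)$), then any $t$ fulfilling the antecedent of $(\ast)$ must satisfy $t<t^\ast$, for otherwise $t^\ast\leq t$ would force $\sigma,t^\ast\models\neg\psi$, contradicting the choice of $t^\ast$; hence $t\in[0,t^\ast)$ and $\sigma,t\models\phi$. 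For the \emph{backward} direction, assume $(\ast)$ and distinguish whether $\psi$ ever holds: if $\sigma,t\not\models\psi$ for all $t\geq 0$, the antecedent of $(\ast)$ is vacuously true everywhere, so $\phi$ holds at all times and $\sigma,0\models\G\phi$ by the first equivalence; otherwise pick the \emph{least} $t^\ast\geq 0$ with $\sigma,t^\ast\models\psi$ and check it witnesses $\phi\U\psi$, since for every $t''\in[0,t^\ast)$ all $t'\leq t''$ lie strictly below $t^\ast$ and therefore satisfy $\neg\psi$ by minimality, so $(\ast)$ delivers $\sigma,t''\models\phi$.

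The main obstacle is precisely the extraction of this least index $t^\ast$, which relies on the well-ordering of $\NN$: over a continuous time domain the set of times at which $\psi$ holds may have no minimum (it can be an open set), and there the witness $t^\ast$ need not exist. This is exactly why the second equivalence is restricted to $\timeSet=\NN$; all remaining steps are routine manipulations of the semantics and go through regardless of the time domain.
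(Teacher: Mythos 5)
Your proof is correct and follows essentially the same route as the paper's: part 1 by unfolding $\G\phi \equiv \neg(\top\U\neg\phi)$ and negating the existential, and part 2 by expanding $\W$ into $(\phi\U\psi)\lor\G\phi$, with the backward direction hinging on extracting the \emph{least} time at which $\psi$ holds via the well-ordering of $\NN$ — exactly the step the paper marks as the reason the minimum exists only for $\timeSet = \NN$. Your closing remark about open sets of witness times in continuous time is a nice addition, but the substance of the argument coincides with the paper's.
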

\begin{proof}
  We prove the two statements separately.
  \begin{enumerate}
    \item Using the definition of the derived operators $\G$ and $\F$, we get that 
    $\sigma, 0 \models \G \phi$ holds if and only if
    $\sigma, 0 \models \neg (\top \U \neg \phi)$ holds.
    Using the (Boolean) semantics of STL, we get that  this is equivalent to
    $\neg (\exists \varTime \geq 0.\; \sigma, \varTime \models \neg \phi \land \forall \varTime' < \varTime.\; \sigma, \varTime' \models \top)$. 
    After simple logical operations, we get that this is equivalent to $\forall \varTime \geq 0.\; \sigma, \varTime \models \phi$ as required.

    \item Using \ref{lemma:testing:stlDerivedOps:globally}, the definition of $\W$, the (Boolean) semantics of STL, and considering that $\timeSet = \NN$, we get that $\sigma, 0 \models \phi \W \psi $ if and only if $\exists \varTime \in \NN.\; \sigma, \varTime \models \psi \land \forall \varTime' < \varTime.\; \sigma, \varTime' \models \phi$ or $\forall \varTime \in \NN.\; \sigma, \varTime \models \phi$.
    We denote this proposition as $V$.
    It is easy to see that the right operand of the equivalence to prove can be rewritten to
    $\forall \varTime \in \NN.\; (\exists \varTime' \leq \varTime.\; \sigma, \varTime' \models \psi) \lor \sigma, \varTime \models \phi$.
    We denote this proposition as $W$ and must show that $V \lImp W$ and $W \lImp V$.
    To prove that $V$ implies $W$, we distinguish two cases.
    \begin{itemize}
      \item For the first case, assume that the left operand of the disjunction in $V$ holds, i.e., there is some $\varTime \in \NN$, such that $\sigma, \varTime \models \psi \land \forall \varTime' < \varTime.\; \sigma, \varTime' \models \phi$.
      To show $W$, let $\varTime_0 \in \NN$ be arbitrary.
      If $\varTime \leq \varTime_0$, then there exists $\varTime' \leq \varTime_0$ (namely $\varTime' = \varTime$) such that $\sigma, \varTime' \models \psi$; hence $W$ holds.
      If $\varTime > \varTime_0$, then we know from $\forall \varTime' < \varTime.\; \sigma, \varTime' \models \phi$ that $\sigma, \varTime_0 \models \phi$ is true; hence, $W$ holds.
      \item For the second case, assume that the right operand of the disjunction in $V$ holds, i.e., $\forall \varTime \in \NN.\; \sigma, \varTime \models \phi$.
      Then, obviously $W$ holds.
    \end{itemize}
    To prove that $W$ implies $V$, let $\mathit{PV} = \set{\varTime \in \NN \mid \sigma, \varTime \models \psi}$ be the set of all time points at which $\psi$ holds.
    If $\mathit{PV}$ is the empty set, it follows immediately from $W$ that $\forall \varTime \in \NN.\; \sigma, \varTime \models \phi$ and that, hence, $V$ holds.
    If $\mathit{PV}$ is not empty, let $\varTime = \min \mathsf{PV}$ be the smallest time in  $\mathit{PV}$ (the minimum always exists, because $\timeSet = \NN$).
    Then, obviously, $\exists \varTime \in \NN.\; \sigma, \varTime \models \psi$.
    To show that $V$ holds, it suffices to show that $\forall \varTime' < \varTime.\; \sigma, \varTime' \models \phi$.
    This follows from $W$, because $\varTime$ is the smallest time at which $\sigma, \varTime \models \psi$ holds and, therefore, for every $\varTime' < \varTime$ it does not hold that $\sigma, \varTime' \models \psi$.
  \end{enumerate}
\end{proof}

\Cref{lemma:hyperstl:translateFormulaIntoLogic} is specific to the HyperSTL formula~(\ref{eq:hyperstl:urob}); it converts it into a first-order logic formula.

\begin{lemma}\label{lemma:hyperstl:translateFormulaIntoLogic}
  Let $\SysSTL \subseteq (\NN \to X)$ be a discrete-time system and let $\Std \subseteq \SysSTL$ be a set of standard traces.
  Also, let $\Std_\pi$ be a quantifier-free HyperSTL subformula, such that $\SysSTL, \set{\pi \coloneqq w}, 0 \models \Std_\pi$ if and only if $\w \in \Std$.
  Then, $\SysSTL, \emptyset, 0 \models \psi_{\text{u-rob}}$ if and only if
  \begin{align*}
    &\forall \w \in \Std.\; \forall \w'\in\SysSTL.\; \exists \w''\in\Std.\ (\forall t \geq 0.\; \mathsf{eq}(\mapInp{\w}[t], \mapInp{\w''}[t]) \leq 0) \land {} \\
    &\quad\  \forall t \geq 0.\; (\forall t' \leq t.\; \dIn(\mapInp{\w''}[t'], \mapInp{\w'}[t'])-\inpbound \leq 0) \lImp \dOut(\mapOut{\w''}[t], \mapOut{\w'}[t])-\outpbound \leq 0. 
  \end{align*}
\end{lemma}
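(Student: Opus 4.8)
The plan is to expand the Boolean semantics of $\psi_{\textsf{u-rob}}$ from formula~(\ref{eq:hyperstl:urob}) one connective at a time via \Cref{def:hyperstl:boolSem}, and then to rewrite the two surviving temporal operators into their explicit timed forms using \Cref{lemma:testing:stlDerivedOps}. The latter applies because the time domain here is $\timeSet = \NN$, which is precisely the hypothesis that licenses the weak-until unfolding in item~\ref{lemma:testing:stlDerivedOps:wUntil}. The whole argument is an equivalence-preserving chain of rewritings, so both directions of the iff follow simultaneously, with one exception flagged below.

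First I would unfold the three leading quantifiers. By the clauses for $\forall$ and $\exists$ in \Cref{def:hyperstl:boolSem}, the statement $\SysSTL, \emptyset, 0 \models \psi_{\textsf{u-rob}}$ becomes $\forall \w_1 \in \SysSTL.\ \forall \w_2 \in \SysSTL.\ \exists \w_1' \in \SysSTL.\ \chi$, where $\chi$ is the Boolean reading of the quantifier-free body under the assignment $\Pi = \set{\pi_1 \coloneqq \w_1,\ \pi_2 \coloneqq \w_2,\ \pi_1' \coloneqq \w_1'}$. Invoking the hypothesis on $\Std_\pi$, namely that $\SysSTL, \set{\pi \coloneqq w}, 0 \models \Std_\pi$ iff $\w \in \Std$, the antecedent $\Std_{\pi_1}$ reads $\w_1 \in \Std$ and the conjunct $\Std_{\pi_1'}$ reads $\w_1' \in \Std$. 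The equality conjunct $\G(\mathsf{eq}(\mapInp{\pi_1},\mapInp{\pi_1'}) \leq 0)$ and the weak-until conjunct are kept symbolic for now.

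Second I would push the two membership guards into the quantifier ranges so as to match the target's $\forall \w \in \Std$ and $\exists \w'' \in \Std$. The guarded universal $\forall \w_1 \in \SysSTL.\ (\w_1 \in \Std \lImp \cdots)$ is equivalent to $\forall \w_1 \in \Std.\ (\cdots)$ by the standard restriction of a guarded universal. For the existential I would split on $\w_1$: if $\w_1 \notin \Std$ the implication is vacuously true and is witnessed by any trace (e.g.\ $\w_1$ itself, so the nonemptiness of $\SysSTL$ needed for $\exists \pi_1'$ is supplied by the ambient $\w_1$); if $\w_1 \in \Std$ the body forces $\w_1' \in \Std$, and using $\Std \subseteq \SysSTL$ the existential over $\SysSTL$ collapses to one over $\Std$. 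After the renaming $\w = \w_1$, $\w' = \w_2$, $\w'' = \w_1'$ this produces $\forall \w \in \Std.\ \forall \w' \in \SysSTL.\ \exists \w'' \in \Std.\ (\cdots)$.

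Finally I would translate the two remaining temporal subformulas. The conjunct $\G(\mathsf{eq}(\mapInp{\pi_1},\mapInp{\pi_1'}) \leq 0)$ becomes $\forall t \geq 0.\ \mathsf{eq}(\mapInp{\w}[t],\mapInp{\w''}[t]) \leq 0$ by item~\ref{lemma:testing:stlDerivedOps:globally} of \Cref{lemma:testing:stlDerivedOps}. For the weak until I would instantiate item~\ref{lemma:testing:stlDerivedOps:wUntil} with $\phi \coloneqq (\dOut(\mapOut{\w''},\mapOut{\w'}) - \outpbound \leq 0)$ and $\psi \coloneqq (\dIn(\mapInp{\w''},\mapInp{\w'}) - \inpbound > 0)$; since $\neg\psi$ is exactly $\dIn(\cdots) - \inpbound \leq 0$, this yields $\forall t \geq 0.\ (\forall t' \leq t.\ \dIn(\mapInp{\w''}[t'],\mapInp{\w'}[t']) - \inpbound \leq 0) \lImp \dOut(\mapOut{\w''}[t],\mapOut{\w'}[t]) - \outpbound \leq 0$, which is the displayed form. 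Combining the three rewritings recovers the lemma's first-order formula. The main obstacle I anticipate is the second step: the interaction of the outer existential quantifier with the $\Std_{\pi_1}$-guarded implication has to be argued in both directions, and in particular one must check that the vacuous case $\w_1 \notin \Std$ still admits a witness for $\exists \pi_1'$. Everything else is mechanical application of \Cref{def:hyperstl:boolSem} and \Cref{lemma:testing:stlDerivedOps}.
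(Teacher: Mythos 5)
Your proposal is correct and follows essentially the same route as the paper's proof: unfold the quantifiers and the body via \Cref{def:hyperstl:boolSem}, replace the $\Std_\pi$ subformulas by membership statements using the lemma's hypothesis, rewrite $\G$ and $\W$ through \Cref{lemma:testing:stlDerivedOps}, and finally turn the guarded quantifiers over $\SysSTL$ into restricted quantifiers over $\Std$. The only difference is one of exposition: where the paper compresses the last step into ``after carefully reordering premises,'' you spell it out explicitly (the guarded-universal restriction, the case split on $\w_1 \in \Std$, and the witness $\w_1' = \w_1$ in the vacuous case), which is exactly the argument the paper leaves implicit.
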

\begin{proof}
  Using \Cref{lemma:testing:stlDerivedOps}.\ref{lemma:testing:stlDerivedOps:globally}, \Cref{lemma:testing:stlDerivedOps}.\ref{lemma:testing:stlDerivedOps:wUntil}, and \Cref{def:hyperstl:boolSem}, we get that 
  {\small\begin{align*}
    &\SysSTL, \emptyset, 0 \models  \AAA{\pi}\AAA{\pi'}\EEE{\pi''}
    {\Std_{\pi}}\\
    & \qquad \  \limp \Big({\Std_{\pi''}} \land {\G(\mathsf{eq}(\mapInp{\pi}, \mapInp{\pi''}) \leq 0)}\land {} \\
    & \qquad\qquad\qquad\qquad\qquad\ \ 
    \big((\dOut(\mapOut{\pi''},\mapOut{\pi'})-\outpbound\leq 0)\W(\dIn(\mapInp{\pi''},\mapInp{\pi'})-\inpbound > 0) \big)\Big)
  \end{align*}}
   holds if and only if 
   {\small\begin{align*}
    &\forall \w\in\SysSTL.\;\forall \w'\in\SysSTL.\; \exists \w''\in\SysSTL.\ (\SysSTL, \Pi, 0 \models \Std_{\pi})\\
    & \quad  \limp \Big((\SysSTL, \Pi, 0 \models{\Std_{\pi''}}) \land {(\forall t \geq 0.\; (\SysSTL, \Pi, t \models\mathsf{eq}(\mapInp{\pi}, \mapInp{\pi''}) \leq 0))}\land {} \\
    &  \qquad\qquad 
    \big( \forall t \geq 0.\; (\forall t' \leq t.\; (\SysSTL, \Pi, t' \models \neg \dIn(\mapInp{\pi''},\mapInp{\pi'})-\inpbound > 0)) \\
    & \qquad\qquad\qquad\qquad\qquad\qquad\qquad\qquad\quad\ \  {} \lImp (\SysSTL, \Pi, t \models \dOut(\mapOut{\pi''},\mapOut{\pi'})-\outpbound\leq 0) \big)\Big)
  \end{align*}} holds for $\Pi = \set{\pi \coloneqq \w, \pi' \coloneqq \w', \pi'' \coloneqq \w''}$.
  Using the the constraint under which $\Std_\pi$ must be modelled, and by further applying \Cref{def:hyperstl:boolSem} and basic logical operations, we get that the above proposition is equivalent to
  {\small\begin{align*}
    &\forall \w\in\SysSTL.\;\forall \w'\in\SysSTL.\; \exists \w''\in\SysSTL.\ \w\in\Std\\
    & \  \limp \Big(\w''\in\Std \land {(\forall t \geq 0.\; \mathsf{eq}(\mapInp{\w}[t], \mapInp{\w''}[t]) \leq 0)}\land {} \\
    &  \quad \ \ 
    \big( \forall t \geq 0.\; (\forall t' \leq t.\; \dIn(\mapInp{\w''}[t'],\mapInp{\w'}[t'])-\inpbound \leq 0) \lImp \dOut(\mapOut{\w''},\mapOut{\w'})-\outpbound\leq 0 \big)\Big).
  \end{align*}}
  Finally, after carefully reordering premises, we get that the above holds if and only if 
  {\small\begin{align*}
    &\forall \w\in\Std.\;\forall \w'\in\SysSTL.\; \exists \w''\in\Std.\ {(\forall t \geq 0.\; \mathsf{eq}(\mapInp{\w}[t], \mapInp{\w''}[t]) \leq 0)}\land {} \\
    &  \qquad
    \forall t \geq 0.\; (\forall t' \leq t.\; \dIn(\mapInp{\w''}[t'],\mapInp{\w'}[t'])-\inpbound \leq 0) \lImp \dOut(\mapOut{\w''},\mapOut{\w'})-\outpbound\leq 0.
  \end{align*}}
\end{proof}

We omit the lemma analogue to \Cref{lemma:hyperstl:translateFormulaIntoLogic} that reformulates formula~(\ref{eq:hyperstl:lrob}) as a first-order characterisation.
The proof for \Cref{prop:hyperstl:robCorrectness} further transforms the first-order characterisations of formulas~(\ref{eq:hyperstl:lrob}) and~(\ref{eq:hyperstl:urob}) 
to show that they indeed match the definitions of 
\robustLowCleannessNDet and \robustUpCleannessNDet.

\propHyperstlRobCorrectness*
\begin{proof}
  We prove the correctness for \robustLowCleannessNDet and \robustUpCleannessNDet separately and begin with \robustUpCleannessNDet.
  Using \Cref{lemma:hyperstl:translateFormulaIntoLogic}, we get that 
  {\small\begin{align*}
    &\mixedIOSys, \emptyset, 0 \models  \AAA{\pi_1}\AAA{\pi_2}\EEE{\pi'_1}
    {\Std_{\pi_1}}\\
    & \qquad \  \limp \Big({\Std_{\pi'_1}} \land {\G(\mathsf{eq}(\mapInp{\pi_1}, \mapInp{\pi'_1}) \leq 0)}\land {} \\
    & \qquad\qquad\qquad\qquad\qquad\ \ 
    \big((\dOut(\mapOut{\pi'_1},\mapOut{\pi_2})-\outpbound\leq 0)\W(\dIn(\mapInp{\pi'_1},\mapInp{\pi_2})-\inpbound > 0) \big)\Big)
  \end{align*}}
  holds if and only if 
  {\small\begin{align*}
    &\forall \w_1\in\Std.\;\forall \w_2\in\mixedIOSys.\; \exists \w'_1\in\Std.\ {(\forall t \geq 0.\; \mathsf{eq}(\mapInp{\w_1}[t], \mapInp{\w'_1}[t]) \leq 0)}\land {} \\
    &  \ 
    \forall t \geq 0.\; (\forall t' \leq t.\; \dIn(\mapInp{\w'_1}[t'],\mapInp{\w_2}[t'])-\inpbound \leq 0) \lImp \dOut(\mapOut{\w'_1},\mapOut{\w_2})-\outpbound\leq 0.
  \end{align*}}
  After applying simple logical operations and using that  $\mathsf{eq}(\inp_1, \inp_2) = 0$ if and only if $\inp_1 = \inp_2$, we get that this is equivalent to
  {\small\begin{align*}
    &\forall \w_1\in\Std.\;\forall \w_2\in\mixedIOSys.\; \exists \w'_1\in\Std \text{ with } \mapInp{\w_1} = \mapInp{\w'_1}.\\
    &  \ 
    \big( \forall t \geq 0.\; (\forall t' \leq t.\; \dIn(\mapInp{\w'_1}[t'],\mapInp{\w_2}[t']) \leq \inpbound) \lImp  \dOut(\mapOut{\w'_1}[t],\mapOut{\w_2}[t]) \leq \outpbound \big),
  \end{align*}}
  which, since we assumed $\Std \subseteq \mixedIOSys$, is equivalent to the definition of  \robustUpCleannessNDet for mixed-IO systems.

  The proof for \robustLowCleannessNDet is analogue.
\end{proof}

We recapitulate the proposition similar to \Cref{prop:hyperstl:robCorrectness} for \fCleannessNDet.

\propHyperstlFCorrectness*

The proof for \Cref{prop:hyperstl:fCorrectness} is conceptually similar to the one for \Cref{prop:hyperstl:robCorrectness}.
The only difference is that instead of the reasoning about the $\W$ construct, the globally enforced relation between output distances and the result of $f$ must be proven equivalent in the HyperSTL formulas and \fCleannessNDet.
We omit the proofs here.

\paragraph{Correctness of STL characterisations}
Next, we show the correctness of the STL characterisations, i.e., we will prove the correctness of \Cref{thm:hyperstl:urobCorrectnessSTL,thm:hyperstl:ufCorrectnessSTL}.
We do so by first establishing a connection between the HyperSTL and the STL characterisations.

\begin{restatable}{lemma}{propHyperstlSTLrobCleannessCorrectness}
    \label{prop:hyperstl:STLrobCleannessCorrectness}
    Let $\SysSTL \subseteq (\NN \to X)$ be a discrete-time system 
    and let $\Std = \set{w_1, \dots, w_\stdcnt} \subseteq \SysSTL$ be a finite set of standard traces.
    Also, let $\Std_\pi$ be a quantifier-free HyperSTL subformula, such that $\SysSTL, \set{\pi \coloneqq w}, 0 \models \Std_\pi$ if and only if $\w \in \Std$.
    Then, $\SysSTL, \emptyset, 0 \models \psi_{\textsf{u-rob}}$ if and only if $(\SysSTL \circ \StdSet) \models \varphi_{\textsf{u-rob}}$ (with $\varphi_{\textsf{u-rob}}$ from \Cref{thm:hyperstl:urobCorrectnessSTL}). 
  \end{restatable}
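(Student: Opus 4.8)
The plan is to reduce both sides of the biconditional to one and the same first-order timing condition, exploiting that $\Std = \set{w_1, \dots, w_\stdcnt}$ is finite. The two lemmas already proved do almost all of the work: \Cref{lemma:hyperstl:translateFormulaIntoLogic} converts the HyperSTL satisfaction on the left into an explicit first-order statement, and \Cref{lemma:testing:stlDerivedOps} unfolds the $\G$ and $\W$ operators of the STL formula on the right into the matching first-order form (its second part is applicable precisely because $\timeSet = \NN$).

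First I would invoke \Cref{lemma:hyperstl:translateFormulaIntoLogic}, whose hypotheses (discrete-time system $\SysSTL \subseteq (\NN \to X)$, standard set $\Std \subseteq \SysSTL$, and a quantifier-free $\Std_\pi$ meeting the modelling condition) are exactly those assumed here. This rewrites $\SysSTL, \emptyset, 0 \models \psi_{\textsf{u-rob}}$ as $\forall \w \in \Std.\ \forall \w' \in \SysSTL.\ \exists \w'' \in \Std.\ \Theta(\w, \w'', \w')$, where I write $\Theta(\w, \w'', \w')$ for the conjunct body displayed in that lemma (the conjunction of the $\mathsf{eq}$-condition on $\mapInp{\w}, \mapInp{\w''}$ with the $\inpbound$/$\outpbound$ implication on $\w''$ and $\w'$). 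Since $\Std$ is finite, I would replace $\forall \w \in \Std$ by the finite conjunction $\bigwedge_{1 \le a \le \stdcnt}$ (instantiating $\w \coloneqq w_a$) and $\exists \w'' \in \Std$ by the finite disjunction $\bigvee_{1 \le b \le \stdcnt}$ (instantiating $\w'' \coloneqq w_b$). The delicate point is the quantifier bookkeeping: because $\forall \w' \in \SysSTL$ and $\bigwedge_a$ are both universal they commute, so $\forall \w'$ may be pulled outermost, whereas $\bigvee_b$ must stay inside the scope of $\forall \w'$, since in the original statement the existential witness $\w''$ is allowed to depend on $\w'$. This yields
\[
  \forall \w' \in \SysSTL.\ \bigwedge_{1 \le a \le \stdcnt}\ \bigvee_{1 \le b \le \stdcnt}\ \Theta(w_a, w_b, \w').
\]

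Next I would unfold the right-hand side. By the definition of satisfaction for the self-composition, $(\SysSTL \circ \StdSet) \models \varphi_{\textsf{u-rob}}$ means that for every composed trace $\w_+ = (\w, w_1, \dots, w_\stdcnt)$ with $\w \in \SysSTL$ we have $\w_+, 0 \models \varphi_{\textsf{u-rob}}$, where the bare $\w$ in $\varphi_{\textsf{u-rob}}$ denotes the analysed component and $\w_a, \w_b$ the standard components. Applying \Cref{lemma:testing:stlDerivedOps}.\ref{lemma:testing:stlDerivedOps:globally} to the $\G(\mathsf{eq}(\cdots) \le 0)$ subformula and \Cref{lemma:testing:stlDerivedOps}.\ref{lemma:testing:stlDerivedOps:wUntil} to the weak-until subformula, and simplifying $\neg(x > 0)$ to $x \le 0$, the body of $\varphi_{\textsf{u-rob}}$ evaluated at $\w_+$ at time $0$ becomes exactly $\Theta(w_a, w_b, \w)$. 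Hence $\w_+, 0 \models \varphi_{\textsf{u-rob}}$ is $\bigwedge_a \bigvee_b \Theta(w_a, w_b, \w)$, and quantifying over all $\w \in \SysSTL$ gives precisely the reordered finite condition above, with the analysed first component $\w$ of the self-composition playing the role of the universally quantified non-standard trace $\w'$. Since both sides reduce to the same condition, the equivalence follows. I expect the only genuine obstacle to be this matching step: verifying that the finitisation respects the quantifier nesting ($\bigvee_b$ inside $\forall \w'$, $\bigwedge_a$ outside) and that the projection roles line up correctly, i.e. that $\w'$ on the HyperSTL side corresponds to the analysed trace $\w$ on the STL side.
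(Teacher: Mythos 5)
Your proposal is correct and follows essentially the same route as the paper's own proof: both invoke \Cref{lemma:hyperstl:translateFormulaIntoLogic} for the HyperSTL side, exploit finiteness of $\Std$ to replace the quantifiers over $\Std$ by $\bigwedge_a$ and $\bigvee_b$, and use \Cref{lemma:testing:stlDerivedOps} together with the Boolean STL semantics on the self-composed traces to identify the analysed component $\w$ with the universally quantified non-standard trace. The only difference is presentational — you reduce both sides to a common first-order condition, whereas the paper transforms the HyperSTL side into the STL side by a single chain of equivalences — and your explicit treatment of the quantifier bookkeeping matches the paper's implicit handling of it.
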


\begin{proof}
  Using \Cref{lemma:hyperstl:translateFormulaIntoLogic} we get that 
  {\small\begin{align*}
    &\SysSTL, \emptyset, 0 \models  \AAA{\pi'}\AAA{\pi''}\EEE{\pi'''}
    {\Std_{\pi'}}\\
    & \qquad \  \limp \Big({\Std_{\pi'''}} \land {\G(\mathsf{eq}(\mapInp{\pi'}, \mapInp{\pi'''}) \leq 0)}\land {} \\
    & \qquad\qquad\qquad\qquad\qquad\ \ 
    \big((\dOut(\mapOut{\pi'''},\mapOut{\pi''})-\outpbound\leq 0)\W(\dIn(\mapInp{\pi'''},\mapInp{\pi''})-\inpbound > 0) \big)\Big)
  \end{align*}}
  holds if and only if 
  {\small\begin{align*}
    &\forall \w'\in\Std.\;\forall \w''\in\SysSTL.\; \exists \w'''\in\Std.\ {(\forall t \geq 0.\; \mathsf{eq}(\mapInp{\w'}[t], \mapInp{\w'''}[t]) \leq 0)}\land {} \\
    &  \ 
    \forall t \geq 0.\; (\forall t' \leq t.\; \dIn(\mapInp{\w'''}[t'],\mapInp{\w''}[t'])-\inpbound \leq 0) \lImp \dOut(\mapOut{\w'''},\mapOut{\w''})-\outpbound\leq 0.
  \end{align*}}
  Since $\Std =  \set{w_1, \dots, w_\stdcnt}$, we can replace the universal and existential quantifiers over $\Std$ by a conjunction, respectively disjunction, over the standard traces~\cite{rosen2012discrete}. 
  We instantiate the universal quantifier for $\w''$ with $\w$ and get that
  {\small\begin{align*}
    &\bigwedge_{1 \leq a \leq \stdcnt} \ \bigvee_{1 \leq b \leq \stdcnt} {(\forall t \geq 0.\; \mathsf{eq}(\mapInp{\w_a}[t], \mapInp{\w_b}[t]) \leq 0)}\land {} \\
    &  \ 
    \forall t \geq 0.\; (\forall t' \leq t.\; \dIn(\mapInp{\w_b}[t'],\mapInp{\w}[t'])-\inpbound \leq 0) \lImp \dOut(\mapOut{\w_b},\mapOut{\w})-\outpbound\leq 0.
  \end{align*}}
  From the Boolean semantics of STL and by replacing all traces $\w$, respectively $\w_k$, by the corresponding $\w_+$-projections, we get the equivalent proposition
  {\small\begin{align*}
    &\bigwedge_{1 \leq a \leq \stdcnt} \ \bigvee_{1 \leq b \leq \stdcnt} {(\forall t \geq 0.\; (\w_+, t \models \mathsf{eq}(\mapInp{\w_a}, \mapInp{\w_b}) \leq 0))}\land {} \\
    &  \ 
    \forall t \geq 0.\; (\forall t' \leq t.\; (\w_+, t' \models \neg \dIn(\mapInp{\w_b},\mapInp{\w})-\inpbound > 0)) \lImp (\w_+, t \models \dOut(\mapOut{\w_b},\mapOut{\w})-\outpbound\leq 0).
  \end{align*}}
  With the Boolean semantics of STL and Lemma \ref{lemma:testing:stlDerivedOps}.\ref{lemma:testing:stlDerivedOps:globally} and \ref{lemma:testing:stlDerivedOps}.\ref{lemma:testing:stlDerivedOps:wUntil} we get the equivalent statement that
  {\small\begin{align*}
    &\w_+, 0 \models \bigwedge_{1 \leq a \leq \stdcnt} \ \bigvee_{1 \leq b \leq \stdcnt} {(\G (\mathsf{eq}(\mapInp{\w_a}, \mapInp{\w_b}) \leq 0))}\land {} \\
    &  \ \big( (\dOut(\mapOut{\w_b},\mapOut{\w})-\outpbound\leq 0) \W (\dIn(\mapInp{\w_b},\mapInp{\w})-\inpbound > 0) \big).
  \end{align*}}
\end{proof}

We are now able to prove \Cref{thm:hyperstl:urobCorrectnessSTL}.

\thmHyperstlUrobCorrectnessSTL*
\begin{proof}
  The theorem follows from \Cref{prop:hyperstl:robCorrectness} and \Cref{prop:hyperstl:STLrobCleannessCorrectness}.
\end{proof}

To prove \Cref{thm:hyperstl:ufCorrectnessSTL}, we establish the following lemma, which is analogue to \Cref{prop:hyperstl:STLrobCleannessCorrectness}, up to \fUpCleannessNDet replacing \robustUpCleannessNDet.

  \begin{lemma}\label{prop:hyperstl:STLfCleannessCorrectness}
    Let $\SysSTL \subseteq (\timeSet \to X)$ be a system 
    and let $\Std = \set{w_1, \dots, w_\stdcnt} \subseteq \SysSTL$ be a finite set of standard traces.
    Also, let $\Std_\pi$ be a quantifier-free HyperSTL subformula, such that $\SysSTL, \set{\pi \coloneqq w}, 0 \models \Std_\pi$ if and only if $\w \in \Std$.
    Then, $\SysSTL, \emptyset, 0 \models \psi_{\textsf{u-fun}}$ if and only if $(\SysSTL \circ \StdSet) \models \varphi_{\textsf{u-fun}}$ (with $\varphi_{\textsf{u-fun}}$ from \Cref{thm:hyperstl:ufCorrectnessSTL}). 
\end{lemma}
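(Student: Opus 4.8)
The plan is to mirror the proof of \Cref{prop:hyperstl:STLrobCleannessCorrectness} step for step, substituting \fUpCleannessNDet for \robustUpCleannessNDet throughout. The single structural difference is that the weak-until subformula of $\psi_{\textsf{u-rob}}$ is replaced in $\psi_{\textsf{u-fun}}$ by a globally enforced threshold $\G(\dOut(\mapOut{\pi'_1},\mapOut{\pi_2}) - f(\dIn(\mapInp{\pi'_1},\mapInp{\pi_2})) \leq 0)$. Consequently only \Cref{lemma:testing:stlDerivedOps}.\ref{lemma:testing:stlDerivedOps:globally} is needed, while the weak-until clause \Cref{lemma:testing:stlDerivedOps}.\ref{lemma:testing:stlDerivedOps:wUntil} never enters. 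This makes the argument slightly shorter than its robust-cleanness counterpart.

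First I would establish the first-order translation of $\psi_{\textsf{u-fun}}$ that plays the role of \Cref{lemma:hyperstl:translateFormulaIntoLogic}. Unfolding the three quantifiers via \Cref{def:hyperstl:boolSem}, using the hypothesis that $\SysSTL, \set{\pi \coloneqq w}, 0 \models \Std_\pi$ if and only if $\w \in \Std$ to push the $\Std_{\pi_1}$ premise and the $\Std_{\pi'_1}$ conjunct into range restrictions on the quantifiers, and applying \Cref{lemma:testing:stlDerivedOps}.\ref{lemma:testing:stlDerivedOps:globally} to each $\G$, I would obtain that $\SysSTL, \emptyset, 0 \models \psi_{\textsf{u-fun}}$ holds if and only if
\[
\forall \w_1 \in \Std.\ \forall \w_2 \in \SysSTL.\ \exists \w'_1 \in \Std.\ \big(\forall t \geq 0.\ \mathsf{eq}(\mapInp{\w_1}[t], \mapInp{\w'_1}[t]) \leq 0\big) \land \big(\forall t \geq 0.\ \dOut(\mapOut{\w'_1}[t], \mapOut{\w_2}[t]) - f(\dIn(\mapInp{\w'_1}[t], \mapInp{\w_2}[t])) \leq 0\big).
\]
Here the existential $\exists \w'_1 \in \Std$ absorbs the $\Std_{\pi'_1}$ conjunct, exactly as the robust-cleanness proof absorbs the corresponding standard-membership conjunct.

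Next, since $\Std = \set{w_1, \dots, w_\stdcnt}$ is finite, I would replace the outer universal over $\Std$ by a conjunction $\bigwedge_{1 \le a \le \stdcnt}$ (instantiating $\w_1 \coloneqq \w_a$) and the existential over $\Std$ by a disjunction $\bigvee_{1 \le b \le \stdcnt}$ (instantiating $\w'_1 \coloneqq \w_b$), following the same discrete-quantifier reformulation used in \Cref{prop:hyperstl:STLrobCleannessCorrectness}. The remaining free universal over $\w_2 \in \SysSTL$ is then instantiated by the trace $\w$ of the self-composition: ranging over all $\w_+ \in \SysSTL \circ \Std$ is exactly ranging over all $\w \in \SysSTL$ with the standard components held fixed as the copies $\w_1, \dots, \w_\stdcnt$. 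Finally, re-encoding the time quantifiers as $\G$ operators via \Cref{lemma:testing:stlDerivedOps}.\ref{lemma:testing:stlDerivedOps:globally} and rewriting each trace as the corresponding $\w_+$-projection yields precisely $\varphi_{\textsf{u-fun}}$ evaluated over $\SysSTL \circ \Std$.

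The proof carries no genuine obstacle beyond careful bookkeeping; if anything is delicate it is the instantiation step, where one must verify that the universal over $\w_2 \in \SysSTL$ corresponds bijectively to the universal over composed traces $\w_+ \in \SysSTL \circ \Std$ and that the fixed standard projections $\w_a, \w_b$ of $\w_+$ coincide with the enumerated elements of $\Std$. With that correspondence established, \Cref{thm:hyperstl:ufCorrectnessSTL} then follows from this lemma together with \Cref{prop:hyperstl:fCorrectness}, exactly as \Cref{thm:hyperstl:urobCorrectnessSTL} follows from \Cref{prop:hyperstl:STLrobCleannessCorrectness} and \Cref{prop:hyperstl:robCorrectness}.
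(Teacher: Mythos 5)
Your proposal is correct and takes essentially the same route as the paper: the paper in fact omits this proof, remarking only that it is identical to that of \Cref{prop:hyperstl:STLrobCleannessCorrectness} up to replacing the weak-until subformula by the globally-bounded threshold $\G(\dOut(\mapOut{\w_b},\mapOut{\w}) - f(\dIn(\mapInp{\w_b},\mapInp{\w})) \leq 0)$, which is precisely the substitution you spell out. Your additional observation that only \Cref{lemma:testing:stlDerivedOps}.\ref{lemma:testing:stlDerivedOps:globally} is needed (so the weak-until clause, and hence the restriction to discrete time, never enters) is accurate and consistent with the lemma being stated for a general time domain $\timeSet$.
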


The proof for \Cref{prop:hyperstl:STLfCleannessCorrectness} is, up to the different reasoning for $\G(\dOut(\mapOut{\w_b},\mapOut{\w}) - f(\dIn(\mapInp{\w_b},\mapInp{\w})) \leq 0)$ instead of $(d_\Outputs(\mapOut{\w_b}, \mapOut{\w})-\outpbound \leq 0)\W(d_\Inputs(\mapInp{\w_b}, \mapInp{\w}) - \inpbound > 0)$, identical to that of \Cref{prop:hyperstl:STLrobCleannessCorrectness}. 
We omit it here.

\thmHyperstlUfCorrectnessSTL*
\begin{proof}
  The theorem follows from \Cref{prop:hyperstl:fCorrectness} and \Cref{prop:hyperstl:STLfCleannessCorrectness}.
\end{proof}

\section{Fairness Pipeline}
\label{app:pipeline_full}

As explained in \Cref{sec:background} in the main paper,
it is important to recognise that there are many sources of unfairness \cite{barocas2016big}. 
\Cref{app:pipeline_full} shows a more detailed version of \Cref{fig:pipeline} in the main paper.
Not every technical measure is able to detect every kind of unfairness and eliminating one source of unfairness might not be sufficient to eliminate all unfairness. 
\begin{figure}[t]
    \centering
    \resizebox{0.95\linewidth}{!}{
      \tikzstyle{rectang} = [rectangle, rounded corners, minimum width=2cm, minimum height=1cm,text centered, draw=black, fill=gray!15, font=\sffamily]
\tikzstyle{round} = [ellipse, minimum width=2cm, minimum height=1cm,text centered, draw=black, fill=gray!15, font=\sffamily]
\tikzstyle{myplain} = [draw=none, fill=none, minimum width=2cm, text width=2.5cm, text depth=1.2cm, text centered]
\tikzstyle{arrow} = [thick,->,>=stealth]

\begin{tikzpicture}[node distance=3cm] \small
\node (world)       [rectang]                   {world};
\node (input)       [round, right of=world]     {input data};
\node (system)      [rectang, right of=input]   {system};
\node (training)    [round, dashed, above=0.4cm of system]    {training data};
\node (output)      [round, right of=system]    {output};
\node (decision)    [rectang, right of=output]  {decision};

\draw [arrow] (world) -- (input);
\draw [arrow] (input) -- (system);
\draw [arrow] (system) -- (output);
\draw [arrow] (output) -- (decision);

\draw [arrow, dashed] (training) -- (system);
\draw [arrow, dashed] (world) -- (training.west);

\draw [arrow] (decision.south) -- ++(0,-0.3) -|  (world.south);

\node (worldEx)     [myplain, below=1.2cm of world.center] {unfairness in the world};
\node (inputEx)     [myplain, text width=2.75cm, below=1.2cm of input.center] {unfairness introduced through the input data or in its collection, representation or selection};
\node (systemEx)    [myplain, text width=2cm, text depth=0.8cm, below=1.2cm of system.center] {unfairness introduced by the system itself};
\node (outputEx)    [myplain, text width=2.75cm, below=1.2cm of output.center] {unfairness introduced by the human interpretation of the output};
\node (decisionEx)  [myplain, below=1.2cm of decision.center] {unfairness introduced by the human decision};

%\node (rfmBox) [draw, thick, blue!40!black!50, densely dotted, rounded corners, fit=(training) (system) (systemEx), inner sep = 2mm] {};
%\node (rfmLabel) [align=right, text=blue!40!black!50, font=\sffamily\small, rotate=90, text width=2cm, anchor=north, below=1cm of rfmBox.north east] {our fairness\\ monitoring};

\end{tikzpicture}
    }

    \vspace{3em}

    \caption{Sketch of different origins of unfairness in a decision process supported by a system; dashed elements are inapplicable to systems that are not learning-based. }%

    \label{fig:pipelinefull}
\end{figure}
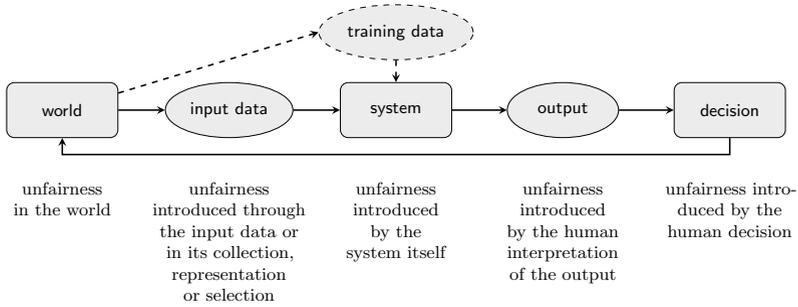

\begin{description}
\item[World] There can be unfairness in the world that leads to individuals already having worse (or better) starting conditions than others and subsequently have a lower (or higher) chance that the final decision is made in their favour. For example, an individual could be systematically excluded from certain societal resources (e.g., girls who are excluded from education in Afghanistan under the Taliban) which puts these individuals at a disadvantage.    
\item[Input data] The input data or its collection, representation or selection could be problematic and lead to unfairness \cite{lahoti2019}. If, for example, crucial information is left out in the input data or data is aggregated in unsuitable ways, individuals could face an outcome that is unwarranted by the factual situation.
\item[System (and training data)]  The system itself can introduce new unfairness. 
Among other things, this can come about by erroneous algorithms or (in the case of a trained model) by problematic training data, e.g., if a certain group of individuals  is not properly represented \cite{tay2021conceptual}. 
\item[Output] The human decider can fail to interpret the output properly \cite{hoff2015trust, langer2021trust}, which can lead to further unfairness. They could, for example, lack knowledge of the limitations of the system or fail to take into account that the system output is subject to some systematic uncertainty.
\item[Decision]  The human decider can make an unfair decision even in the face of a fair system output and an adequate interpretation thereof, for example if they have conscious or subconscious bias against certain groups \cite{bertrand2004emily}. 
\end{description}
Unfairness in any part of the chain can arguably perpetuate or reinforce unfairness in the world.

In the main paper, we propose a runtime monitoring technique that aims to uncover individual unfairness introduced by the system. 
By focusing on the system and its input-output relation only, we can say that the system is unfair without having to say anything about the degree of fairness with which an individual is treated in other respects in the decision process. 
It especially allows us to say that a system output is unfair, even though the outcome of the overall decision process is not. It may, for example, be that 
the system unfairness is \enquote*{cancelled out} by something else that is hidden from the system: an applicant with a stellar-looking CV might be treated unfairly by the system because of their age, but not hiring them is not unfair because they are known to have forged their diploma. Cases like this, however, do not make the unfairness introduced by the system any less problematic.

\section{Legal Appendix}
\label{app:legal}

\subsection{EU Anti-Discrimination Law}\label{app:legal:discrimination} Antidiscrimination is a principle deeply rooted in EU law. It is enshrined in Art.\ 21 of the Charter of Fundamental Rights (CFR) \cite{CFR}, which prohibits \enquote{[a]ny discrimination based on any ground such as sex, race, colour, ethnic or social origin, genetic features, language, religion or belief, political or any other opinion, membership of a national minority, property, birth, disability, age or sexual orientation} as well as  \enquote{any discrimination based on grounds of nationality}.  According to Art.\ 51 CFR, the addressees of this fundamental right are the EU and its institutions, bodies, offices and agencies as well as the Member States, insofar as they implement Union law. They are directly bound by Art.\ 21 above all in their legislative activities, but also in their executive and judicial measures. In contrast, private individuals are not directly bound by Art.\ 21 CFR, but they may be bound by regulations implementing this provision. However, according to recent European Court of Justice (ECJ) case law, Art.\ 21 CFR is directly applicable as a result of Directives, such as Directive 2000/78/EC \cite{EUdir2000/78/EC} establishing a general framework for equal treatment in employment and occupation \cite[§ 76]{ecj-c-414-16}.
Apart from this, while Art.\ 21 CFR stipulates a general prohibition of any unjustified discrimination, the more specific secondary legislation applicable to private actors only prohibits discrimination only in certain sensitive areas and only with regard to certain protected attributes. Correspondingly, private actors may not discriminate against certain persons---to name just a few---in employment relationships \cite{EUdir2000/78/EC},
in cases of abuse of a dominant market position \cite[Art.\ 102]{AEUV}
or also in so-called mass transactions, i.e., contracts that are typically concluded without regard to the person on comparable terms in a large number of cases \cite{EUdir2004/113/EC}.
In contrast, discriminating in other legal relationships or on other grounds such as local origin (as opposed to ethnic origin), or a person's financial situation is not generally prohibited. The rationale behind these  \enquote{discriminatory standards of anti-discrimination law} \cite{thuesingZfA, hartmann, schwab} is the principle of private (or personal) autonomy, and more specifically freedom of contract as one of its manifestations, which govern legal transactions between private individuals \cite{looschelders}. According to this principle, individuals are free to shape their legal relationships according to their own preferences and ideas, however irrational or socially unacceptable they may be. In essence, this also includes a right to discriminate against others. This freedom to autonomously form legal relations is only constrained where this is stipulated by anti-discrimination legislation for policy reasons. 

When using an AI-system to recruit candidates, developers and deployers have to make sure that the system with its parameters comply with these legal requirements set by anti-discrimination law. This means in particular that the selection of the properties which a classifier should exhibit requires a positive normative choice and should not simply replicate biases present in the status quo \cite{wachter2020bias}. However, the risks associated with deploying such systems in an HR context (such as  a malfunctioning remaining undetected due to the system's opacity, a huge practical relevance of biased outputs due to the systems' scalability or the human operator's tendency of over-relying on the output produced by the AI system ( \enquote{automation bias})), raise the question whether it can still be deemed normatively acceptable that the EU legal framework turns a blind eye on certain forms of discrimination. Furthermore, the principle of private autonomy as rationale for justifying the freedom to discriminate against others is only valid with regard to human's wilful actions, but not to algorithm-generated output. We are not advocating for abolishing the existing balance between private autonomy (freedom to contract) and prohibition to discriminate. So humans should still be permitted to differentiate on grounds that are not caught by anti-discrimination law. However, there is no reason to grant the  \enquote{right to discriminate} also to a non-human system that has merely "learned" this discrimination. In this respect, it seems justified to apply different standards for algorithms with regard to the prohibition of discrimination than for human decisions. With regard to an AI system's decision metrics, therefore, it should be considered to expand the secondary legal framework to include a broad prohibition of discrimination. This would not mean that all discrimination would be unlawful, since objectively justified unequal treatment is, after all, permissible, but it would shift the focus to the question of objective justification \cite{ep-2020-2012}. Another legal challenge that will become even more pressing with the advent of technical decision systems is how to detect and prove prohibited discrimination. This is because the prohibition of discrimination resulting from various legal regulations in certain, especially sensitive, areas, such as human resources, presupposes that a difference in treatment is recognised in the first place. The recognition of discrimination is therefore not only in the interest of the decision-maker, who is threatened with sanctions in the event of a violation of the prohibition of discrimination. Rather, it is also essential for the discriminated party to prove the discrimination. For as far as a legal claim follows from a prohibited discrimination, the principle applies that the person who invokes the legal claim must prove the facts giving rise to the claim. Especially when complex algorithms are used, however, it is likely to be extremely difficult to prove corresponding circumstantial evidence. According to the case law of the ECJ, however, the burden of proof is reversed if the party who has prima facie been discriminated against would otherwise have no effective means of enforcing the prohibition of discrimination \cite{ecj-c-127-92, ecj-c-400-93}. Monitoring, as described here, would therefore be a suitable means of providing the  \enquote{prima facie} evidence necessary for shifting the burden of proof.
 
 \subsection{Discrimination and the GDPR}\label{app:legal:discriminationgdpr}

There has recently been discussion if and to which extent data protection law contains obligations for non-discriminating data processing or whether the scope of protection of data protection law is thereby overstretched. There is no explicit prohibition of discrimination in the General Data Protection Regulation (GDPR). According to Article 1 (2), however, the GDPR is intended to protect the fundamental rights and freedoms of natural persons. This is aimed in particular at their right to protection of personal data (Article 8 CFR), but not exclusively so. Thus, the broad and non-restrictive reference to fundamental rights also encompasses all other fundamental rights, including the right to non-discrimination (Article 21 CFR) \cite{KOM-GDPR-proposal}. This is reflected, for example, in the higher level of protection for data with an increased potential for discrimination, the so-called special categories of personal data under Article 9 GDPR. The GDPR can also be interpreted as granting a “preventive protection against discrimination”, namely when discrimination is made impossible from the outset, in that the data-processing agencies cannot gain knowledge of characteristics susceptible to discrimination in the first place, i.e., when any respective data processing is forbidden \cite{KuhlingBuchner}. Any processing of personal data must furthermore comply with the processing principles set out in Article 5 GDPR, including the fairness principle (‘personal data shall be processed fairly’) set out in Article 5(1)(a). While formerly transparency obligations were read into this principle while the Data Protection Directive was into effect, the regulatory content of the fairness principle is highly disputed since it was split off into a separate processing principle. But due to the fact that discriminatory data processing can hardly be described as fair, a prohibition of discrimination can be linked to the fairness principle \cite{hacker2018,MalgieriFacct2020}. However, the concrete scope of the fairness principle clearly goes beyond the understanding of fairness in the context of technical systems on which this paper is based. 

Specifically for the HR context, there are discrimination-sensitive regulations in the GDPR. Article 9 GDPR makes the processing of special categories of data, i.e., sensitive data and data susceptible to discrimination, subject to particularly strict authorisation criteria, which should in practice rarely be present in recruitment situations. On the one hand, processing for recruitment purposes, i.e., prior to the establishment of an employment relationship, is rarely necessary in order to exercise certain rights and obligations under employment law (Art.\ 9(2)(b) GDPR), and on the other hand, explicit consent (Art.\ 9(2)(a) GDPR) will often lack the necessary voluntariness due to the specifics of job application situations and the power imbalances inherent in them. The prohibition of processing sensitive data may be problematic in cases where the link to sensitive data is strictly necessary to detect discriminatory effects. For high-risk systems, Art.\ 10 V AI Regulation Proposal therefore provides for a new permissive clause: 'To the extent that it is strictly necessary for the purposes of ensuring bias monitoring, detection and correction, ... the providers of such systems may process special categories of personal data' while ensuring appropriate safeguards for the fundamental rights of natural persons.

With regard to the processing of non-sensitive personal data, however, the opening clause in Art 88(1) GDPR allows Member States to adopt more specific rules for processing for recruitment purposes, whereby, according to paragraph 2, suitable and specific measures must be ensured to safeguard the fundamental rights of the data subject. These requirements can be met by state-of-the-art monitoring tools. The national regulations cannot be discussed in depth here. For Germany, for example, Section 26 of the Federal Data Protection Act (BDSG) stipulates that personal data may only be processed for recruitment purposes if this is necessary, i.e., if the data processing is required for the decision on recruitment. In any case, data processing may not be necessary if the characteristics depicted in the data may not be taken into account in the hiring decision, for example due to anti-discrimination law \cite{Riesenhuber}.

\label{sec:background:related:aiact}

% \section{Section title of first appendix}\label{secA1}

% An appendix contains supplementary information that is not an essential part of the text itself but which may be helpful in providing a more comprehensive understanding of the research problem or it is information that is too cumbersome to be included in the body of the paper.

% %%=============================================%%
% %% For submissions to Nature Portfolio Journals %%
% %% please use the heading ``Extended Data''.   %%
% %%=============================================%%

% %%=============================================================%%
% %% Sample for another appendix section			       %%
% %%=============================================================%%

% %% \section{Example of another appendix section}\label{secA2}%
% %% Appendices may be used for helpful, supporting or essential material that would otherwise 
% %% clutter, break up or be distracting to the text. Appendices can consist of sections, figures, 
% %% tables and equations etc.

\end{appendices}

%%===========================================================================================%%
%% If you are submitting to one of the Nature Portfolio journals, using the eJP submission   %%
%% system, please include the references within the manuscript file itself. You may do this  %%
%% by copying the reference list from your .bbl file, paste it into the main manuscript .tex %%
%% file, and delete the associated \verb+\bibliography+ commands.                            %%
%%===========================================================================================%%

\bibliography{bibliography}% common bib file
%% if required, the content of .bbl file can be included here once bbl is generated
%%\input sn-article.bbl

%% Default %%
%%\input sn-sample-bib.tex%

\end{document}